\newtheorem{definition}{\noindent{\it Definition}}[section]
\newtheorem{theorem}{\noindent{\it Theorem}}[section]
\newtheorem{proposition}{\noindent{\it Proposition}}[section]
\newtheorem{lemma}{\noindent{\it Lemma}}[section]
\newtheorem{remark}{\noindent{\it Remark}}[section]
\newtheorem{corollary}{\noindent{\it Corollary}}[section]
\newtheorem{example}{\noindent{\it Example}}[section]
\newenvironment{proof}{\noindent{\it Proof:}}{$\hfill$ $\Box$\\ }
\begin{document}

\title{On cyclotomic cosets and code constructions}

\author{Giuliano G. La Guardia and Marcelo M. S. Alves
\thanks{Giuliano Gadioli La Guardia is with Department of Mathematics and Statistics,
State University of Ponta Grossa (UEPG), 84030-900, Ponta Grossa -
PR, Brazil. Marcelo M. S. Alves is with Department of Mathematics,
Federal University of Parana (UFPR), Av. Cel. Francisco H. dos
Santos, 210, Jardim das Americas, 81531-970, Curitiba-PR, Brazil.
Corresponding author: Giuliano G. La Guardia ({\tt \small
gguardia@uepg.br}). }}

\maketitle

\begin{abstract}
New properties of $q$-ary cyclotomic cosets modulo $n = q^{m} - 1$,
where $q \geq 3$ is a prime power, are investigated in this paper.
Based on these properties, the dimension as well as bounds for the
designed distance of some families of classical cyclic codes can be
computed. As an application, new families of nonbinary
Calderbank-Shor-Steane (CSS) quantum codes as well as new families
of convolutional codes are constructed in this work. These new CSS
codes have parameters better than the ones available in the
literature. The convolutional codes constructed here have free
distance greater than the ones available in the literature.
\end{abstract}

\emph{keywords}: cyclotomic cosets; BCH codes; CSS construction

\section{Introduction}

Properties of cyclotomic cosets are extensively investigated in the
literature in order to obtain the dimension as well as lower bounds
for the minimum distance of cyclic codes
\cite{Macwilliams:1977,Lidl:1997,Mandelbaum:1980,Sharma:2004,Yue:2000,YueHu:1996}.
Such properties were useful to derive efficient quantum codes
\cite{Calderbank:1998,Steane:1999,Ashikmin:2001,Xi:2004,Ketkar:2006,Salah:2006,Salah:2007,LaGuardia:2009,LaGuardia:2010,LaGuardia:2011,LaGuardia:2014}.
In \cite{Mandelbaum:1980}, the authors explored properties of binary
cyclotomic cosets to compute the ones containing two consecutive
integers. In \cite{Sharma:2004,Yue:2000,YueHu:1996}, properties of
$q$-ary cyclotomic cosets ($q$-cosets for short) modulo $q^{m} - 1$
were investigated. In \cite{Salah:2006,Salah:2007}, the authors
established properties on $q$-cosets modulo $n$, where $\gcd (n, q)
= 1$, to compute the exact dimension of BCH codes of small designed
distance, providing new families of quantum codes. Additionally,
they employed such properties to show necessary and sufficient
conditions for dual containing (Euclidean and Hermitian) BCH codes.
Recently, in \cite{LaGuardia:2009,LaGuardia:2011,LaGuardia:2014},
the author has investigated properties of $q$-cosets as well as
properties of $q^{2}$-cosets in order to construct several new
families of good quantum BCH codes.

Motivated by all these works, we show new properties of $q$-cosets
modulo $n = q^{m} - 1$, where $q\geq 3$ is a prime power. Since the
nonbinary case has received less attention in the literature, in
this paper we deal with nonbinary alphabets. As was said previously,
these properties allow us to compute the dimension and bounds for
the designed distance of some families of cyclic codes. Since the
true dimension and minimum distance of BCH codes are not known in
general, this paper contributes to this research. As an application
of these results, we construct families of new
Calderbank-Shor-Steane (CSS) quantum codes (i.e., CSS codes with new
parameters; codes with parameters not known in the literature) as
well as new families of convolutional codes. These new CSS codes
have parameters given by
\begin{itemize}
\item  $[[q^{2}-1, q^{2} - 4c + 5, d \geq c ]]_{q}$,\\
where $2 \leq c \leq q$ and $q\geq 3$ is a prime power;

\item $[[n, n- 2m(c-2)-m/2-1, d \geq c]]_{q},$\\ where $n = q^{m} -1$, $q \geq 3$
is a prime power, $2\leq c\leq q$ and $m\geq 2$ is even;

\item $[[n, n - m(2c - 3)-1, d \geq c]]_{q}$,\\ where $n = q^{m} -1$, $q \geq 3$ is
a prime power, $m\geq 2$ and $2 \leq c\leq q$.
\end{itemize}

The new convolutional codes constructed here have parameters
\begin{itemize}
\item $(n, n-2q+1, 2q - 3; 1,
d_{free} \geq 2q +1)_{q}$,\\
where $q\geq 4$ is a prime power and $n= q^{2} -1$;

\item $(n, n-2q, 2q - 4;
1, d_{free} \geq 2q +1)_{q}$,\\
where $q\geq 4$ is a prime power and $n= q^{2} -1$;

\item $(n, n-2[q+i],
2[q-2-i]; 1, d_{free} \geq 2q+1)_{q}$, where $1\leq i\leq q - 3$,
$q\geq 4$ is a prime power and $n= q^{2} -1$;

\item $(n, n-2q+1, 1; 1, d_{free} \geq q+2)_{q}$,\\
$q\geq 4$ is a prime power and $n= q^{2} -1$;

\item $(n, n-2q+1, 2i+1; 1, d_{free} \geq q+i+3)_{q}$,\\
$1\leq i\leq q - 3$, $q\geq 4$ is a prime power and $n= q^{2} -1$.
\end{itemize}

The paper is organized as follows. In Section~\ref{sec2}, we review
some basic concepts on $q$-cosets and cyclic codes. In
Section~\ref{sec3}, we present new results and properties of
$q$-cosets. In Section~\ref{sec4}, by applying some properties of
$q$-cosets developed in the previous section, we compute the
dimension and lower bounds for the minimum distance of some families
of classical cyclic codes. Further, we utilize these cyclic codes to
construct new good quantum codes by applying the CSS construction.
In Section~\ref{sec5}, we utilize the classical cyclic codes
constructed in Section~\ref{sec4} to derive new families of
convolutional codes with greater free distance. Section~\ref{sec6}
is devoted to compare the new code parameters with the ones
available in the literature and, in Section~\ref{sec7}, a summary of
the paper is given.

\section{Background}\label{sec2}

In this section, we review the basic concepts utilized in this
paper. For more details we refer to \cite{Macwilliams:1977}.

\emph{Notation.} In this paper ${\mathbb Z}$ denotes the ring of
integers, $q \geq 3$ denotes a prime power, ${\mathbb F}_{q}$ is the
finite field with $q$ elements, $\alpha$ denotes a primitive element
of ${\mathbb F}_{q^{m}}$, $M^{(i)}(x)$ denotes the minimal
polynomial of ${\alpha}^i \in {\mathbb F}_{q^{m}}$ and $C^{\perp }$
denotes the Euclidean dual of a code $C$. We always assume that a
$q$-coset is considered modulo $n = q^{m} - 1$. The notation $m=
{{\operatorname{ord}}_{n}}(q)$ denotes the multiplicative order of $q$ modulo $n$.\\

Recall that a $q$-coset modulo $n = q^{m} - 1$ containing an element
$s$ is defined by ${\mathbb{C}}_{s} = \{s, sq, sq^{2},
sq^{3},\ldots, sq^{m_{s}-1} \},$ where $m_{s}$ is the smallest
positive integer such that $s{q}^{m_{s}} \equiv s \mod n$. In this
case, $s$ is the smallest positive integer of the coset. The
notation ${\mathbb{C}}_{[a]}$ means a $q$-coset containing $a$,
where $a$ is not necessarily the smallest integer in such coset.

\begin{theorem}\label{BCH}(BCH bound) Let $C$ be a cyclic code with generator
polynomial $g(x)$ such that, for some integers $b\geq 0,$
$\delta\geq 1$, and for $\alpha\in {\mathbb F}_{q^{m}},$ we have
$g({\alpha}^{b}) = g({\alpha}^{b+1})= \ldots =
g({\alpha}^{b+\delta-2})=0$, that is, the code has a sequence of
$\delta-1$ consecutive powers of $\alpha$ as zeros. Then the minimum
distance of $C$ is, at least, $\delta$.
\end{theorem}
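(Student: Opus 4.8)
The plan is to prove the BCH bound by a standard Vandermonde-determinant argument applied to a parity-check-type matrix built from the prescribed zeros. First I would suppose, for contradiction, that there is a nonzero codeword $c(x) = c_0 + c_1 x + \cdots + c_{n-1}x^{n-1}$ of Hamming weight $w \leq \delta - 1$, and let $i_1 < i_2 < \cdots < i_w$ be the positions of its nonzero coordinates. Since $\alpha^{b}, \alpha^{b+1}, \ldots, \alpha^{b+\delta-2}$ are all roots of $g(x)$ and $g(x)$ divides every codeword polynomial, we have $c(\alpha^{b+j}) = 0$ for $j = 0, 1, \ldots, \delta-2$, hence in particular for $j = 0, 1, \ldots, w-1$.

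Next I would write these $w$ equations $\sum_{k=1}^{w} c_{i_k}(\alpha^{i_k})^{b+j} = 0$, $j = 0,\ldots,w-1$, as a homogeneous linear system $M \mathbf{v} = 0$, where $\mathbf{v} = (c_{i_1}\alpha^{i_1 b}, \ldots, c_{i_w}\alpha^{i_w b})^{T}$ and $M$ is the $w \times w$ matrix with $(j,k)$ entry $(\alpha^{i_k})^{j}$. The matrix $M$ is a Vandermonde matrix in the distinct elements $\alpha^{i_1}, \ldots, \alpha^{i_w}$, so $\det M = \prod_{1 \leq k < l \leq w}(\alpha^{i_l} - \alpha^{i_k}) \neq 0$, since the $\alpha^{i_k}$ are pairwise distinct (the exponents $i_k$ are distinct and lie in $\{0,\ldots,n-1\}$ with $\alpha$ of order $n$). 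Hence $\mathbf{v} = 0$, forcing each $c_{i_k}\alpha^{i_k b} = 0$ and thus $c_{i_k} = 0$ for all $k$, contradicting the assumption that $c(x) \neq 0$. Therefore every nonzero codeword has weight at least $\delta$, i.e.\ the minimum distance is at least $\delta$.

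The only genuinely delicate points, rather than a true obstacle, are bookkeeping ones: verifying that $g(\alpha^{b+j}) = 0$ indeed implies $c(\alpha^{b+j}) = 0$ for all codewords (because $c(x) = a(x)g(x)$ in $\mathbb{F}_q[x]/(x^n-1)$, and one must be slightly careful evaluating in $\mathbb{F}_{q^m}$ and using $\alpha^n = 1$), and confirming the nonvanishing of the Vandermonde determinant, which needs the distinctness of the $\alpha^{i_k}$ — guaranteed by $\alpha$ having multiplicative order exactly $n$ and the $i_k$ being distinct residues modulo $n$. I would also note that the argument uses only $w \leq \delta - 1$ consecutive zeros starting at $b$, so the hypothesis is exactly what is needed. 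No deeper tools are required; the proof is essentially the classical one and fits in a short paragraph once the Vandermonde fact is invoked.
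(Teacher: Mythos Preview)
Your proof is correct and is precisely the classical Vandermonde argument for the BCH bound. Note, however, that the paper does not actually prove this theorem: it is stated in the Background section as a well-known result (Theorem~\ref{BCH}) and used without proof throughout, so there is no ``paper's own proof'' to compare against. Your argument would serve perfectly well as the omitted proof.
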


A cyclic code of length $n$ over ${\mathbb F}_{q}$ is a BCH code
with designed distance $\delta$ if, for some integer $b\geq 0$, we
have $g(x)= \operatorname{l.c.m.}\{{M}^{(b)}(x),
{M}^{(b+1)}(x),\ldots$, ${M}^{(b+\delta-2)}(x)\}$, i.e., $g(x)$ is
the monic polynomial of smallest degree over ${\mathbb F}_{q}$
having ${{\alpha}^{b}}, {{\alpha}^{b+1}},\ldots,
{{\alpha}^{b+\delta-2}}$ as zeros. From the BCH bound, the minimum
distance of BCH codes is greater than or equal to their designed
distance $\delta$. If $C$ is a BCH code then a parity check matrix
is given by
\begin{eqnarray*}
H_{C} = \left[
\begin{array}{ccccc}
1 & {{\alpha}^{b}} & {{\alpha}^{2b}} & \cdots & {{\alpha}^{(n-1)b}} \\
1 & {{\alpha}^{(b+1)}} & {{\alpha}^{2(b+1)}} & \cdots & {{\alpha}^{(n-1)(b+1)}}\\
\vdots & \vdots & \vdots & \vdots & \vdots\\
1 & {{\alpha}^{(b+\delta-2)}} & \cdots & \cdots & {{\alpha}^{(n-1)(b+\delta-2)}}\\
\end{array}
\right],
\end{eqnarray*}
by expanding each entry as a column vector with respect to some
${\mathbb F}_{q}$-basis ${\mathcal B}$ of ${\mathbb F}_{q^{m}}$ over
${\mathbb F}_{q}$, where $m= {{\operatorname{ord}}_{n}}(q)$, after
removing any linearly dependent rows. The rows of the resulting
matrix over ${\mathbb F}_{q}$ are the parity checks satisfied by
$C$.

Let ${\mathcal B} =\{ b_{1}, \ldots, b_{m}\}$ be a basis of
${\mathbb F}_{q^{m}}$ over ${\mathbb F}_{q}$. If $u = (u_1,\ldots
,u_{n}) \in {\mathbb F}_{q^{m}}^{n}$ then one can write the vectors
$u_{i}$, $1\leq i\leq n$, as linear combinations of the elements of
${\mathcal B}$, i.e., $u_{i} = u_{i1}b_{1} +\ldots + u_{im}b_{m}$.
Let $u^{(j)} = (u_{1j},\ldots, u_{nj}) \in {\mathbb F}_{q}^{n}$,
where $1\leq j\leq m$. Then, if $v \in {\mathbb F}_{q}^{n}$, one has
$v\cdot u=0$ if and only if $v \cdot u^{(j)} = 0$ for all $1\leq
j\leq m$.

\section{New properties of cosets}\label{sec3}

In this section we explore the structure of $q$-cosets in order to
obtain new properties of them. As it is well known, the knowledge of
the structure (cardinality, disjoint cosets and so on) of cyclotomic
cosets provide us conditions to compute the dimension and the (lower
bounds for) minimum distance of (classical) cyclic codes. These two
parameters of cyclic codes are not known in general in the
literature. Therefore, we utilize our new properties of $q$-cosets
to compute these two parameters of some families of (classical)
cyclic codes. Further, we use these classical cyclic codes to
construct new quantum codes by applying the CSS construction.
Theorem~\ref{D} is the first result of this section.

\begin{theorem}\label{D}
Let $q$ be an odd prime power and ${\mathbb{C}}_{s}$ be a $q$-coset.
Then $s$ is even if and only if $\forall t \in {\mathbb{C}}_{s},$
$t$ is even.
\end{theorem}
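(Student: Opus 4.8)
The key observation is that $n = q^m - 1$ is even whenever $q$ is odd (since $q^m$ is odd, so $q^m - 1$ is even). The plan is to analyze parity under multiplication by $q$ modulo this even number $n$.

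First I would note that since $q$ is odd, multiplication by $q$ preserves parity in $\mathbb{Z}$: if $t$ is even then $tq$ is even, and if $t$ is odd then $tq$ is odd. The subtlety is that the elements of $\mathbb{C}_s$ are the residues $sq^j \bmod n$, not the integers $sq^j$ themselves, so I must check that reduction modulo $n$ also preserves parity. This is precisely where $n$ being even is essential: reducing an integer $x$ modulo $n$ replaces $x$ by $x - kn$ for some integer $k$, and since $n$ is even, $x$ and $x - kn$ have the same parity. Hence the map $x \mapsto xq \bmod n$ on $\{0, 1, \ldots, n-1\}$ preserves parity.

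With this in hand, the theorem follows immediately: the forward direction is trivial (if every $t \in \mathbb{C}_s$ is even, then in particular $s$ is even, since $s \in \mathbb{C}_s$). For the converse, suppose $s$ is even. Every element of $\mathbb{C}_s$ has the form $sq^j \bmod n$, obtained from $s$ by repeatedly applying the parity-preserving map $x \mapsto xq \bmod n$; hence every such element is even. I would write this as a short induction on $j$: $s$ is even, and if $sq^j \bmod n$ is even then $sq^{j+1} \bmod n = (sq^j \bmod n)\cdot q \bmod n$ is even by the parity-preservation observation.

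I do not anticipate a genuine obstacle here; the only thing to be careful about is making explicit the dependence on $n$ being even, which in turn relies on $q$ being odd — this is exactly the hypothesis of the theorem, and it is worth remarking that the statement fails for $q = 2$ (where $n = 2^m - 1$ is odd). The entire argument is elementary modular arithmetic once this point is isolated.
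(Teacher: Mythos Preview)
Your proposal is correct and follows essentially the same approach as the paper: both arguments hinge on the observation that $n=q^{m}-1$ is even when $q$ is odd, so that reduction modulo $n$ preserves parity, and multiplication by the odd integer $q$ preserves parity as well. The paper carries this out via explicit division-with-remainder computations, while you phrase it as a parity-preserving map and a short induction, but the underlying idea is identical.
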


\begin{proof}
Suppose first $s=2k$, where $k \in {\mathbb Z}$, and let $t$ be an
element of the coset ${\mathbb{C}}_{s}$ without considering the
modulo operation. Then $t = 2k{q}^l$, where $0\leq l\leq m_{s}-1$.
Applying the division with remainder for $t$ and $q^{m}-1$ one has
$2kq^{l} = (q^{m} - 1)a + r$, where $r \in {\mathbb Z}$, $0 \leq r <
q^{m} - 1$; so $r = 2k{q}^{l} -(q^{m} - 1)a$. Since $q^{m} - 1$ is
even, $r$ is also even.

Conversely, suppose that each $t$, where $t \in {\mathbb{C}}_{s}$,
(considering the modulo operation) is of the form $t=2k$, $k \in
{\mathbb Z}$. Then by applying again the division with remainder for
$sq^{l}$ and $q^{m}-1$ one obtains $sq^{l}= (q^{m} - 1)a + t$, where
$0 \leq t < q^{m} - 1$ is even. Since $t$ and $q^{m} - 1$ are even
also is $sq^{l}$, and because $q^{l}$ is odd it follows that $s$ is
even, as required. The proof is complete.
\end{proof}

As direct consequences of Theorem~\ref{D} we present straightforward
corollaries (Corollaries~\ref{DD} and \ref{DD1}).

\begin{corollary}\label{DD}
There are no consecutive integers belonging to the same $q$-coset
modulo $n=q^{m} - 1$, where $q$ is an odd prime power.
\end{corollary}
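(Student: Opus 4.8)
The plan is to derive this as an immediate consequence of Theorem~\ref{D}. First I would argue by contradiction: suppose there exist consecutive integers $a$ and $a+1$ lying in the same $q$-coset ${\mathbb{C}}_{s}$ modulo $n = q^{m}-1$, where $q$ is an odd prime power. Since $a$ and $a+1$ have opposite parities, exactly one of them is even and the other is odd.

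Next I would invoke Theorem~\ref{D}, which states that for a $q$-coset ${\mathbb{C}}_{s}$, the smallest element $s$ is even if and only if every element $t \in {\mathbb{C}}_{s}$ is even. Contrapositively, $s$ is odd if and only if there exists an odd element in the coset, and more to the point, the theorem tells us that a single coset cannot contain both an even element and an odd element: if it contains an even element, then $s$ is even (by the ``conversely'' direction applied appropriately, or directly since the set of elements is closed under the stated equivalence), hence all elements are even; if it contains an odd element, then $s$ cannot be even, so $s$ is odd, and then no element can be even. In short, each $q$-coset is either entirely even or entirely odd.

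Applying this to ${\mathbb{C}}_{s} \ni a, a+1$ yields a contradiction, since $a$ and $a+1$ cannot both be even nor both be odd. Hence no $q$-coset contains two consecutive integers, which is the claim.

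The only subtlety — and the step I would be most careful about — is making sure Theorem~\ref{D} really does establish the clean dichotomy ``every coset is monochromatic in parity'' rather than just the one-directional statement about the minimal representative $s$. This is handled by noting that the biconditional in Theorem~\ref{D} runs through the minimal element $s$: if the coset had an even element and an odd element, the odd element would force (via the contrapositive of the forward direction, or via the converse direction) that $s$ is odd, while the even element together with the converse direction would force... — here one must phrase it correctly, but the cleanest route is: the converse direction of Theorem~\ref{D} shows that as soon as one element is even, $s$ is even; the forward direction then shows all elements are even. So a coset containing any even element is entirely even, and a coset containing any odd element is therefore entirely odd. Consecutive integers straddle both classes, so the corollary follows. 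I expect no real obstacle here; it is genuinely a one-line consequence once the parity dichotomy is spelled out.
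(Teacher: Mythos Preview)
Your proposal is correct and matches the paper's approach: the paper presents Corollary~\ref{DD} as a straightforward, proofless consequence of Theorem~\ref{D}, which is exactly the parity-dichotomy argument you spell out. Your only slight imprecision is attributing ``one even element $\Rightarrow$ $s$ even'' to the converse direction as \emph{stated}; that stronger fact is what the \emph{proof} of the converse (or equivalently, rerunning the forward direction's computation from an arbitrary even element rather than $s$) actually establishes, and once that is noted the monochromatic-parity conclusion and hence the corollary are immediate.
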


\begin{remark}
Note that in the binary case there exists at least one coset
containing two consecutive elements, namely ${\mathbb{C}}_1$.
\end{remark}

\begin{corollary}\label{DD1}
Suppose that ${\mathbb{C}}_{x}$ and ${\mathbb{C}}_{y}$ are two
$q$-cosets, where $q$ is an odd prime power. Assume also that $a \in
{\mathbb{C}}_{x}$ and $b \in {\mathbb{C}}_{y}$. If $a \not\equiv b
\mod 2$, then ${\mathbb{C}}_{x} \neq {\mathbb{C}}_{y}$.
\end{corollary}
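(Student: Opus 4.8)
The plan is to derive Corollary~\ref{DD1} directly from Theorem~\ref{D} by a short contrapositive argument. Suppose, for contradiction, that ${\mathbb{C}}_{x} = {\mathbb{C}}_{y}$. Then $a$ and $b$ lie in one and the same $q$-coset, say ${\mathbb{C}}_{s}$, where $s$ is its smallest positive integer. Apply Theorem~\ref{D} to this coset: either $s$ is even, in which case every element of ${\mathbb{C}}_{s}$ is even, so in particular both $a$ and $b$ are even; or $s$ is odd, in which case (again by Theorem~\ref{D}, reading the ``if and only if'' in the contrapositive) no element of ${\mathbb{C}}_{s}$ is even, so both $a$ and $b$ are odd. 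In either case $a \equiv b \pmod 2$, contradicting the hypothesis $a \not\equiv b \pmod 2$. Hence ${\mathbb{C}}_{x} \neq {\mathbb{C}}_{y}$, as claimed.

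The one point that needs a word of care is the ``odd $s$'' branch: Theorem~\ref{D} as stated says ``$s$ is even $\iff$ every $t \in {\mathbb{C}}_s$ is even''. To handle odd $s$ I would negate both sides: if $s$ is odd, then it is not the case that all elements are even, but I actually want the stronger statement that all elements are odd. This follows because the negation of ``all even'' together with Theorem~\ref{D} forces the coset to contain an odd element $t_0$; but then if the coset also contained an even element, applying the equivalence to $t_0$'s coset (which is ${\mathbb{C}}_s$ itself) would still not immediately give a contradiction, so cleaner is to observe: $a$ and $b$ in the same coset means ${\mathbb{C}}_a = {\mathbb{C}}_b = {\mathbb{C}}_s$, so by Theorem~\ref{D} ``$a$ even $\iff$ $s$ even'' and ``$b$ even $\iff$ $s$ even'', whence $a$ even $\iff$ $b$ even, i.e.\ $a \equiv b \pmod 2$. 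This is really the cleanest route and avoids any case split at all.

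I expect no genuine obstacle here; the corollary is purely logical bookkeeping on top of Theorem~\ref{D}. The only thing to get right is phrasing the biconditional correctly so that membership of both $a$ and $b$ in the common coset pins down their parities to the single parity of the coset's least element. I would write the proof in two or three sentences along exactly those lines and finish.
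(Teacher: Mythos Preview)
Your argument is correct and is precisely the straightforward deduction from Theorem~\ref{D} that the paper intends; the paper gives no explicit proof, merely labeling the corollary a direct consequence of Theorem~\ref{D}. The subtlety you flag is real but dissolves once you note (as your ``cleaner route'' essentially does) that the proof of Theorem~\ref{D} nowhere uses that $s$ is the minimal coset representative, so the biconditional ``$s$ even $\iff$ every element even'' holds with any chosen element of the coset in place of $s$; hence $a$ even $\iff$ $b$ even whenever $a,b$ share a coset.
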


In Theorem~\ref{DDD}, we introduce the positive integers $L_{s}$ in
order to compute the minimum absolute value of the difference
between elements in the same $q$-coset. This fact will be utilized
in the computation of the maximum designed distance of the
corresponding cyclic code (see Corollary~\ref{E}).

\begin{theorem}\label{DDD}
Let $q\geq 3$ be a prime power and ${\mathbb{C}}_{s}$ be a $q$-coset
with representative $s$. Define $L_{s} = \min \{ \mid [sq^{j}]_{n} -
[sq^{l}]_{n} \mid : 0 \leq j, l \leq m_{s}-1, j\neq l\}$, where
${\mid \cdot \mid}_{n}$ denotes the absolute value function and
$[\cdot]_{n}$ denotes the remainder modulo $n= q^{m} - 1$. Then one
has $L_{s} \geq q - 1$ for all $s$, where $s$ runs through the coset
representatives. Moreover, there exists at least one $q$-coset
${\mathbb{C}}_{s^{*}}$ such that $L_{s^{*}} = q - 1$.
\end{theorem}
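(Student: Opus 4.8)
The plan is to split the statement into its two assertions and handle them separately. For the lower bound $L_{s}\geq q-1$, fix a $q$-coset ${\mathbb{C}}_{s}$ and two distinct exponents $j,l$ with $0\leq j,l\leq m_{s}-1$. Write $a=[sq^{j}]_{n}$ and $b=[sq^{l}]_{n}$, so $a\equiv sq^{j}$ and $b\equiv sq^{l}\pmod n$ with $0\leq a,b<n$. Without loss of generality assume $j>l$; then $a\equiv b\cdot q^{\,j-l}\pmod n$, i.e. $n\mid a-bq^{\,j-l}$. The key observation is that $b\not\equiv 0$ (otherwise the whole coset is $\{0\}$ and there is nothing to prove), so modulo $q-1$ we have $q\equiv 1$, hence $a\equiv b q^{\,j-l}\equiv b\pmod{q-1}$. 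Therefore $a-b$ is a nonzero (by distinctness of coset elements, which follows from Corollary~\ref{DD1}'s parity argument, or more simply because $a=b$ would force $s q^j\equiv sq^l$, contradicting $j\ne l<m_s$) multiple of... wait — this only gives $(q-1)\mid(a-b)$, and combined with $a\ne b$ it yields $|a-b|\geq q-1$. That is exactly $L_{s}\geq q-1$. So the first half reduces to the congruence $a\equiv b\pmod{q-1}$ together with $a\ne b$.

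For the existence of a coset attaining equality, I would exhibit an explicit $s^{*}$. The natural candidate is $s^{*}=1$: the coset ${\mathbb{C}}_{1}=\{1,q,q^{2},\ldots,q^{m-1}\}$ (all these are already reduced mod $n=q^m-1$ since each is $<q^m-1$ for $m\geq 2$, using $q\geq 3$), and the two elements $1$ and $q$ differ by exactly $q-1$. Hence $L_{1}\leq q-1$, and combined with the lower bound just proved, $L_{1}=q-1$. One should check the edge case $m=1$: then $n=q-1$ and every coset is a singleton, so $L_s$ is vacuously defined — presumably the theorem implicitly assumes $m\geq 2$, consistent with the rest of the paper; I would state this assumption explicitly or note that for $m=1$ there is nothing to prove.

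The main obstacle is essentially bookkeeping rather than conceptual: one must be careful that reducing $sq^{j}$ and $sq^{l}$ modulo $n$ genuinely preserves the congruence modulo $q-1$ — this works because $n=q^{m}-1\equiv 0\pmod{q-1}$, so reduction mod $n$ changes the value by a multiple of $q-1$ and hence does not alter the residue mod $q-1$. A second small point is justifying $a\ne b$: since $l<m_{s}$, we cannot have $sq^{l}\equiv sq^{j}\pmod n$ when $j\ne l$ are both in $\{0,\ldots,m_s-1\}$, by minimality of $m_{s}$ in the definition of the coset (if $j>l$ then $sq^{j-l}\equiv s$ would contradict $m_s$ being least). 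With these two observations the argument is immediate, and the choice $s^{*}=1$ settles sharpness.
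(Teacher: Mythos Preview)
Your proposal is correct and follows essentially the same route as the paper: both arguments establish the lower bound by showing that $(q-1)\mid\bigl([sq^{j}]_{n}-[sq^{l}]_{n}\bigr)$ via the observation that $q-1$ divides both $q^{\,j-l}-1$ and $n=q^{m}-1$, and both take $s^{*}=1$ (using the adjacent elements $1$ and $q$ of $\mathbb{C}_{1}$) to witness equality. Your write-up is in fact a bit more careful than the paper's, since you explicitly justify $a\neq b$ from the minimality of $m_{s}$ and flag the degenerate case $m=1$, neither of which the paper addresses.
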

\begin{proof}
Let ${\mathbb{C}}_{s}$ be a $q$-coset and assume that there exist
integers such that $0 \leq j, l\leq m_{s}-1$. Assume without loss of
generality that $l > j$. Applying the division with remainder for
$sq^{l}$ and $n$, $sq^{j}$ and $n$ one obtains $sq^{l}=an +
[sq^{l}]_{n}$ and $sq^{j}=bn + [sq^{j}]_{n}$, $a, b \in {\mathbb
Z}$, so $sq^{j}(q^{l-j}-1)= - sq^{j} = (a-b)n +
([sq^{l}]_{n}-[sq^{j}]_{n})$. Since $q-1$ divides $q^{l-j}-1$ and
also divides $n$, it follows that
$(q-1)|([sq^{l}]_{n}-[sq^{j}]_{n})$, hence $L_{s} \geq q - 1$. To
finish the proof, it suffices to consider the coset ${\mathbb
C}_{1}$ and its element $s^{*}=1 \in {\mathbb C}_{1}$; we have
$L_{s^{*}} = q - 1$. Now the result follows.
\end{proof}

\begin{corollary}\label{E}
Let $q \geq 3$ be a prime power. If $C$ is a $q$-ary cyclic code of
length $n=q^{m}-1$, $m \geq 2$, whose defining set ${\mathcal Z}$ is
the union of $c$ cosets ${\mathbb{C}}_{s+1}, {\mathbb{C}}_{s+2},
\ldots, {\mathbb{C}}_{s+c}$, where $c\geq 1$ is an integer, $s\geq
0$ is an integer and $1 \leq s+c \leq q-2$, then $\delta \leq c+2$,
where $\delta$ is the designed distance of $C$. In particular, if
${\mathcal Z}$ consists of only one $q$-coset then $\delta = 2$.
\end{corollary}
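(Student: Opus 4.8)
The plan is to derive an upper bound on the designed distance $\delta$ by producing, for the given defining set $\mathcal{Z}$, a gap in the run of consecutive powers of $\alpha$ that are forced to be zeros of the generator polynomial. Recall that the designed distance is, essentially, one more than the length of the longest string of consecutive integers contained in $\mathcal{Z}$ (relative to the BCH construction of $C$); so I want to show that no such string can have length $c+1$ or more, which would give $\delta \le c+2$.

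First I would observe that each coset $\mathbb{C}_{s+i}$, for $1 \le i \le c$, contains the element $s+i$ as its smallest positive representative: since $1 \le s+c \le q-2 < q - 1 \le L_{s+i}$ by Theorem~\ref{DDD}, every other element of $\mathbb{C}_{s+i}$ differs from $s+i$ by at least $q-1 > s+c$ in absolute value modulo $n$, so in particular $s+i$ is the unique element of that coset lying in the range $[1, q-2]$, and it is the minimal one. Hence the intersection of $\mathcal{Z} = \bigcup_{i=1}^{c} \mathbb{C}_{s+i}$ with the interval $\{1, 2, \ldots, q-2\}$ is exactly $\{s+1, s+2, \ldots, s+c\}$, a block of $c$ consecutive integers and nothing more in that range. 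Moreover, by Corollary~\ref{E}'s underlying logic (Theorem~\ref{DDD} applied to each coset), no coset $\mathbb{C}_{s+i}$ contributes two consecutive integers anywhere, since consecutive integers differ by $1 < q-1 \le L_{s+i}$; this already uses Corollary~\ref{DD} as well.

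Next I would argue that the run of $c$ consecutive integers $s+1, \ldots, s+c$ cannot be extended inside $\mathcal{Z}$. The integer $s+c+1$: if it belonged to $\mathcal{Z}$, it would lie in some $\mathbb{C}_{s+i}$; but then $\mathbb{C}_{s+i}$ would contain both $s+i \le s+c$ and $s+c+1$, two elements whose difference is at most $c \le q-2 < q-1 \le L_{s+i}$, contradicting Theorem~\ref{DDD} — unless $s+c+1 \equiv s+i \pmod n$, impossible since $0 < (s+c+1)-(s+i) \le c+1 \le q-1 < n$. The same argument rules out $s \in \mathcal{Z}$ (when $s \ge 1$), and more generally rules out any integer within distance $q-2$ of the block, so the block $\{s+1,\ldots,s+c\}$ is maximal among consecutive integers in $\mathcal{Z}$. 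Therefore the longest string of consecutive powers of $\alpha$ forced as zeros has length exactly $c$, and the BCH bound construction yields $\delta \le c+2$.

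Finally, for the special case where $\mathcal{Z}$ is a single coset ($c=1$): the block of consecutive integers in $\mathcal{Z}$ has length $1$, so $\delta \le 3$; but a single nonzero power $\alpha^{s+1}$ as a zero forces only $\delta \ge 2$, and since the next integer $s+2$ is not a zero (by the argument above, as $1 < q-1 \le L_{s+1}$), the longest consecutive run has length exactly $1$, giving the designed distance $\delta = 2$ precisely. The main obstacle I anticipate is being careful with the definition of ``designed distance'' for a code whose defining set is specified directly as a union of cosets rather than as $\{b, b+1, \ldots, b+\delta-2\}$: one must pin down that $\delta$ here means the largest $\delta$ such that $\mathcal{Z}$ contains $\delta - 1$ consecutive integers, and check the boundary cases $s = 0$ and $s+c = q-2$ do not create an accidental wraparound (they do not, since $q - 1 < n = q^m - 1$ for $m \ge 2$).
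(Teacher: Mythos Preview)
Your approach is the same as the paper's: use Theorem~\ref{DDD} (the gap $L_s\ge q-1$) to bound the length of any run of consecutive integers in $\mathcal Z$, and the paper's own proof is literally the one line ``$\mathcal Z$ contains at most a sequence of $c+1$ consecutive integers.'' Your write-up is considerably more careful than that, and the ingredients are right.

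One small logical slip is worth tightening. You show that the particular block $\{s+1,\ldots,s+c\}$ is a \emph{maximal} run (cannot be extended on either side), and then conclude it is the \emph{longest} run in $\mathcal Z$. Those are not the same thing: a priori there could be another run elsewhere, built from elements $(s+i)q^{j}$ with $j\ge 1$, that happens to be longer. The clean way to close this is to argue globally rather than locally: any $c+1$ consecutive integers span a window of width $c\le q-2<q-1$, so by Theorem~\ref{DDD} each coset $\mathbb C_{s+i}$ contributes at most one element to that window; with only $c$ cosets available, such a window can contain at most $c$ elements of $\mathcal Z$, hence cannot be a full run. This yields the desired bound (indeed $\delta\le c+1$, which is stronger than the stated $\delta\le c+2$) and also gives $\delta=2$ immediately when $c=1$.
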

\begin{proof}
It suffices to note that ${\mathcal Z}$ contains at most a sequence
of $c+1$ consecutive integers and the result follows.
\end{proof}

Now, we define the concept of complementary $q$-coset and show some
interesting properties of it.

\begin{definition}\label{EE}
Let $q$ be a prime power. Given a $q$-coset ${\mathbb{C}}_{s}=\{s,
qs, q^{2}s, q^{3}s,$ $\ldots, q^{m_{s}-1}s \}$, a complementary
coset of \ ${\mathbb{C}}_{s}$ is a $q$-coset given by
${\mathbb{C}}_{r}=\{r, qr, q^{2}r, q^{3}r,$ $\ldots, q^{m_{r}-1}r
\}$, containing an element $q^{l}r$, where $0\leq l\leq m_{r}-1$,
such that $s + q^{l}r \equiv 0 \mod n $, where $n=q^{m} - 1$.
\end{definition}

Proposition~\ref{EEE} establishes some properties of complementary
$q$-cosets:

\begin{proposition}\label{EEE}
Let ${\mathbb{C}}_{s}=\{s, qs, q^{2}s, q^{3}s,\ldots, q^{m_{s}-1}s
\}$ be a $q$-coset modulo $n = q^{m}-1$. Then the following results
hold:

(i) For each given $q$-coset ${\mathbb{C}}_{s}$, there exists only
one complementary coset of ${\mathbb{C}}_{s}$, denoted by
${\overline{\mathbb{C}}}_{s}$.

(ii) The $q$-coset and its complementary coset have the same
cardinality.

(iii) Define the operation ${\mathbb{C}}_{s} \oplus
{\overline{\mathbb{C}}}_{r} = {\mathbb{C}}_{[s+q^{l}r]}$ ($q^{l}r$
is given in Definition~\ref{EE}); then one has ${\mathbb{C}}_{s}
\oplus {\overline{\mathbb{C}}}_{s} =\{0\}$.

(iv) If ${\mathbb{C}}_{r}$ is the complementary coset of
${\mathbb{C}}_{s}$ then $L_{s} = L_{r}$.

(v) ${\overline{{\overline{\mathbb{C}}}}}_{s} = {\mathbb{C}}_{s}$.
\end{proposition}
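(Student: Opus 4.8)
The plan is to prove the five items essentially in order, since the later ones lean on the earlier ones. For (i), I would start from Definition~\ref{EE}: the congruence $s + q^{l}r \equiv 0 \bmod n$ says $q^{l}r \equiv -s \equiv n-s \bmod n$, so the element $q^{l}r$ is forced to be $[n-s]_{n}$, and hence $\mathbb{C}_{r}$ must be the unique $q$-coset $\mathbb{C}_{[n-s]}$ containing $n-s$. Existence is immediate (take $r$ to be the smallest element of $\mathbb{C}_{[n-s]}$), and uniqueness follows because two $q$-cosets sharing a common element coincide. So $\overline{\mathbb{C}}_{s} = \mathbb{C}_{[n-s]}$, which is the workhorse formula for everything else.

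For (ii), I would use that multiplication by $q$ is a bijection on $\mathbb{Z}/n\mathbb{Z}$ and that the map $x \mapsto -x = n-x$ commutes with multiplication by $q$ modulo $n$; hence $x \mapsto n-x$ carries $\mathbb{C}_{s}$ bijectively onto $\mathbb{C}_{[n-s]} = \overline{\mathbb{C}}_{s}$, so $m_{s} = m_{r}$. For (iii), with $r$ in place of $s$ the element exhibited in Definition~\ref{EE} for the pair $(\mathbb{C}_{s},\overline{\mathbb{C}}_{s})$ is exactly the $q^{l}r$ with $s+q^{l}r\equiv 0$, so $\mathbb{C}_{s}\oplus\overline{\mathbb{C}}_{s} = \mathbb{C}_{[s+q^{l}r]} = \mathbb{C}_{[0]} = \{0\}$, since $0\cdot q = 0$. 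Item (v) is then a one-liner: $\overline{\overline{\mathbb{C}}}_{s} = \mathbb{C}_{[n-(n-s)]} = \mathbb{C}_{[s]} = \mathbb{C}_{s}$, using that $n-s$ is itself a valid representative of $\overline{\mathbb{C}}_{s}$ even if not the smallest one, which is harmless by the $\mathbb{C}_{[a]}$ convention.

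For (iv), I would again exploit the bijection $\varphi: x\mapsto [n-x]_{n}$ between the two cosets. The key observation is that for any $a,b \in \mathbb{Z}/n\mathbb{Z}$ one has $|[n-a]_{n} - [n-b]_{n}| = |[b]_{n} - [a]_{n}|$ whenever $a,b \not\equiv 0$ (both differences equal $|b-a|$ reduced the same way, since $[n-a]_{n} = n-[a]_{n}$ for $a\not\equiv 0$). Applying this to all pairs of elements of $\mathbb{C}_{s}$ and their images in $\overline{\mathbb{C}}_{s}$ shows the two sets of pairwise absolute differences coincide, hence their minima $L_{s}$ and $L_{r}$ are equal. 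One must handle the degenerate possibility that $0 \in \mathbb{C}_{s}$ separately, but that forces $\mathbb{C}_{s} = \{0\}$, where $L_{s}$ is vacuous (or trivially matched).

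The only genuinely delicate point is bookkeeping around the "smallest representative" convention: $\overline{\mathbb{C}}_{s}$ is defined via $\mathbb{C}_{r}$ with $r$ the least element, yet the natural formulas produce $n-s$, which need not be least. I expect this to be the main obstacle only in the sense of needing care — it is resolved once and for all by recording $\overline{\mathbb{C}}_{s} = \mathbb{C}_{[n-s]}$ in step (i) and thereafter working with the convenient representative $n-s$ under the $\mathbb{C}_{[a]}$ notation, rather than any real difficulty.
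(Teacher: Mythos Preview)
Your proposal is correct and follows essentially the same route as the paper: both identify $\overline{\mathbb{C}}_{s}=\mathbb{C}_{[n-s]}$ in (i) and then exploit the map $x\mapsto -x\bmod n$ for the remaining items. Your treatment of (iv) via the bijection $x\mapsto n-x$ on already-reduced representatives is in fact a shade cleaner than the paper's, which argues through congruences and a two-sided inequality $L_r\leq L_s$, $L_s\leq L_r$; otherwise the arguments coincide.
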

\begin{proof}
(i) Suppose that ${\mathbb{C}}_{s}$ is a $q$-coset and that
${\mathbb{C}}_{r_1}=\{r_1, qr_1, q^{2}r_1, q^{3}r_1,\ldots,$
$q^{m_{r_1}-1} r_1 \}$ and ${\mathbb{C}}_{r_2}=\{r_2, qr_2,
q^{2}r_2, q^{3}r_2,\ldots, q^{m_{r_2}-1}r_2 \}$ are two
complementary $q$-cosets of ${\mathbb{C}}_{s}$ with representatives
$r_1$ and $r_2$, respectively. From definition, there exist two
elements $q^{l}r_1$, $0\leq l\leq m_{r_1}-1$, and $q^{t} r_2$,
$0\leq t\leq m_{r_2}-1$ such that $s + q^{l}r_1 \equiv 0 \mod n$ and
$s + q^{t}r_2 \equiv 0 \mod n$. Thus $q^{l}r_1\equiv q^{t} r_2 \mod
n$, so ${\mathbb{C}}_{r_1}= {\mathbb{C}}_{r_2}$.

(ii) Let ${\mathbb{C}}_{s}$ be the coset containing $s$ with
cardinality $m_{s}$. Let ${\mathbb{C}}_{[n - s]}$ be the $q$-coset
containing $l = n - s$ of cardinality $m_l$, given by
${\mathbb{C}}_{[l]}= \{ n - s, (n - s)q, (n - s)q^{2}, \ldots, (n -
s)q^{m_{l}-1}\}$. It is clear that ${\mathbb{C}}_{[l]}$ is the
complementary coset of ${\mathbb{C}}_{s}$. We first prove that $(n -
s)q^{w}\not\equiv (n - s)q^{t}\mod n$ holds for each $0 \leq t,
w\leq m_{s}-1$, $t\neq w$, i.e., $m_{l} \geq m_{s}$. In fact,
seeking a contradiction, we assume that $(n - s)q^{w}\equiv (n -
s)q^{t}\mod n$ holds, where $0 \leq t, w\leq m_{s}-1$ and $t\neq w$.
Thus the congruence $s q^{w} \equiv s q^{t}\mod n$ holds, where $0
\leq t, w\leq m_{s}-1$, $t\neq w$, which is a contradiction. The
proof of the part $m_{l}\leq m_{s}$ is similar.

(iii) Straightforward.

(iv) Let ${\mathbb{C}}_{s}$ be a $q$-coset with complementary
${\mathbb{C}}_{r}$ and assume w.l.o.g. that $L_{s} = [sq^{t_1}]_{n}
- [sq^{t_2}]_{n}$, where $0 \leq t_1, t_2 \leq m_{s}-1, t_1 \neq
t_2$. Applying the division with remainder for $sq^{t_1}$ and $n$,
$sq^{t_2}$ and $n$, one obtains $sq^{t_1}=an + [sq^{t_1}]_{n}$ and
$sq^{t_2}=bn + [sq^{t_2}]_{n}$, $a, b \in {\mathbb Z}$, so $sq^{t_1}
- sq^{t_2}=(a-b)n + [sq^{t_1}]_{n}-[sq^{t_2}]_{n}$. From hypothesis,
there exists an integer $0\leq l\leq m_{r}-1$ such that $s \equiv
-rq^{l}\mod n$, so $rq^{(l+t_{2})}-rq^{(l+t_1)}\equiv
([sq^{t_1}]_{n}-[sq^{t_2}]_{n})\mod n$. We may assume w.l.o.g. that
$w_1 = l+t_{1}$, $w_{2} = l+t_{2}$ and $0 \leq w_1, w_{2} \leq
m_{r}-1, w_1 \neq w_2$, because, from Item (ii), $m_r =m_{s}$. Since
$L_{s} = [sq^{t_1}]_{n}-[sq^{t_2}]_{n}=\mid [rq^{w_{2}}]_{n}-
[rq^{w_1}]_{n}\mid$, it follows that $L_r \leq L_{s}$. The proof of
the part $L_{s} \leq L_r$ is similar.

(v) Straightforward.

The proof is complete.
\end{proof}



The following lemma gives us necessary and sufficient conditions
under which a cyclic code contains its Euclidean dual:

\begin{lemma}\cite[Lemma 1]{Salah:2007}\label{AAA}
Assume that $\gcd(q, n)=1$. A cyclic code of length $n$ over ${
\mathbb F}_q$ with defining set ${\mathcal Z}$ contains its
Euclidean dual code if and only if ${\mathcal Z}\cap {\mathcal
Z}^{-1} =\emptyset$, where ${\mathcal Z}^{-1}=\{-z \mod n : z \in
{\mathcal Z}\}$.
\end{lemma}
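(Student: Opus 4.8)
The plan is to use the standard correspondence between cyclic codes of length $n$ over $\mathbb{F}_{q}$ and their defining sets, together with the explicit description of the defining set of the Euclidean dual. Since $\gcd(q,n)=1$, the polynomial $x^{n}-1$ has $n$ distinct roots $\alpha^{0}, \alpha^{1}, \ldots, \alpha^{n-1}$ in $\mathbb{F}_{q^{m}}$ with $m = \operatorname{ord}_{n}(q)$ and $\alpha$ a primitive $n$-th root of unity; hence every cyclic code $C$ of length $n$ is uniquely determined by its defining set $\mathcal{Z} = \{\, 0 \le i \le n-1 : c(\alpha^{i}) = 0 \text{ for all } c \in C \,\}$, which is a union of $q$-cosets modulo $n$. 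Moreover, for two cyclic codes $C_{1}, C_{2}$ of length $n$ with defining sets $\mathcal{Z}_{1}, \mathcal{Z}_{2}$, inclusion of codes reverses inclusion of defining sets: $C_{1} \subseteq C_{2}$ if and only if $\mathcal{Z}_{2} \subseteq \mathcal{Z}_{1}$, since prescribing more zeros yields a smaller code.

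First I would identify the defining set of $C^{\perp}$. Write $x^{n}-1 = g(x) h(x)$, where $g(x)$ is the generator polynomial of $C$, so that the roots of the check polynomial $h(x)$ are exactly $\{\, \alpha^{i} : 0 \le i \le n-1,\ i \notin \mathcal{Z} \,\}$. The Euclidean dual $C^{\perp}$ is generated by the reciprocal polynomial $h^{*}(x) = h(0)^{-1} x^{\deg h} h(x^{-1})$, whose roots are the inverses of the roots of $h(x)$, namely $\{\, \alpha^{-i} : i \notin \mathcal{Z} \,\}$. Therefore the defining set of $C^{\perp}$ is
\[
\mathcal{Z}^{\perp} = \{\, -i \bmod n : 0 \le i \le n-1,\ i \notin \mathcal{Z} \,\} = \{0,1,\ldots,n-1\} \setminus \mathcal{Z}^{-1},
\]
where the last equality uses that $i \mapsto -i \bmod n$ is a bijection of $\{0,1,\ldots,n-1\}$, so it commutes with taking complements.

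Finally I would combine the two facts: by the inclusion-reversing property, $C^{\perp} \subseteq C$ holds if and only if $\mathcal{Z} \subseteq \mathcal{Z}^{\perp}$, and substituting $\mathcal{Z}^{\perp} = \{0,1,\ldots,n-1\} \setminus \mathcal{Z}^{-1}$ shows that this is equivalent to $\mathcal{Z} \cap \mathcal{Z}^{-1} = \emptyset$, which is the assertion. The only place where some care is needed is the bookkeeping with the reciprocal polynomial and with negation modulo $n$ — in particular, checking that the roots of $h^{*}(x)$ are precisely the stated inverses and that negation commutes with complementation in $\{0,1,\ldots,n-1\}$ — but these steps are routine. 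The hypothesis $\gcd(q,n)=1$ is used exactly to guarantee that $x^{n}-1$ is separable, so that the defining-set description of cyclic codes and of their duals is valid.
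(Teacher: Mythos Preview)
Your argument is correct and is the standard proof of this well-known fact: you correctly identify the defining set of $C^{\perp}$ as the complement of $\mathcal{Z}^{-1}$ via the reciprocal of the check polynomial, and then translate the containment $C^{\perp}\subseteq C$ into the set-theoretic condition $\mathcal{Z}\cap\mathcal{Z}^{-1}=\emptyset$ using the inclusion-reversing correspondence between cyclic codes and defining sets. Note, however, that the paper does not supply its own proof of this lemma; it is simply quoted from \cite{Salah:2007} (Lemma~1 there), so there is no in-paper argument to compare against --- your proof is exactly the one that underlies the cited result.
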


The next proposition characterizes Euclidean self-orthogonal cyclic
codes in terms of complementary $q$-cosets:

\begin{proposition}\label{EEEE}
Let $n = q^{m}-1$, where $q\geq 3$ is a prime power.  A cyclic codes
of length $n$ over ${\mathbb F}_q$ with defining set ${\mathcal
Z}=\displaystyle\cup_{i=1}^{l}{\mathbb{C}}_{r_{i}}$ contains its
Euclidean dual code if and only if
$[\displaystyle\cup_{i=1}^{l}{\overline{\mathbb{C}}}_{r_{i}}]\cap
{\mathcal Z}=\emptyset$.
\end{proposition}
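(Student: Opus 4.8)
The plan is to reduce Proposition~\ref{EEEE} to Lemma~\ref{AAA} by showing that the set $\mathcal{Z}^{-1} = \{-z \bmod n : z \in \mathcal{Z}\}$ coincides with $\cup_{i=1}^{l} \overline{\mathbb{C}}_{r_i}$. Once this identity is established, the equivalence ``$C$ contains $C^{\perp}$ $\iff$ $\mathcal{Z} \cap \mathcal{Z}^{-1} = \emptyset$'' from Lemma~\ref{AAA} immediately becomes ``$C$ contains $C^{\perp}$ $\iff$ $[\cup_{i=1}^{l}\overline{\mathbb{C}}_{r_i}] \cap \mathcal{Z} = \emptyset$'', which is exactly the claim. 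Note that the hypothesis $\gcd(q,n) = 1$ required by Lemma~\ref{AAA} holds automatically here since $n = q^m - 1$.

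First I would observe that $\mathcal{Z}$ is a union of $q$-cosets, so it is closed under multiplication by $q$ modulo $n$; hence $\mathcal{Z}^{-1}$ is also a union of $q$-cosets (if $-z \in \mathcal{Z}^{-1}$ then $-zq \equiv (-z)q \in \mathcal{Z}^{-1}$ as well, since $zq \in \mathcal{Z}$). Thus it suffices to match the two sets coset by coset. For a single coset $\mathbb{C}_{r_i} = \{r_i, qr_i, \ldots, q^{m_{r_i}-1}r_i\}$, its image under negation modulo $n$ is $\{-r_i, -qr_i, \ldots, -q^{m_{r_i}-1}r_i\} \bmod n$, which is precisely the $q$-coset containing $-r_i \bmod n = n - r_i$. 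By the argument already used in the proof of Proposition~\ref{EEE}(ii), this coset $\mathbb{C}_{[n-r_i]}$ is exactly the complementary coset $\overline{\mathbb{C}}_{r_i}$: indeed $r_i + (n - r_i) \equiv 0 \bmod n$, so taking $l = 0$ in Definition~\ref{EE} shows $\mathbb{C}_{[n-r_i]}$ satisfies the defining condition, and by Proposition~\ref{EEE}(i) the complementary coset is unique. Therefore $(\mathbb{C}_{r_i})^{-1} = \overline{\mathbb{C}}_{r_i}$ for each $i$.

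Taking the union over $i = 1, \ldots, l$ then gives $\mathcal{Z}^{-1} = \cup_{i=1}^{l}(\mathbb{C}_{r_i})^{-1} = \cup_{i=1}^{l}\overline{\mathbb{C}}_{r_i}$, which completes the reduction. The argument is essentially bookkeeping once one recognizes that ``complementary coset'' is just another name for ``negation of the coset modulo $n$''; the only point requiring a small amount of care is confirming that the element $q^l r$ in Definition~\ref{EE} can be taken with $l = 0$ and $r = n - r_i$, i.e., that the complementary coset as defined abstractly really is the one containing $-r_i$, but this is settled by the uniqueness in Proposition~\ref{EEE}(i). I do not anticipate a serious obstacle; the main thing to get right is the clean statement $\mathcal{Z}^{-1} = \cup_{i}\overline{\mathbb{C}}_{r_i}$ and its justification via the already-proved properties of complementary cosets.
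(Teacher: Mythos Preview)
Your proposal is correct and follows essentially the same approach as the paper: the paper's proof is a single sentence observing that $\mathbb{C}_{-j}$ is the complementary coset of $\mathbb{C}_j$ for each $j=r_1,\ldots,r_l$, from which the result follows by Lemma~\ref{AAA}. You have simply supplied the details the paper leaves implicit (verifying $\gcd(q,n)=1$, writing out why $(\mathbb{C}_{r_i})^{-1}=\overline{\mathbb{C}}_{r_i}$ via Definition~\ref{EE} and uniqueness from Proposition~\ref{EEE}(i), and taking the union).
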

\begin{proof}
Since the $q$-coset ${\mathbb{C}}_{-j}$ is the complementary coset
of ${\mathbb{C}}_{j}$, $j = r_1, r_2, \ldots, r_l$, the result
follows.
\end{proof}

Let us now consider the following result shown in \cite{Salah:2007}:

\begin{lemma}\cite[Lemmas 8 and 9]{Salah:2007}\label{F}
Let $n\geq 1$ be an integer and $q$ be a power of a prime such that
$\gcd(n,q)=1$ and ${q}^{\lfloor m/2 \rfloor} < n \leq q^{m}-1$,
where $m= {{\operatorname{ord}}_{n}}(q)$.

(a) The $q$-coset ${\mathbb{C}}_{x}= \{ xq^{j} \mod n : 0 \leq j < m
\}$ has cardinality $m$ for all $x$ in the range $ 1\leq x \leq n
{q}^{\lceil m/2 \rceil}/(q^{m}-1)$;

(b) If $x$ and $y$ are distinct integers in the range $1 \leq x,y
\leq \min \{ \lfloor n{q}^{\lceil m/2 \rceil }/(q^{m}-1)-1 \rfloor,
n-1 \}$ such that $x, y \not\equiv 0 \mod q$, then the $q$-cosets of
$x$ and $y$ (modulo $n$) are disjoint.
\end{lemma}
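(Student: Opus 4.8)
The plan is to prove both statements by elementary comparisons of size between the modulus $n=q^m-1$ and products of the form $x\,q^{e}$ with $e$ roughly $m/2$, exploiting the two halves of the hypothesis $q^{\lfloor m/2\rfloor}<n\le q^m-1$ together with the identity $\lceil m/2\rceil+\lfloor m/2\rfloor=m$. Throughout one may assume $m\ge 2$, since for $m=1$ every $q$-coset is a singleton and both claims are immediate.

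For part (a): since $m=\operatorname{ord}_n(q)$ we have $xq^m\equiv x\mod n$, so the cardinality $m_x$ of ${\mathbb C}_x$ (the least positive integer with $xq^{m_x}\equiv x\mod n$) divides $m$. I would argue by contradiction: if $m_x\neq m$ then $m_x$ is a proper divisor of $m$, hence $m_x\le\lfloor m/2\rfloor$ and $q^{m_x}-1\le q^{\lfloor m/2\rfloor}-1$. Now $n\mid x(q^{m_x}-1)$ and $x(q^{m_x}-1)\ge q-1>0$, so $x(q^{m_x}-1)\ge n$, which gives
\[
x\ \ge\ \frac{n}{q^{m_x}-1}\ \ge\ \frac{n}{q^{\lfloor m/2\rfloor}-1}.
\]
Combining this with the hypothesis $x\le nq^{\lceil m/2\rceil}/(q^m-1)$ yields $q^m-1\le q^{\lceil m/2\rceil}\bigl(q^{\lfloor m/2\rfloor}-1\bigr)=q^m-q^{\lceil m/2\rceil}$, i.e.\ $q^{\lceil m/2\rceil}\le 1$, which is absurd. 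Hence $m_x=m$. (The stated range for $x$ is nonempty because $n>q^{\lfloor m/2\rfloor}$ forces $nq^{\lceil m/2\rceil}/(q^m-1)>q^m/(q^m-1)>1$.)

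For part (b): suppose, for contradiction, that ${\mathbb C}_x={\mathbb C}_y$ for two distinct $x,y$ in the stated range. By part (a) both cosets have cardinality $m$, and since $x\neq y$ with $1\le x,y\le n-1$ they are distinct modulo $n$; hence there is an index $j$ with $1\le j\le m-1$ and $xq^j\equiv y\mod n$, and likewise $yq^{m-j}\equiv x\mod n$ with $1\le m-j\le m-1$. As $\min\{j,m-j\}\le\lfloor m/2\rfloor$, after possibly interchanging the roles of $x$ and $y$ (replacing $j$ by $m-j$) I may assume $1\le j\le\lfloor m/2\rfloor$. Write $xq^j=an+y$ with $a\ge 0$ an integer. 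If $a=0$ then $y=xq^j$ is divisible by $q$ (because $j\ge 1$), contradicting $y\not\equiv 0\mod q$. If $a\ge 1$ then $xq^j=an+y\ge n+1$; on the other hand, from $x\le\lfloor nq^{\lceil m/2\rceil}/(q^m-1)-1\rfloor<nq^{\lceil m/2\rceil}/(q^m-1)$ and $j\le\lfloor m/2\rfloor$ we obtain
\[
xq^j\ \le\ xq^{\lfloor m/2\rfloor}\ <\ \frac{nq^{\lceil m/2\rceil}}{q^m-1}\,q^{\lfloor m/2\rfloor}\ =\ \frac{nq^m}{q^m-1}\ =\ n+\frac{n}{q^m-1}\ \le\ n+1,
\]
so $xq^j\le n$, contradicting $xq^j\ge n+1$. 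Therefore ${\mathbb C}_x$ and ${\mathbb C}_y$ are disjoint.

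The entire argument is size estimation, so the only real care needed is the floor/ceiling bookkeeping: keeping track of which of $j$ and $m-j$ lies in $\{1,\dots,\lfloor m/2\rfloor\}$, and using $\lceil m/2\rceil+\lfloor m/2\rfloor=m$ to collapse the product of powers of $q$. The one genuinely essential hypothesis beyond the size bounds is $x,y\not\equiv 0\mod q$ in (b): it is precisely what rules out the degenerate case $a=0$, where $y$ would be a proper $q$-multiple of $x$ sitting in the same coset. I expect no serious obstacle, though one must take care to extract the \emph{strict} inequality $x<nq^{\lceil m/2\rceil}/(q^m-1)$ from the ``$-1$'' inside the floor in part (b), since that is what upgrades ``$xq^j<n+1$'' to ``$xq^j\le n$''.
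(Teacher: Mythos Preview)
The paper does not give its own proof of this lemma; it is quoted verbatim from \cite[Lemmas~8 and~9]{Salah:2007} and used as a black box. Your argument is correct and is in fact the standard one for these facts: for (a) you use that $m_x\mid m$ so a proper $m_x$ satisfies $m_x\le\lfloor m/2\rfloor$, and then the divisibility $n\mid x(q^{m_x}-1)$ forces $x$ too large; for (b) you reduce to an exponent $j\le\lfloor m/2\rfloor$ by symmetry, rule out $a=0$ via $q\nmid y$, and rule out $a\ge 1$ by the size bound $xq^{\lfloor m/2\rfloor}<n+1$. The bookkeeping with $\lceil m/2\rceil+\lfloor m/2\rfloor=m$ and the extraction of the strict inequality from the ``$-1$'' inside the floor are handled correctly. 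There is nothing to compare against in the present paper, but your write-up matches the proof strategy in the original reference.
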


In Theorem~\ref{FFF}, the upper bound for the number of disjoint
$q$-cosets modulo $n = q^{m} - 1$ is improved, where $m$ is an even
integer. More specifically, if $m$ is even, we show that the number
of disjoint $q$-cosets is greater than the number of disjoint cosets
presented in Lemma~\ref{F}-Item (b). This fact is useful to compute
the dimension of BCH codes whose defining set contains such
$q$-cosets.

\begin{theorem}\label{FFF}
Let $n=q^{m}-1$, where $q$ is a prime power and $m$ is even. If $x$
and $y$ are distinct integers in the range $1 \leq x,y \leq
2{q}^{m/2}$, such that $x, y\not\equiv 0\mod q$, then the $q$-cosets
of $x$ and $y$ modulo $n$ are disjoint.
\end{theorem}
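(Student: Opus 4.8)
The plan is a proof by contradiction. Suppose the $q$-cosets of $x$ and $y$ are not disjoint. Since the $q$-cosets partition $\{0,1,\dots,n-1\}$, this forces ${\mathbb{C}}_{x} = {\mathbb{C}}_{y}$, hence $y \equiv x q^{k} \pmod{n}$ for some $0 \le k \le m-1$. Multiplying by $q^{m-k}$ and using $q^{m}\equiv 1 \pmod n$ yields $x \equiv y q^{m-k}\pmod n$ as well; since the statement is symmetric in $x$ and $y$, after possibly exchanging them I may assume $0 \le k \le m/2$ --- this is the place where the hypothesis that $m$ is even is essential, so that $m/2\in{\mathbb Z}$. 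If $k=0$ then $x\equiv y\pmod n$, and since $2q^{m/2}<q^{m}-1=n$ for $q\ge 3$, distinct integers in $[1,2q^{m/2}]$ are incongruent modulo $n$, so $x=y$, contrary to hypothesis. Hence $1\le k\le m/2$.

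Next I would rewrite the congruence as an integer identity $x q^{k} = y + t n$ for some $t\in{\mathbb Z}$ and locate $t$. Since $k\ge 1$ we have $q\mid x q^{k}$ while $q\nmid y$, so $x q^{k}\neq y$ and $|t|\ge 1$. Using $q\le x q^{k}\le 2q^{m/2}\cdot q^{m/2}=2q^{m}$ together with $1\le y\le 2q^{m/2}$: if $t\le -1$ then $y = x q^{k}-tn \ge q+(q^{m}-1) > 2q^{m/2}$, impossible; if $t\ge 2$ then $x q^{k}\ge 1+2(q^{m}-1)=2q^{m}-1$, forcing $x q^{k}\in\{2q^{m}-1,\,2q^{m}\}$, and the first value is not divisible by $q$ while the second forces $x=2q^{m/2}$, contradicting $q\nmid x$. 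Therefore $t=1$, i.e.\ $x q^{k} = y + q^{m}-1$.

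It remains --- and this is the core of the argument --- to rule out $x q^{k}=y+q^{m}-1$. From $y\ge 1$ we get $x q^{k}\ge q^{m}$, so $x\ge q^{m-k}\ge q^{m/2}$; combined with $x\le 2q^{m/2}$ I may write $x=q^{m/2}+r$ with $0\le r\le q^{m/2}$. Substituting and simplifying gives $r q^{k} = (y-1)+(q^{m}-q^{m/2+k})$, hence $r\ge q^{m-k}-q^{m/2}$. If $k\le m/2-1$ then $q^{m-k}\ge q\cdot q^{m/2}$, so $r\ge (q-1)q^{m/2}\ge 2q^{m/2}>q^{m/2}$ (using $q\ge 3$), contradicting $r\le q^{m/2}$. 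Thus $k=m/2$, and the identity collapses to $r q^{m/2}=y-1$ with $0\le y-1\le 2q^{m/2}-1$, forcing $r\in\{0,1\}$. But $r=0$ gives $x=q^{m/2}$, contradicting $q\nmid x$, while $r=1$ gives $y=q^{m/2}+1=x$, contradicting $x\neq y$. Every case yields a contradiction, so the cosets are disjoint.

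I expect the third paragraph to be the main obstacle: isolating the case $x q^{k}=y+q^{m}-1$ is routine bookkeeping, but excluding it requires using the coprimality hypotheses $q\nmid x$, $q\nmid y$, the sharper size bound $2q^{m/2}$, and the evenness of $m$ simultaneously. In particular, forcing $k=m/2$ is exactly the step where the improvement over Lemma~\ref{F}(b) is obtained, and one must check that the estimates stay tight enough (notably $2q^{m/2}<q^{m/2+1}$, which needs $q\ge 3$).
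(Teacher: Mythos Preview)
Your proof is correct. It differs from the paper's approach: the paper simply quotes a result of Yue and Feng (\cite[Theorem 2.3]{Yue:2000}) stating that every $s$ with $0\le s\le 2q^{m/2}$ and $q\nmid s$ is already the minimum coset representative of $\mathbb{C}_s$, and then observes that distinct minimum representatives lie in distinct cosets. Your argument is instead a fully self-contained contradiction argument that pins down the single surviving congruence $xq^{k}=y+q^{m}-1$ with $k=m/2$ and kills it using $q\nmid x$, $q\nmid y$, and $x\neq y$. In effect you are reproving, for this specific purpose, the Yue--Feng minimum-representative theorem. The paper's route is shorter because it outsources the work; yours is longer but elementary and independent of that reference, and it makes transparent exactly where each hypothesis (evenness of $m$, $q\ge 3$ via $2q^{m/2}<q^{m/2+1}$, and the coprimality conditions) is consumed.

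A couple of minor remarks on presentation. In the $t\ge 2$ case your sentence ``forcing $xq^{k}\in\{2q^{m}-1,2q^{m}\}$'' is correct as stated (it follows from $2q^{m}-1\le xq^{k}\le 2q^{m}$ for all $t\ge 2$, not just $t=2$), but it might be worth saying this explicitly. Also, your bound $x\ge q^{m-k}$ already yields a contradiction for $k\le m/2-1$ directly (since then $q^{m-k}\ge q^{m/2+1}\ge 3q^{m/2}>2q^{m/2}\ge x$), so the detour through $r\ge q^{m-k}-q^{m/2}$ could be shortened; but what you wrote is correct either way.
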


\begin{proof}
Recall the following result shown in (\cite[Theorem 2.3]{Yue:2000}):
\emph{Let $n=q^{m}-1$, where $q$ is a prime power and $m$ is even.
Let $s^{*} = \min \{t: t \in {\mathbb{C}}_{s} \}$ be the minimum
coset representative. If $0 \leq s \leq T$, where $T:= 2q^{m/2}$,
and $q \nmid s$ then $s = s^{*}$, and $T$ is the greatest value
having this property}.

From hypothesis, the inequalities $0 \leq s \leq T:=2{q}^{m/2}$
hold. Thus, for every $0 \leq x, y \leq T:=2{q}^{m/2}$ such that $x,
y\not\equiv 0\mod q$, it follows that the minimum coset
representatives for ${\mathbb{C}}_{x}$ and ${\mathbb{C}}_{y}$ are
$x$ and $y$, respectively. Since distinct minimum coset
representatives belong to disjoint $q$-cosets, ${\mathbb{C}}_{x}$
and ${\mathbb{C}}_{y}$ are disjoint, as required. We are done.
\end{proof}

\begin{lemma}\label{JJ}
Let $n = q^{m} - 1$, where $q\geq 3$ is a prime power and $c$ be a
positive integer. If the inequality $cq + 1 < \lfloor q^{\lceil
m/2\rceil}-1\rfloor$ holds then the $c$ $q$-cosets given by
${\mathbb{C}}_{q+1}, {\mathbb{C}}_{2q+1},
{\mathbb{C}}_{3q+1},\ldots, {\mathbb{C}}_{cq+1}$ are mutually
disjoint and each of them has $m$ elements. Moreover, each of them
are disjoint of the $q$-cosets ${\mathbb{C}}_{1}, {\mathbb{C}}_{2},
\ldots , {\mathbb{C}}_{c}$.
\end{lemma}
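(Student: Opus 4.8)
The plan is to deduce Lemma~\ref{JJ} from Lemma~\ref{F} together with a short divisibility argument to separate the two blocks of cosets. First I would verify that under the hypothesis $cq+1 < \lfloor q^{\lceil m/2\rceil}-1\rfloor$, every integer $x$ of the form $jq+1$ with $1\le j\le c$, as well as every integer $x$ with $1\le x\le c$, lies in the range $1\le x\le \min\{\lfloor nq^{\lceil m/2\rceil}/(q^m-1)-1\rfloor,\, n-1\}$ appearing in Lemma~\ref{F}-(b). Since $n=q^m-1$, the quantity $nq^{\lceil m/2\rceil}/(q^m-1)$ equals exactly $q^{\lceil m/2\rceil}$, so the bound is $\min\{\lfloor q^{\lceil m/2\rceil}-1\rfloor,\, n-1\}$, and the hypothesis $cq+1<\lfloor q^{\lceil m/2\rceil}-1\rfloor$ guarantees that $cq+1$ (the largest of the integers $jq+1$) is below this threshold; the integers $1,\dots,c$ are obviously smaller still, and $n-1$ is not binding since $q^{\lceil m/2\rceil}-1\le n-1$. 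So Lemma~\ref{F}-(a) immediately gives that each ${\mathbb C}_{jq+1}$ has $m$ elements.

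Next I would check the condition $x\not\equiv 0\bmod q$ required by Lemma~\ref{F}-(b) for the relevant integers. For $x=jq+1$ we have $x\equiv 1\bmod q$, so $x\not\equiv 0\bmod q$; for $x\in\{1,\dots,c\}$ we have $1\le x\le c\le q-1$ (this uses $c<q$, which follows from $cq+1<q^{\lceil m/2\rceil}\le$ something and in any case from the looser hypothesis, so I would make this explicit), hence again $x\not\equiv 0\bmod q$. Applying Lemma~\ref{F}-(b) pairwise to the integers $q+1,2q+1,\dots,cq+1$ shows that ${\mathbb C}_{q+1},\dots,{\mathbb C}_{cq+1}$ are mutually disjoint. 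Applying it pairwise to the integers $1,2,\dots,c$ shows ${\mathbb C}_1,\dots,{\mathbb C}_c$ are mutually disjoint. Finally, applying it to a pair $(i, jq+1)$ with $1\le i\le c$ and $1\le j\le c$ — noting $i\ne jq+1$ since $i\le c<q<jq+1$ — shows that each ${\mathbb C}_i$ is disjoint from each ${\mathbb C}_{jq+1}$, which is exactly the last assertion.

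The only mildly delicate point — and the step I expect to be the main obstacle — is the bookkeeping that turns the single hypothesis $cq+1<\lfloor q^{\lceil m/2\rceil}-1\rfloor$ into the two facts actually needed: that $cq+1$ lies below the Lemma~\ref{F} threshold, and that $c<q$ (so that the small block $1,\dots,c$ avoids multiples of $q$). The first is essentially immediate from the identity $nq^{\lceil m/2\rceil}/(q^m-1)=q^{\lceil m/2\rceil}$; the second follows because $cq<cq+1<q^{\lceil m/2\rceil}$ forces $c<q^{\lceil m/2\rceil-1}$, and since $m\ge 2$ we have $\lceil m/2\rceil-1\ge 0$, with the case $m=2$ giving $c<q$ directly and larger $m$ giving an even weaker constraint that still yields $c\le q-1$ in the regime where the lemma is applied. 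Once these are in place, everything else is a direct invocation of Lemma~\ref{F}, so the proof is short.
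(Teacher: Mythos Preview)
Your approach is exactly the paper's: the proof there is the single line ``Apply Lemma~\ref{F}.'' Your more careful unpacking is correct, and your flagged concern about needing $c<q$ is well-founded --- indeed the ``moreover'' clause tacitly requires $c\le q$ (as in the paper's only application, Theorem~\ref{es}), since otherwise $q+1$ itself would lie among $1,\dots,c$ and $\mathbb{C}_{q+1}$ could not be disjoint from $\mathbb{C}_{q+1}$.
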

\begin{proof}
Apply Lemma~\ref{F}.
\end{proof}

Combining Lemma~\ref{JJ} and Theorem~\ref{JJJ} we can construct more
families of cyclic codes.

\begin{theorem}\label{JJJ}
Let $q \geq 3$ be a prime power and $n = q^{m} - 1$, with $m \geq
2$, and assume that $cq + 1 < \lfloor q^{\lceil
m/2\rceil}-1\rfloor$. Then the last elements in the $c$ cosets given
by $ {\mathbb{C}}_{q+1}, {\mathbb{C}}_{2q+1},
{\mathbb{C}}_{3q+1},\ldots, {\mathbb{C}}_{cq+1}$, form a sequence of
$c$ consecutive integers.
\end{theorem}

\begin{proof}
Consider the $c$ $q$-cosets given by $ {\mathbb{C}}_{q+1},
{\mathbb{C}}_{2q+1}, {\mathbb{C}}_{3q+1},\ldots,
{\mathbb{C}}_{cq+1}$. From Lemma~\ref{JJ}, these $q$-cosets have
cardinality $m$. Let ${\mathbb{C}}_{s}$ and ${\mathbb{C}}_{s+q}$ be
two of them. Let $u$ and $v$ be the last elements in
${\mathbb{C}}_{s}$ and ${\mathbb{C}}_{s+q}$, respectively, where $u$
and $v$ are integers considered without using the modulo $n$
operation. Let $t=m-1$; then $u = sq^{m-1} = sq^{t}$ and $v =
(s+q)q^{m-1} = (s+q)q^{t}$. Since $v = sq^{t}+ q^{t+1}$, it follows
that $v \equiv sq^{t}+1 \mod n$, i. e., $v \equiv u + 1 \mod n$.
Applying the division with remainder for $v$ and $n$ and for $u+1$
and $n$, there exist integers $a, b, r_1$ and $r_2$, where $0 \leq
r_1, r_2 < n$ such that $v = an + r_1$; $u + 1 = bn + r_2$. Since $v
\equiv u + 1\mod n$, it follows that $r_1 = r_2$. Since the
$q$-cosets $ {\mathbb{C}}_{q+1}, {\mathbb{C}}_{2q+1},
{\mathbb{C}}_{3q+1},\ldots, {\mathbb{C}}_{cq+1}$ have cardinality
$m\geq 2$, it follows that $r_1 = r_2\neq 0$. If $v^{*}=r_1 =
r_{2}$, one has $v = an + v^{*}$ and $u + 1 = bn + v^{*}$, where $1
\leq v^{*} < n$. Let $u^{*}$ be the remainder of $u$ modulo $n$.
Since $u = bn + v^{*} - 1$, where $0 \leq v^{*} - 1 < n$, it follows
that $v^{*} = u^{*} + 1$, as required. The proof is complete.
\end{proof}

\section{New quantum codes}\label{sec4}

In this section we apply some results of Section~\ref{sec3} in order
to construct CSS codes with parameters shown the Introduction. We
note that constructions of quantum codes derived from classical ones
by computing the generator or parity check matrices of the latter
codes, in several cases, does not provide families of codes but only
codes with specific parameters. This is one advantage of our
constructions presented here. Let us recall the well known CSS
quantum code construction:

\begin{lemma}\cite{Nielsen:2000,Calderbank:1998,Ketkar:2006}
Let $C_1$ and $C_2$ denote two classical linear codes with
parameters ${[n, k_1,d_1]}_{q}$ and ${[n,k_2,d_2]}_{q}$,
respectively, such that $C_2\subset C_1$. Then there exists an
${[[n, K = k_1- k_2, D]]}_{q}$ quantum code where $D = \min \{wt(c)
: c \in (C_1 \backslash C_2) \cup (\displaystyle C_{2}^{\perp}
\backslash \displaystyle C_{1}^{\perp}) \}$.
\end{lemma}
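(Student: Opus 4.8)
(Proof sketch.) The plan is to realize the quantum code explicitly as a CSS code built from the cosets of $C_{2}$ in $C_{1}$, and then to read off the three parameters $n$, $K$ and $D$ directly. Work in $\mathcal{H}=(\mathbb{C}^{q})^{\otimes n}$ with computational basis $\{|v\rangle:v\in\mathbb{F}_{q}^{n}\}$, fix a nontrivial additive character $\chi$ of $\mathbb{F}_{q}$, and for each $x\in C_{1}$ put
\[
|\psi_{x+C_{2}}\rangle \ = \ \frac{1}{\sqrt{|C_{2}|}}\sum_{c\in C_{2}}|x+c\rangle ,
\]
which depends only on the coset $x+C_{2}\in C_{1}/C_{2}$. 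First I would check that $\{|\psi_{x+C_{2}}\rangle : x+C_{2}\in C_{1}/C_{2}\}$ is orthonormal: states attached to different cosets have disjoint supports in the computational basis (hence are orthogonal), and each is visibly a unit vector. Let $Q$ be their span. Since $|C_{1}/C_{2}|=q^{k_{1}-k_{2}}$, we get $\dim Q=q^{k_{1}-k_{2}}$, so the code has length $n$ (the number of tensor factors) and encodes $K=k_{1}-k_{2}$ qudits.

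Next I would describe the action of the generalized Pauli (Weyl) errors. Write a general error as $E_{a,b}=X_{a}Z_{b}$, where $X_{a}|v\rangle=|v+a\rangle$ and $Z_{b}|v\rangle=\chi(b\cdot v)|v\rangle$ for $a,b\in\mathbb{F}_{q}^{n}$, and set $wt(E_{a,b})=|\{i:a_{i}\neq 0\ \text{or}\ b_{i}\neq 0\}|$. A direct computation gives
\[
E_{a,b}\,|\psi_{x+C_{2}}\rangle \ = \ \chi(b\cdot x)\,\frac{1}{\sqrt{|C_{2}|}}\sum_{c\in C_{2}}\chi(b\cdot c)\,|x+a+c\rangle .
\]
The crucial step is a character-orthogonality (duality) argument: if $b\notin C_{2}^{\perp}$ then $\sum_{c\in C_{2}}\chi(b\cdot c)=0$, and if $a\notin C_{1}$ then the support of the right-hand side is a coset of $C_{2}$ disjoint from $C_{1}$; in either case $E_{a,b}|\psi\rangle$ is orthogonal to every codeword. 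Hence $E_{a,b}$ preserves $Q$ exactly when $a\in C_{1}$ and $b\in C_{2}^{\perp}$, and then $E_{a,b}|\psi_{x+C_{2}}\rangle=\chi(b\cdot x)|\psi_{x+a+C_{2}}\rangle$; within this regime $E_{a,b}$ acts as a scalar on $Q$ precisely when $a\in C_{2}$ and $b\in C_{1}^{\perp}$, and otherwise (i.e. $a\in C_{1}\setminus C_{2}$ or $b\in C_{2}^{\perp}\setminus C_{1}^{\perp}$) it is a nontrivial logical operator.

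To pin down $D$ I would use the Knill--Laflamme detection criterion -- or, equivalently, the stabilizer description of $Q$ with stabilizer group $\langle X_{a},Z_{b}:a\in C_{2},\ b\in C_{1}^{\perp}\rangle$ -- namely that $Q$ detects $E$ iff $PEP$ is a scalar multiple of the projector $P$ onto $Q$, and the distance is the least weight of an undetected error. By the case analysis, the undetected errors are exactly the nontrivial logical operators, each of which has $a\in C_{1}\setminus C_{2}$ or $b\in C_{2}^{\perp}\setminus C_{1}^{\perp}$, whence $wt(E_{a,b})\geq wt(a)\geq\min\{wt(v):v\in C_{1}\setminus C_{2}\}$ or $wt(E_{a,b})\geq wt(b)\geq\min\{wt(w):w\in C_{2}^{\perp}\setminus C_{1}^{\perp}\}$; thus every undetected error has weight at least $\mu:=\min\{wt(c):c\in(C_{1}\setminus C_{2})\cup(C_{2}^{\perp}\setminus C_{1}^{\perp})\}$. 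Conversely a minimum-weight $v\in C_{1}\setminus C_{2}$ gives the pure $X$-type logical $X_{v}$ of weight $wt(v)$ and a minimum-weight $w\in C_{2}^{\perp}\setminus C_{1}^{\perp}$ gives the pure $Z$-type logical $Z_{w}$ of weight $wt(w)$, so an undetected error of weight exactly $\mu$ exists. Therefore $D=\mu$, as claimed.

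The step demanding the most care is the detectability/orthogonality claim for \emph{mixed} errors: one must show that whenever $b\notin C_{2}^{\perp}$ or $a\notin C_{1}$ the vector $E_{a,b}|\psi\rangle$ is orthogonal to all of $Q$ (not merely to the codeword it came from), so that $PE_{a,b}P$ is a genuine scalar multiple of $P$ -- possibly zero -- uniformly across the $X$-type, $Z$-type and mixed cases, and then that no logical operator can beat the pure-type bounds. Everything here rests on orthogonality of the additive characters of $\mathbb{F}_{q}^{n}$ restricted to the subgroup $C_{2}\subseteq C_{1}$ together with the duality $C_{1}^{\perp}\subseteq C_{2}^{\perp}$; by contrast, the orthonormality of the coset states and the dimension count are routine once the construction is written down.
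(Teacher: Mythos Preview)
The paper does not prove this lemma at all: it is stated as a known result with citations to \cite{Nielsen:2000,Calderbank:1998,Ketkar:2006} and used as a black box. So there is no ``paper's own proof'' to compare against.

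Your sketch is the standard CSS argument and is essentially correct. One phrasing to tighten: when $b\notin C_{2}^{\perp}$ you write that $\sum_{c\in C_{2}}\chi(b\cdot c)=0$ and conclude $E_{a,b}|\psi\rangle$ is orthogonal to every codeword. The state $E_{a,b}|\psi_{x+C_{2}}\rangle$ itself is not zero; rather, its inner product with any $|\psi_{y+C_{2}}\rangle$ reduces (when the supports overlap, i.e.\ when $a\in C_{1}$ and $y+C_{2}=(x+a)+C_{2}$) to a constant times that character sum, hence vanishes. You clearly know this, since you flag the mixed-error orthogonality as the delicate point, but in a full write-up the inner-product computation should be displayed explicitly rather than inferred from the vanishing of the sum alone. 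With that small clarification, the argument goes through and yields exactly the parameters $[[n,k_{1}-k_{2},D]]_{q}$ with $D=\min\{wt(c):c\in(C_{1}\setminus C_{2})\cup(C_{2}^{\perp}\setminus C_{1}^{\perp})\}$.
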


In order to proceed further we establish Lemma~\ref{good}. Note that
in Lemma~\ref{good} the structure and the cardinality of some
$q$-cosets are computed. These results allow us to compute the
dimension and lower bounds for the minimum distance of the
corresponding families of cyclic codes derived from such $q$-cosets.

\begin{lemma}\label{good}
Let $q \geq 3$ be a prime power and $n = q^{2} - 1$. Consider the
$(2q - 2)$ $q$-cosets modulo $n$ given by ${\mathbb{C}}_{0} =\{ 0\},
{\mathbb{C}}_{1}= \{ 1, \ q \}, {\mathbb{C}}_{2}= \{ 2, \ 2q \},
{\mathbb{C}}_{3}= \{ 3, \ 3q \}, \ldots, {\mathbb{C}}_{q-2}= \{ q-2,
\ (q-2)q \}, {\mathbb{C}}_{q+1} = \{ q+1\}, {\mathbb{C}}_{q+2}= \{
q+2, \ 1 + 2q \}, \ldots, {\mathbb{C}}_{2q-1}= \{ 2q-1, \ 1+(q-1)q
\}$. Then, these $q$-cosets are disjoint. In addition, with
exception of the $q$-cosets ${\mathbb{C}}_{0}$ and
${\mathbb{C}}_{q+1}$, that contain only one element, all of them
have exactly two elements.
\end{lemma}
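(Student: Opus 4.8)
The plan is to verify directly, via the structure of $q$-cosets modulo $n = q^2 - 1$, three things: (i) the listed sets really are the $q$-cosets of the indicated representatives; (ii) each such coset has exactly the stated cardinality; and (iii) any two of the listed cosets are disjoint. Throughout I will use the fact that a $q$-coset of $s$ modulo $q^2-1$ is $\{s, sq\}$ when $sq^2 \equiv s \pmod{q^2-1}$ and $sq \not\equiv s$, which holds for every $s$ since $q^2 \equiv 1 \pmod{q^2-1}$ automatically, so every coset has at most two elements.

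First I would dispose of the singletons. For $s = 0$ we trivially have $\mathbb{C}_0 = \{0\}$. For $s = q+1$, note $(q+1)q = q^2 + q \equiv 1 + q = q+1 \pmod{q^2-1}$, so $\mathbb{C}_{q+1} = \{q+1\}$. Next, for the cosets $\mathbb{C}_1, \mathbb{C}_2, \ldots, \mathbb{C}_{q-2}$: for $1 \le j \le q-2$ we have $jq < q^2 - 1$ (since $jq \le (q-2)q = q^2 - 2q < q^2 - 1$), so $[jq]_n = jq$, and $jq \ne j$ because $q \ge 3$; hence $\mathbb{C}_j = \{j, jq\}$ with two elements. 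For the cosets $\mathbb{C}_{q+2}, \ldots, \mathbb{C}_{2q-1}$: writing the representative as $q+1+i$ with $1 \le i \le q-2$, compute $(q+1+i)q = q^2 + q + iq \equiv 1 + q + iq = (i+1)q + 1 \pmod{q^2-1}$; since $2 \le i+1 \le q-1$ we have $1 \le (i+1)q + 1 \le (q-1)q + 1 < q^2 - 1$, so this is the reduced value, it differs from $q+1+i$, and the coset has exactly the two elements $\{q+1+i,\ (i+1)q+1\}$. This matches the listed form $\mathbb{C}_{q+2} = \{q+2, 1+2q\}$, $\ldots$, $\mathbb{C}_{2q-1} = \{2q-1, 1+(q-1)q\}$.

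It remains to prove disjointness, which is the main point requiring care. The elements appearing in all these cosets are: $0$; $\{j : 1 \le j \le q-2\}$ together with $\{jq : 1 \le j \le q-2\}$; $q+1$; and $\{q+1+i : 1 \le i \le q-2\}$ together with $\{(i+1)q+1 : 1 \le i \le q-2\}$. I would argue that these are pairwise distinct integers in $[0, n-1]$. The first group $\{0\} \cup \{1,\ldots,q-2\} \cup \{q+1\} \cup \{q+3,\ldots,2q-1\}$ consists of distinct integers all below $2q$; the "multiples of $q$" group $\{q, 2q, \ldots, (q-2)q\}$ consists of distinct multiples of $q$ lying strictly between $q$ and $q^2-2q$; and the group $\{2q+1, 3q+1, \ldots, (q-1)q+1\}$ consists of distinct integers congruent to $1$ modulo $q$. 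No element of the first group is $\ge 2q$ while all elements of the last two groups are $\ge q$; one checks the small overlap region $q \le \cdot < 2q$ contains from the first group only $q+1$, which is neither a multiple of $q$ nor $\equiv 1 \pmod q$ (as $q \ge 3$). The multiples-of-$q$ group and the $\equiv 1 \bmod q$ group are disjoint since an integer cannot be both $\equiv 0$ and $\equiv 1$ modulo $q$. Hence all listed elements are distinct, so the $2q-2$ cosets, being determined by their element sets, are mutually disjoint. Assembling (i)--(iii) completes the proof; the only mildly delicate step is the bookkeeping in the last paragraph confirming no collisions across the three residue patterns, but $q \ge 3$ makes every potential overlap fail by a clean congruence or size argument.
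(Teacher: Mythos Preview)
Your verification of the coset contents and cardinalities is correct and matches the paper's argument. The difference is in the disjointness step. The paper simply observes that once each coset is listed explicitly, the labels $0,1,\ldots,q-2,q+1,\ldots,2q-1$ are visibly the smallest elements of their respective cosets (since $j<jq$ and $q+1+i<(i+1)q+1$ for the relevant ranges), and distinct smallest representatives force disjoint cosets by the partition property; this is a one-line argument.

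Your alternative route of enumerating all elements and sorting them by size and residue class is legitimate, but the bookkeeping as written contains slips. You list the first group as $\{0\}\cup\{1,\ldots,q-2\}\cup\{q+1\}\cup\{q+3,\ldots,2q-1\}$, omitting $q+2$; you then claim the region $[q,2q)$ contains from the first group ``only $q+1$'', which is false since it contains all of $q+1,\ldots,2q-1$; and you assert $q+1\not\equiv 1\pmod q$, which is plainly false. These errors happen not to destroy the conclusion: the third group $\{2q+1,\ldots,(q-1)q+1\}$ lies entirely at or above $2q+1$ and so never meets the first group at all, and the only second-group element in $[q,2q)$ is $q$ itself, which is absent from the first group. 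But as written your argument asserts false intermediate statements. Either repair the case analysis along the lines just indicated, or adopt the paper's cleaner smallest-representative observation.
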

\begin{proof}
It is easy to show that the inequalities $q^{2}-1 > 1+(q-1)q$ and
$q^{2}-1 > (q-2)q$ are true. It is clear that $q$-cosets
${\mathbb{C}}_{0}$ and ${\mathbb{C}}_{q+1}$ contain only one
element. Next we show that the remaining $q$-cosets have cardinality
two. If $l = lq$, where $2\leq l\leq q-1$ is an integer, since $l =
lq < q^{2}-1$ we obtain $q=1$, a contradiction since $q$ is a prime
power. Assume that $q + l = 1 + lq$, where $2\leq l\leq q-1$. Then
one has $l - 1 = q(l - 1)$. Since $q + l = 1 + lq < q^{2}-1$ and $l
- 1 \neq 0$, one obtains $q = 1$, a contradiction.

Since these $q$-cosets have different smallest representatives, it
follows that they are mutually disjoint.
\end{proof}

In Theorem~\ref{good1}, we construct new families of good nonbinary
CSS codes of length $q^2 - 1$.

\begin{theorem}\label{good1}
Let $q \geq 3$ be a prime power and let $n = q^{2} -1$. Then, there
exist quantum codes with parameters ${[[q^{2}-1, q^{2} - 4q + 5, d
\geq q]]}_{q}$.
\end{theorem}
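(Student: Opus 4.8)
The plan is to exhibit an explicit pair of nested cyclic codes $C_2 \subset C_1$ of length $n = q^2-1$ over $\mathbb{F}_q$ and to apply the CSS construction (the Lemma stated above). I would take $C_1$ to be the cyclic code whose defining set is the single coset $\mathcal{Z}_1 = \mathbb{C}_{q+1}$, and $C_2$ to be the cyclic code whose defining set is
\[
\mathcal{Z}_2 = \mathbb{C}_{q+1} \cup \mathbb{C}_1 \cup \mathbb{C}_2 \cup \cdots \cup \mathbb{C}_{q-2} \cup \mathbb{C}_{q+2} \cup \mathbb{C}_{q+3} \cup \cdots \cup \mathbb{C}_{2q-1}.
\]
Since $\mathcal{Z}_1 \subseteq \mathcal{Z}_2$, we have $C_2 \subseteq C_1$. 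By Lemma~\ref{good}, all the cosets appearing here are pairwise disjoint; the coset $\mathbb{C}_{q+1}$ has one element, and each of the other $2q-4$ cosets has exactly two elements. Hence $|\mathcal{Z}_1| = 1$ and $|\mathcal{Z}_2| = 1 + 2(2q-4) = 4q-7$, so $\dim C_1 = n - 1 = q^2-2$ and $\dim C_2 = n - (4q-7) = q^2 - 4q + 6$. This already gives the quantum dimension $K = \dim C_1 - \dim C_2 = (q^2-2) - (q^2-4q+6) = 4q - 8$; wait — the claimed dimension is $q^2 - 4q+5$, so in fact I should take $C_1$ to have defining set $\mathcal{Z}_1 = \emptyset$ (so $C_1 = \mathbb{F}_q^n$, $\dim C_1 = n = q^2-1$) and $C_2$ with defining set $\mathcal{Z}_2$ as above but this time $\dim C_2 = n - |\mathcal{Z}_2|$; choosing $\mathcal{Z}_2 = \mathbb{C}_0 \cup \mathbb{C}_1 \cup \cdots \cup \mathbb{C}_{q-2} \cup \mathbb{C}_{q+2} \cup \cdots \cup \mathbb{C}_{2q-1}$ gives $|\mathcal{Z}_2| = 1 + (q-2)\cdot 2 + (q-2)\cdot 2 = 4q-7$, hence $\dim C_2 = q^2 - 1 - (4q-7) = q^2 - 4q + 6$; one more coset must be dropped or the count adjusted so that $\dim C_1 - \dim C_2 = q^2-4q+5$, i.e. $|\mathcal{Z}_1| + (q^2 - 4q+5) = |\mathcal{Z}_2|$. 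The natural choice is $|\mathcal{Z}_1| = 1$ (namely $\mathbb{C}_0$) and $|\mathcal{Z}_2| = 4q-6$, obtained by appending one further single-element or two-element coset as dictated by a careful bookkeeping of the $2q-2$ cosets listed in Lemma~\ref{good}.

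Once the codes are fixed, the second step is the distance bound. For $C_2 \setminus C_1$ (or the relevant set-difference in the CSS minimum-distance formula), I would invoke the BCH bound (Theorem~\ref{BCH}): the defining set $\mathcal{Z}_2$ is engineered so that the zeros of $C_2$ include a run of $q-1$ consecutive powers of $\alpha$. Indeed $\mathbb{C}_1, \mathbb{C}_2, \ldots, \mathbb{C}_{q-2}$ already supply the consecutive exponents $1, 2, \ldots, q-2$; together with the exponent $q+1$ lying in a neighbouring coset and with the complementary/structural relations among these cosets, one assembles $q-1$ consecutive zeros, forcing the minimum distance of $C_2$ to be at least $q$. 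The symmetric argument, using Lemma~\ref{AAA}/Proposition~\ref{EEEE} on complementary cosets to control $C_2^{\perp}\setminus C_1^{\perp}$, shows that the dual side also has minimum weight at least $q$; combining the two yields $D \geq q$.

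The main obstacle, and the step I would spend the most care on, is the precise accounting that simultaneously pins down $\dim C_1 - \dim C_2 = q^2 - 4q + 5$ \emph{and} guarantees a length-$(q-1)$ string of consecutive exponents in the defining set of $C_2$ (and its dual) — these two requirements constrain which of the $2q-2$ cosets of Lemma~\ref{good} go into $\mathcal{Z}_1$ versus $\mathcal{Z}_2$, and getting both to hold at once is the delicate part. A secondary point to verify is the self-orthogonality-type containment needed to apply the CSS lemma cleanly, i.e. that with the chosen defining sets $C_2 \subseteq C_1$ genuinely holds and that the dual codes have the claimed defining sets $\{0,1,\ldots,n-1\}\setminus(-\mathcal{Z}_i)$; this is where Proposition~\ref{EEEE} and the complementary-coset machinery of Proposition~\ref{EEE} do the work. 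Everything else — the cardinality and disjointness of the cosets, the translation from $|\mathcal{Z}_i|$ to $\dim C_i$, and the BCH bound itself — is routine given the earlier results.
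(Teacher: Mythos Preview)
Your proposal has a fundamental gap: the CSS distance $D = \min\{\operatorname{wt}(c): c \in (C_1\setminus C_2)\cup(C_2^\perp\setminus C_1^\perp)\}$ requires you to bound the minimum weight of $C_1 \setminus C_2$, not of $C_2$. In every candidate pair you write down, $C_1$ has a defining set of size $0$ or $1$, so $d(C_1)\leq 2$; since you also arrange $d(C_2)\geq q$, any minimum-weight word of $C_1$ automatically lies outside $C_2$, so $C_1\setminus C_2$ contains words of weight at most $2$ and the CSS distance collapses to at most $2$, not $q$. The dimension count is already signalling that the setup is inverted: you keep obtaining $K$ linear in $q$ (values like $4q-8$ or $4q-7$), whereas the target $q^{2}-4q+5$ is quadratic in $q$. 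That forces $k_1$ and $k_2$ to be far apart, i.e.\ $C_2$ must be a \emph{low}-dimensional code with a \emph{large} defining set, not a high-dimensional code with a small one.

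The paper runs the construction the other way around. It takes $C_1$ to be the BCH code with defining set $\mathbb{C}_0\cup\mathbb{C}_1\cup\cdots\cup\mathbb{C}_{q-2}$, so that $k_1=q^{2}-2q+2$ and, by the BCH bound on the consecutive run $0,1,\ldots,q-2$, $d(C_1)\geq q$. It then takes $C_2$ to be the cyclic code whose defining set is \emph{everything except} $\mathbb{C}_{q+1}\cup\cdots\cup\mathbb{C}_{2q-1}$, so that $k_2=2q-3$ and $C_2\subset C_1$. The code $C_2^{\perp}$ is equivalent to the code generated by $(x^{n}-1)/g_2(x)$, whose defining set is exactly $\{q+1,\ldots,2q-1\}$, another run of $q-1$ consecutive integers, giving $d(C_2^{\perp})\geq q$. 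Lemma~\ref{good} is used only to count coset sizes for the dimension computation; no complementary-coset or self-orthogonality machinery (Propositions~\ref{EEE}, \ref{EEEE}, Lemma~\ref{AAA}) is invoked.
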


\begin{proof}
Let $C_1$ be the classical BCH code generated by $g_1(x)$, that is
the product of the minimal polynomials $ M^{(0)}(x)M^{(1)}(x)\ldots
M^{(q-2)}(x)$, and let $C_2$ be the cyclic code generated by
$g_2(x)$, that is the product of the minimal polynomials $
M^{(i)}(x)$, where $M^{(i)}(x)$ are the minimal polynomials of
${\alpha}^i$ such that $i \notin \{ q + 1, q + 2, \ldots, 2q - 1
\}$. We know the minimum distance of the code $C_1$ is greater than
or equal to $q$ since its defining set contains the sequence of
$q-1$ consecutive integers given by $0, 1, \ldots, q-2$. From the
BCH bound, $C_1$ has minimum distance $d_1 \geq q$. Similarly, the
defining set of $C$ generated by the polynomial $h(x)= \frac{x^n -
1}{g_2(x)}$ contains the sequence of $q-1$ consecutive integers
given by $q+1, q+2, \ldots, 2q-1$ so, from the BCH bound, $C$ also
has minimum distance greater than or equal to $q$. Since the code
$\displaystyle C_{2}^{\perp}$ is equivalent to $C$, then it follows
that $\displaystyle C_{2}^{\perp}$ also has minimum distance greater
than or equal to $q$. Therefore, the resulting CSS code has minimum
distance $d\geq q$.

We know the defining set ${\mathcal Z}_1$ of $C_1$ has $q -1$
disjoint $q$-cosets. Moreover, from Lemma~\ref{good}, all of them
(except coset ${\mathbb{C}}_{0}$) have two elements. Thus, $C_1$ has
dimension $k_1 = q^{2}- 2q + 2$. Similarly, the dimension of $C_2$
equals $k_2 = 2q - 3$, so $k_1 - k_2 = q^{2} -4q + 5$. Applying the
CSS construction to the codes $C_1$ and $C_2$, we can get a CSS code
with parameters ${[[q^{2}-1, q^{2} - 4q + 5, d \geq q]]}_{q}$. The
proof is complete.
\end{proof}

We illustrate Theorem~\ref{good1} by means of a graphical scheme:

\begin{eqnarray*}
\underbrace{ \overbrace{ {\mathbb{C}}_0 {\mathbb{C}}_1 \
{\mathbb{C}}_2 \ \ldots \ {\mathbb{C}}_{q-2}}^{C_1}}_{C_2} \\
\overbrace{{\mathbb{C}}_{q+1} \ {\mathbb{C}}_{q+2} \ldots \
{\mathbb{C}}_{2q-1}}^{C}
\ \underbrace{{\mathbb{C}}_{r_1} \ldots {\mathbb{C}}_{r_n}}_{C_2}.\\
\end{eqnarray*}

The union of the $q$-cosets ${\mathbb{C}}_0, {\mathbb{C}}_1, \ldots,
{\mathbb{C}}_{q-2}$ is the defining set of code $C_1$; the union of
the $q$-cosets ${\mathbb{C}}_0, {\mathbb{C}}_1, \ldots,
{\mathbb{C}}_{q-2}, {\mathbb{C}}_{r_1}, \ldots, {\mathbb{C}}_{r_j}$
is the defining set of $C_2$, where ${\mathbb{C}}_{r_1}, \ldots,
{\mathbb{C}}_{r_j}$ are the remaining $q$-cosets in order to
complete the set of all $q$-cosets; and the union of the $q$-cosets
${\mathbb{C}}_{q+1}, {\mathbb{C}}_{q+2}, \ldots, {\mathbb{C}}_{2q -
1}$ is the defining set of $C$.

Proceeding similarly as in the proof of Theorem~\ref{good1}, we can
also generate new families of quantum codes by means of Corollary~\ref{good2}:

\begin{corollary}\label{good2}
There exist quantum codes with parameters $[[q^{2}-1, q^{2} - 4c +
5,$ $d \geq c]]_{q}$, where $c < q$, and $q \geq 3$ is a prime
power.
\end{corollary}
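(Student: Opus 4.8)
The plan is to mimic the proof of Theorem~\ref{good1} almost verbatim, replacing the single block of $q-1$ consecutive integers by a block of $c-1$ consecutive integers, and carefully relocating the ``complementary'' block so that the two defining sets behave well under the CSS construction. First I would fix $2\le c<q$ and work with $n=q^2-1$. The relevant $q$-cosets are ${\mathbb{C}}_0=\{0\}$, ${\mathbb{C}}_1,\ldots,{\mathbb{C}}_{c-2}$ (so that the set $\{0,1,\ldots,c-2\}$ of $c-1$ consecutive integers is covered), together with a second family ${\mathbb{C}}_{q+1}=\{q+1\}$, ${\mathbb{C}}_{q+2},\ldots,{\mathbb{C}}_{q+c-1}$, which by the computation in Lemma~\ref{good} consists of the cosets $\{q+j,\,1+jq\}$ for $2\le j\le c-1$ and covers the $c-1$ consecutive integers $q+1,q+2,\ldots,q+c-1$. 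Since $c<q$, all these cosets lie among the $(2q-2)$ disjoint cosets listed in Lemma~\ref{good}, hence they are pairwise disjoint, each of the ``interior'' ones has cardinality $2$, and ${\mathbb{C}}_0,{\mathbb{C}}_{q+1}$ are singletons.

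Next I would set up the two codes exactly as before: let $C_1$ be the BCH code with defining set ${\mathcal Z}_1={\mathbb{C}}_0\cup{\mathbb{C}}_1\cup\cdots\cup{\mathbb{C}}_{c-2}$, and let $C_2$ be the cyclic code whose defining set is the complement (within the full set of cosets) of $\{{\mathbb{C}}_{q+1},{\mathbb{C}}_{q+2},\ldots,{\mathbb{C}}_{q+c-1}\}$; equivalently $C_2$ is generated by the product of all $M^{(i)}(x)$ with $i\notin\{q+1,\ldots,q+c-1\}$. Then ${\mathcal Z}_1\subseteq{\mathcal Z}_2$ because the block $\{q+1,\ldots,q+c-1\}$ is disjoint from $\{0,1,\ldots,c-2\}$ (as $c<q$), so $C_2\subseteq C_1$. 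For the distances: ${\mathcal Z}_1$ contains the $c-1$ consecutive integers $0,1,\ldots,c-2$, so by the BCH bound $d_1\ge c$; and if $h(x)=(x^n-1)/g_2(x)$ then the defining set of the code $C$ generated by $h(x)$ is precisely $\{q+1,\ldots,q+c-1\}$, again $c-1$ consecutive integers, so $C$ — hence $C_2^{\perp}$, which is equivalent to $C$ — has minimum distance $\ge c$. Therefore the CSS code has $d\ge c$.

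For the dimension count I would argue: ${\mathcal Z}_1$ is the union of $c-1$ disjoint cosets, one singleton and $c-2$ of size $2$, so $\deg g_1 = 1+2(c-2)=2c-3$ and $k_1=n-(2c-3)=q^2-1-2c+3=q^2-2c+2$. The code $C_2$ is obtained from $C_1$ by adjoining to its defining set the $c-1$ extra cosets ${\mathbb{C}}_{q+1},\ldots,{\mathbb{C}}_{q+c-1}$ (again one singleton, $c-2$ of size $2$), so $k_2=k_1-(2c-3)=q^2-2c+2-(2c-3)=q^2-4c+5$. Wait — I must instead phrase it as in Theorem~\ref{good1}: $k_2$ equals the number of coordinates in the cosets ${\mathbb{C}}_{q+1},\ldots,{\mathbb{C}}_{q+c-1}$, which is $1+2(c-2)=2c-3$, and then $K=k_1-k_2=(q^2-2c+2)-(2c-3)=q^2-4c+5$. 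Applying the CSS lemma to $C_2\subset C_1$ yields a code with parameters $[[q^2-1,\,q^2-4c+5,\,d\ge c]]_q$, as claimed.

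The only real obstacle is bookkeeping on cardinalities and the chain condition $C_2\subset C_1$: one must check that the two blocks $\{0,\ldots,c-2\}$ and $\{q+1,\ldots,q+c-1\}$ really lie among the disjoint cosets of Lemma~\ref{good} and are disjoint from each other, which uses exactly the hypothesis $c<q$ (for $c=q$ one would have ${\mathbb{C}}_{q-1}$ entering and the ranges touching, which is precisely why Theorem~\ref{good1} is stated separately with the extra coset ${\mathbb{C}}_0$ included on both sides). Everything else is a direct transcription of the proof of Theorem~\ref{good1} with $q$ replaced by $c$ in the block lengths, and with the dimension formulas re-evaluated; no new idea is needed.
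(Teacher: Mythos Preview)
Your proposal is correct and follows exactly the paper's approach: you take $C_1=\langle M^{(0)}(x)\cdots M^{(c-2)}(x)\rangle$ and $C_2=\langle\prod_{i\notin\{q+1,\ldots,q+c-1\}}M^{(i)}(x)\rangle$, verify $C_2\subset C_1$ and the two BCH-bound arguments, and read off the dimensions from Lemma~\ref{good} just as in Theorem~\ref{good1}. Your closing aside about why the case $c=q$ is treated separately is slightly off --- for $c=q$ the defining set of $C_1$ still ends at ${\mathbb{C}}_{q-2}$, so the construction is literally that of Theorem~\ref{good1} --- but this is tangential and does not affect the proof.
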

\begin{proof}
Let $C_{1}$ and $C_{2}$, respectively, be the BCH codes generated by
the product of the minimal polynomials $C_{1}=\langle
M^{(0)}(x)M^{(1)}(x)M^{(2)}(x)\ldots M^{(c-2)}(x)\rangle$ and
$C_{2}=\langle \displaystyle \prod_{i} M^{(i)}(x)\rangle$, where
$M^{(i)}(x)$ are all minimal polynomials of ${\alpha}^i$ such that
$i \notin \{ q + 1, q + 2, \ldots, q+(c-1) \}$. Proceeding similarly
as in the proof of Theorem~\ref{good1}, new families of quantum
codes with good parameters ${[[q^{2}-1, q^{2} - 4c + 5, d \geq
c]]}_{q}$ are constructed. The proof is complete.
\end{proof}

\begin{example}
Applying Corollary~\ref{good2}, one can get quantum codes with
parameters ${[[15, 9, d \geq 3]]}_{4}$, ${[[15, 5, d \geq 4]]}_{4}$,
${[[24, 18, d \geq 3]]}_{5}$ and $[[24, 14$, $d \geq 4]]_{5}$.
\end{example}

Since we improved the upper bound for the number of disjoint
$q$-cosets (see Theorem~\ref{FFF}), we are able to construct new
families of CSS codes. Theorem~\ref{good3}, the main result of this
subsection, asserts the existence of such codes. Note that
Theorem~\ref{good1} is a particular case of Theorem~\ref{good3}.

\begin{theorem}\label{good3}
Let $n = q^{m} -1$, where $q \geq 3$ is a prime power and $m\geq 2$
is an even integer. Then there exist quantum codes whose parameters
are given by ${[[n, n- 2m(c-2)-m/2-1, d \geq c]]}_{q}$, where $2\leq
c\leq q$.
\end{theorem}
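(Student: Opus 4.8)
The plan is to mimic the construction in Theorem~\ref{good1}, but now exploit the improved count of disjoint $q$-cosets furnished by Theorem~\ref{FFF} for even $m$. Concretely, I would choose $C_1$ to be the BCH code of length $n$ whose defining set is ${\mathcal Z}_1 = {\mathbb{C}}_0 \cup {\mathbb{C}}_1 \cup \cdots \cup {\mathbb{C}}_{c-2}$, so that ${\mathcal Z}_1$ contains the run of $c-1$ consecutive integers $0,1,\ldots,c-2$ and hence, by the BCH bound (Theorem~\ref{BCH}), $d_1 \geq c$. For $C_2$ I would take the cyclic code whose defining set is the complement (inside the full set of $q$-cosets) of the $q$-cosets ${\mathbb{C}}_{q^{m/2}+1}, {\mathbb{C}}_{q^{m/2}+2}, \ldots, {\mathbb{C}}_{q^{m/2}+(c-1)}$; equivalently, the code $C$ with defining set ${\mathcal Z}_C = {\mathbb{C}}_{q^{m/2}+1}\cup\cdots\cup{\mathbb{C}}_{q^{m/2}+(c-1)}$ satisfies $h(x) = (x^n-1)/g_2(x)$ and its defining set contains the $c-1$ consecutive integers $q^{m/2}+1,\ldots,q^{m/2}+(c-1)$, giving $d(C) \geq c$ by the BCH bound, whence $d(C_2^{\perp}) \geq c$ since $C_2^{\perp}$ is monomially equivalent to $C$. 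Since the representatives $1,\ldots,c-2$ and $q^{m/2}+1,\ldots,q^{m/2}+(c-1)$ are all in the range $1 \leq x \leq 2q^{m/2}$ (here $c \leq q$ is used) and none of them is divisible by $q$ (again $c \leq q$ for the first block; for the second block $q^{m/2}+j$ with $1\le j\le c-1\le q-1$ is not a multiple of $q$), Theorem~\ref{FFF} guarantees that all the relevant $q$-cosets are pairwise disjoint, so $C_2 \subset C_1$ and the dimension bookkeeping is unambiguous.

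Next I would compute the dimensions. By Lemma~\ref{F}(a) (or directly, since $q^{m/2}+j \le 2q^{m/2} \le nq^{m/2}/(q^m-1)\cdot(\text{something})$ — the point is these representatives are small enough), each of the $q$-cosets ${\mathbb{C}}_1,\ldots,{\mathbb{C}}_{c-2}$ and ${\mathbb{C}}_{q^{m/2}+1},\ldots,{\mathbb{C}}_{q^{m/2}+(c-1)}$ has full cardinality $m$, except that ${\mathbb{C}}_{q^{m/2}+j}$ will have reduced cardinality precisely when $q^{m/2}+j \equiv 0 \pmod{q^{m/2}+1}$ inside the relevant modular arithmetic — more carefully, one checks as in Lemma~\ref{good} that ${\mathbb{C}}_{q^{m/2}+1}$ is the exceptional coset of size $m/2$ (since $q^{m/2}(q^{m/2}+1) = q^m + q^{m/2} \equiv 1 + q^{m/2} \pmod{q^m-1}$, so $q^{m/2}\cdot(q^{m/2}+1)\equiv q^{m/2}+1$, forcing $m_{q^{m/2}+1} \mid m/2$, and in fact it equals $m/2$). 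Thus ${\mathcal Z}_1$ contributes $1 + m(c-2)$ to $\deg g_1$, so $k_1 = n - m(c-2) - 1$; and the deleted set for $C_2$, namely ${\mathbb{C}}_{q^{m/2}+1}\cup\cdots\cup{\mathbb{C}}_{q^{m/2}+(c-1)}$, has $m/2 + m(c-2)$ elements, so removing these cosets from the defining set of the whole space lowers $k$ accordingly: $k_2 = (n - k_1^{\text{complement}})$ — more directly, $k_1 - k_2$ equals the number of $q$-cosets that lie in $C_2$'s defining set but not in $C_1$'s, which is exactly $\{{\mathbb{C}}_{q^{m/2}+1},\ldots,{\mathbb{C}}_{q^{m/2}+(c-1)}\}$, of total size $m/2 + m(c-2) = 2m(c-2) - m(c-2) + m/2$; recombining, $K = k_1 - k_2 = n - [\,n - (m/2 + m(c-2))\,] - \ldots$. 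Carrying the arithmetic through yields $K = n - 2m(c-2) - m/2 - 1$, matching the claimed parameters; I would present this count cleanly rather than in the above scratch form.

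Finally I would invoke the CSS construction (the Lemma of \cite{Nielsen:2000,Calderbank:1998,Ketkar:2006} quoted just above): from $C_2 \subset C_1$ we obtain an $[[n, k_1 - k_2, D]]_q$ code with $D \geq \min\{d_1, d(C_2^{\perp})\} \geq c$, which is the desired $[[n, n - 2m(c-2) - m/2 - 1, d \geq c]]_q$ code. The main obstacle I anticipate is the dimension computation: one must pin down exactly which of the cosets ${\mathbb{C}}_{q^{m/2}+j}$ are degenerate (have cardinality less than $m$) and verify it is precisely ${\mathbb{C}}_{q^{m/2}+1}$, contributing the lone $m/2$ term, while all others in both blocks have full size $m$. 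This is where Theorem~\ref{FFF} (disjointness up to $2q^{m/2}$) together with a short direct congruence check — analogous to the $m=2$ computation in Lemma~\ref{good} showing ${\mathbb{C}}_{q+1}$ is the singleton — does the work; I would isolate that congruence argument as the technical heart of the proof and keep the rest as routine bookkeeping and an appeal to the BCH bound.
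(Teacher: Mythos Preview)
Your proposal is essentially the paper's own proof: same choice of $C_1$ (defining set ${\mathbb C}_0\cup\cdots\cup{\mathbb C}_{c-2}$), same $C_2$ (complement of ${\mathbb C}_{q^{m/2}+1}\cup\cdots\cup{\mathbb C}_{q^{m/2}+c-1}$), same BCH-bound argument for $d_1$ and $d(C_2^\perp)$, same appeal to Theorem~\ref{FFF} for disjointness, and the same CSS conclusion.

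One small point worth tightening: Lemma~\ref{F}(a) only guarantees $|{\mathbb C}_x|=m$ for $1\le x\le q^{m/2}$ (since here $n=q^m-1$ makes the bound $nq^{\lceil m/2\rceil}/(q^m-1)=q^{m/2}$), so it does not by itself cover the second block ${\mathbb C}_{q^{m/2}+j}$, $1\le j\le c-1$. The paper handles this by quoting the result of Yue--Hu~\cite{YueHu:1996}: $|{\mathbb C}_s|=m$ for all $0<s<2q^{m/2}$ except $|{\mathbb C}_{q^{m/2}+1}|=m/2$. Your direct congruence check that $q^{m/2}(q^{m/2}+1)\equiv q^{m/2}+1\pmod n$ correctly shows $|{\mathbb C}_{q^{m/2}+1}|\le m/2$, but you still need the companion fact that $|{\mathbb C}_{q^{m/2}+j}|=m$ for $2\le j\le c-1$; either cite the Yue--Hu result as the paper does, or supply the short direct argument you allude to. Once that is in place your dimension count $k_1=n-m(c-2)-1$, $k_2=m(c-2)+m/2$, $k_1-k_2=n-2m(c-2)-m/2-1$ goes through cleanly.
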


\begin{proof}
Recall the following result shown in \cite{YueHu:1996}\label{Y}:
\emph{$\mid {\mathbb{C}}_{s}\mid = m$ for all $0 < s < T:= 2q^{m/2}$
except $\mid {\mathbb{C}}_{q^{m/2}+1}\mid = m/2$ when $m$ is even}.

Let $C_1$ be the BCH code generated by the product of the minimal
polynomials $M^{(0)}(x)M^{(1)}(x)\ldots M^{(c-2)}(x)$ and $C_2$ be
the cyclic code generated by $g_2(x) = \displaystyle\prod_{i}
M^{(i)}(x)$, where $M^{(i)}(x)$ are the minimal polynomials of
${\alpha}^i$ such that $i \notin \{ q^{m/2}+ 1, q^{m/2}+2, \ldots,
q^{m/2}+c-1 \}$. From the BCH bound, the minimum distance $d_1$ of
$C_1$ satisfies $d_1\geq c$ since its defining set contains the
sequence $0, 1, \ldots, c-2$ of consecutive integers. Similarly, the
minimum distance of $\displaystyle C_{2}^{\perp}$ is also greater
than or equal to $c$, because $\displaystyle C_{2}^{\perp}$ is
equivalent to code $C =\langle (x^n - 1)/g_2 (x)\rangle$ and $C$
contains the sequence $q^{m/2}+ 1, q^{m/2}+2, \ldots, q^{m/2}+c-1$
of consecutive integers. The resulting CSS code has minimum distance
$d\geq c$. From construction we have $C_2 \subset C_1$. The
dimension of $C_1$ is given by $k_1 = n - m(c - 2) - 1$. Applying
Theorem~\ref{FFF}, since $q^{m/2}+c-1 < T:= 2q^{m/2}$ and because
the corresponding $q$-cosets are mutually disjoint, it follows that
$C_2$ has dimension $k_2 = m(c-2)+ m/2$; so $k_1 - k_2 = n -
2m(c-2)-m/2 -1$. Then there exists an ${[[n, n - 2m(c-2) -m/2 -1, d
\geq c]]}_{q}$ quantum code, as required.
\end{proof}

Applying Theorem~\ref{es}, given in the following, one can also
construct good quantum codes:

\begin{theorem}\label{es}
Let $n = q^{m} -1$, where $q \geq 3$ is a prime power and $m\geq 2$.
Then there exist quantum codes with parameters ${[[n, n - m(2c -
3)-1, d \geq c]]}_{q}$, where $2\leq c\leq q$ and $(c-1)q + 1 <
\lfloor q^{\lceil m/2\rceil}-1\rfloor$.
\end{theorem}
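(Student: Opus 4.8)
The plan is to mimic the proof of Theorem~\ref{good1}, but this time exploiting the $q$-cosets singled out in Lemma~\ref{JJ} and Theorem~\ref{JJJ} rather than the ad hoc cosets of Lemma~\ref{good}. First I would fix $c$ with $2\le c\le q$ and assume $(c-1)q+1<\lfloor q^{\lceil m/2\rceil}-1\rfloor$, so that Lemma~\ref{JJ} applies with this value of $c$ (or, more precisely, with $c-1$ in place of $c$). Let $C_1$ be the BCH code with generator polynomial $M^{(0)}(x)M^{(1)}(x)\cdots M^{(c-2)}(x)$; its defining set contains the $c-1$ consecutive integers $0,1,\ldots,c-2$, so by the BCH bound $d_1\ge c$. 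By Corollary~\ref{DD}, the cosets ${\mathbb{C}}_0,{\mathbb{C}}_1,\ldots,{\mathbb{C}}_{c-2}$ are distinct; ${\mathbb{C}}_0=\{0\}$ has one element and, since $c-2\le q-2$, Lemma~\ref{F}(a) (or Lemma~\ref{JJ}) gives each of ${\mathbb{C}}_1,\ldots,{\mathbb{C}}_{c-2}$ cardinality $m$. Hence $\dim C_1=k_1=n-m(c-2)-1$.

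Next I would build the second code. By Theorem~\ref{JJJ}, the last elements of the $c-1$ cosets ${\mathbb{C}}_{q+1},{\mathbb{C}}_{2q+1},\ldots,{\mathbb{C}}_{(c-1)q+1}$ form $c-1$ consecutive integers, and by Lemma~\ref{JJ} these cosets are mutually disjoint, each of size $m$, and disjoint from ${\mathbb{C}}_1,\ldots,{\mathbb{C}}_{c-2}$. Let $C_2$ be the cyclic code generated by $g_2(x)=\prod_i M^{(i)}(x)$, the product over all $i$ with $i\notin{\mathbb{C}}_{q+1}\cup{\mathbb{C}}_{2q+1}\cup\cdots\cup{\mathbb{C}}_{(c-1)q+1}$. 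Then the defining set of $C=\langle (x^n-1)/g_2(x)\rangle$ is exactly $\bigcup_{j=1}^{c-1}{\mathbb{C}}_{jq+1}$, which contains the $c-1$ consecutive integers produced by Theorem~\ref{JJJ}; the BCH bound gives $d(C)\ge c$, and since $C_2^{\perp}$ is equivalent to $C$ we get $d(C_2^{\perp})\ge c$. Because $\{q+1,2q+1,\ldots,(c-1)q+1\}$ is disjoint from $\{0,1,\ldots,c-2\}$ (here $c\le q$ is used), the defining set of $C_2$ contains that of $C_1$, so $C_2\subset C_1$. Counting, $\dim C_2=k_2=n-m(c-1)$, whence $k_1-k_2=m(c-1)-m(c-2)-1$... wait, recomputing: $\dim C_2 = n - (\text{number of excluded exponents}) = n - m(c-1)$? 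No: $C_2$ \emph{omits} those $c-1$ cosets from its generator, so its dimension equals the number of those exponents, $k_2=m(c-1)$; then $k_1-k_2=n-m(c-2)-1-m(c-1)=n-m(2c-3)-1$.

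Finally I would invoke the CSS construction (the lemma from \cite{Nielsen:2000,Calderbank:1998,Ketkar:2006}) with these $C_2\subset C_1$: it yields an $[[n,k_1-k_2,D]]_q=[[n,n-m(2c-3)-1,D]]_q$ code with $D=\min\{\mathrm{wt}(c):c\in(C_1\setminus C_2)\cup(C_2^{\perp}\setminus C_1^{\perp})\}\ge\min\{d_1,d(C_2^{\perp})\}\ge c$, completing the proof. The one point that needs care — and the place where a slip is easiest — is the bookkeeping of the dimensions and the verification that the two families of cosets $\{{\mathbb{C}}_0,\ldots,{\mathbb{C}}_{c-2}\}$ and $\{{\mathbb{C}}_{q+1},\ldots,{\mathbb{C}}_{(c-1)q+1}\}$ are genuinely disjoint and of the stated cardinalities; this is precisely what Lemma~\ref{JJ} (via Lemma~\ref{F}) and Theorem~\ref{JJJ} are engineered to supply, so the argument is essentially a matter of assembling those lemmas correctly. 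I would also double-check the hypothesis $(c-1)q+1<\lfloor q^{\lceil m/2\rceil}-1\rfloor$ is exactly the form needed to apply Lemma~\ref{JJ} with parameter $c-1$, which it is.
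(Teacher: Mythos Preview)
Your proposal is correct and follows essentially the same approach as the paper: the paper's own proof simply defines $C_1=\langle M^{(0)}(x)\cdots M^{(c-2)}(x)\rangle$ and $C_2=\langle\prod_i M^{(i)}(x)\rangle$ with $i\notin\{q+1,2q+1,\ldots,(c-1)q+1\}$, invokes Lemma~\ref{JJ} and Theorem~\ref{JJJ}, and refers back to the argument of Theorem~\ref{good1}. One small correction: your appeal to Corollary~\ref{DD} only shows that \emph{consecutive} integers lie in different cosets, not that ${\mathbb C}_1,\ldots,{\mathbb C}_{c-2}$ are pairwise disjoint; the clean reference for that (and for the cardinality $m$) is Lemma~\ref{F}, which you do cite later, so the argument stands once that citation is moved forward.
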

\begin{proof}
Let $C_1$ be the cyclic code generated by
$M^{(0)}(x)M^{(1)}(x)\ldots M^{(c-2)}(x)$, $2\leq c\leq q$, and
$C_2$ generated by $\displaystyle\prod_{i} M^{(i)}(x)$, where $ i
\notin \{ q + 1, 2q + 1, \ldots, (c - 1)q + 1 \}$. Applying
Lemma~\ref{JJ} and Theorem~\ref{JJJ}, and proceeding similarly as in
the proof of Theorem~\ref{good1} the result follows.
\end{proof}

\section{New convolutional codes}\label{sec5}

In this section, we apply the cyclic codes constructed in
Section~\ref{sec4} to derive new families of convolutional codes
with great free distance.

The theory of convolutional codes is well investigated in the
literature
\cite{Forney:1970,Lee:1976,Piret:1988,Johannesson:1999,Rosenthal:1999,York:1999,Rosenthal:2001,Schmale:2006,Gluesing:2006,LaGuardia:2012,LaGuardia:2013I,LaGuardia:2013II,LaGuardia:2014A}.
We assume the reader is familiar with the theory of convolutional
codes (see \cite{Johannesson:1999} for more details). Recall that a
polynomial encoder matrix $G(D)=(g_{ij}) \in { \mathbb
F}_{q}{[D]}^{k \times n}$ is called \emph{basic} if $G(D)$ has a
polynomial right inverse. A basic generator matrix is called
\emph{reduced} (or minimal \cite{Rosenthal:2001,Luerssen:2008}) if
the overall constraint length $\gamma =\displaystyle\sum_{i=1}^{k}
{\gamma}_i$, where ${\gamma}_i = {\max}_{1\leq j \leq n} \{ \deg
g_{ij} \}$, has the smallest value among all basic generator
matrices. In this case, the smallest overall constraint length
$\gamma$ is called the \emph{degree} of the code.

\begin{definition}\cite{Klapp:2007}
A rate $k/n$ convolutional code $C$ with parameters $(n, k, \gamma ;
\mu,$ $d_{f} {)}_{q}$ is a submodule of ${ \mathbb F}_q {[D]}^{n}$
generated by a reduced basic matrix $G(D)=(g_{ij}) \in { \mathbb
F}_q {[D]}^{k \times n}$, i.e., $C = \{ {\bf u}(D)G(D) | {\bf
u}(D)\in { \mathbb F}_{q} {[D]}^{k} \}$, where $n$ is the code
length, $k$ is the code dimension, $\gamma
=\displaystyle\sum_{i=1}^{k} {\gamma}_i$ is the \emph{degree}, $\mu
= {\max}_{1\leq i\leq k}\{{\gamma}_i\}$ is the \emph{memory} and
$d_{f}=$wt$(C)=\min \{wt({\bf v}(D)) : {\bf v}(D) \in C, {\bf
v}(D)\neq 0 \}$ is the \emph{free distance} of the code.
\end{definition}

Recall that the Euclidean inner product of two $n$-tuples ${\bf
u}(D) = {\sum}_i {\bf u}_i D^i$ and ${\bf v}(D) = {\sum}_j {\bf u}_j
D^j$ in ${\mathbb F}_q {[D]}^{n}$ is defined as $\langle {\bf
u}(D)\mid {\bf v}(D)\rangle = {\sum}_i {\bf u}_i \cdot {\bf v}_i$.
If $C$ is a convolutional code then we define its Euclidean dual
code as $C^{\perp }=\{ {\bf u}(D) \in {\mathbb F}_q {[D]}^{n} :
\langle {\bf u}(D)\mid {\bf v}(D)\rangle = 0$ for all ${\bf v}(D)\in
C\}$.

Let $C$ an ${[n, k, d]}_{q}$ block code with parity check matrix
$H$. We split $H$ into $\mu+1$ disjoint submatrices $H_i$ such that
$$H = \left[
\begin{array}{c}
H_0\\
H_1\\
\vdots\\
H_{\mu}\\
\end{array}
\right],$$ where each $H_i$ has $n$ columns, obtaining the
polynomial matrix $$G(D) = {\tilde H}_0 + {\tilde H}_1 D + {\tilde
H}_2 D^2 + \ldots + {\tilde H}_{\mu} D^{\mu},$$ where the matrices
${\tilde H}_i$, for all $1\leq i\leq \mu$, are derived from the
respective matrices $H_i$ by adding zero-rows at the bottom in such
a way that the matrix ${\tilde H}_i$ has $\kappa$ rows in total,
where $\kappa$ is the maximal number of rows among the matrices
$H_i$. The matrix $G(D)$ generates a convolutional code. Note that
$\mu$ is the memory of the resulting convolutional code generated by
$G(D)$. Let $\operatorname{rk}A$ denote the rank of the matrix $A$.

\begin{theorem}\cite[Theorem 3]{Aly:2007}\label{A}
Let $C \subseteq { \mathbb F}_{q}^{n}$ be an ${[n, k, d]}_{q}$
linear code with parity check $H \in { \mathbb F}_{q}^{(n-k)\times
n}$, partitioned into submatrices $H_0, H_1, \ldots, H_{\mu}$ as
above such that $\kappa = \operatorname{rk}H_0$ and
$\operatorname{rk}H_i \leq \kappa$ for $1 \leq i\leq \mu$. Let
$G(D)$ be the polynomial matrix given above.
Then the following conditions hold:\\
(a) The matrix $G(D)$ is a reduced basic generator matrix;\\
(b) Let $V$ be the convolutional code generated by $G(D)$ and
$V^{\perp}$ its Euclidean dual code. If $d_f$ and $d_f^{\perp}$
denote the free distances of $V$ and $V^{\perp}$, respectively,
$d_i$ denote the minimum distance of the code $C_i = \{ {\bf v}\in
{\mathbb F}_{q}^n : {\bf v} {\tilde H}_i^t =0 \}$ and $d^{\perp}$ is
the minimum distance of $C^{\perp}$, then one has $\min \{ d_0 +
d_{\mu} , d \} \leq d_f^{\perp} \leq  d$ and $d_f \geq d^{\perp}$.
\end{theorem}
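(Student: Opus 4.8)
The plan is to prove (a) and (b) in turn, using repeatedly that the parity--check matrix $H$ (the vertical stack of $H_0,H_1,\dots,H_\mu$) has full row rank $n-k$, and that the hypothesis $\kappa=\operatorname{rk}H_0$ forces $H_0$ to have exactly $\kappa$ rows, all linearly independent, so that $\tilde{H}_0=H_0$ and $G(0)=H_0$.

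For part (a) I would first show $G(D)$ is basic by verifying that $\operatorname{rk}G(D_0)=\kappa$ for every $D_0$ in $\overline{\mathbb F}_{q}$ (the algebraic closure of $\mathbb F_{q}$); by a standard fact about polynomial matrices over the principal ideal domain $\mathbb F_{q}[D]$, this is equivalent to $G(D)$ admitting a polynomial right inverse, since it says the $\kappa\times\kappa$ minors of $G(D)$ have no common root and hence a unit gcd. For $D_0=0$ this is just $\operatorname{rk}H_0=\kappa$. For $D_0\neq 0$, suppose $c\,G(D_0)=0$ with $c\in\overline{\mathbb F}_{q}^{\,\kappa}$; writing $c\,\tilde{H}_i=c^{(i)}H_i$, where $c^{(i)}$ is the truncation of $c$ to its first $\operatorname{rk}(H_i)$ coordinates (the rows that $H_i$ actually carries, these being independent as a subset of the rows of $H$), the relation becomes $\sum_{i=0}^{\mu}D_0^{\,i}\,(c^{(i)}H_i)=0$. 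Since $D_0\neq 0$ this is a linear dependence among the rows of $H$, so $c^{(i)}=0$ for all $i$, and because $\operatorname{rk}H_0=\kappa$ this gives $c^{(0)}=c=0$. Hence $G(D)$ is basic. Next I would show it is reduced by checking that the high-order coefficient matrix $[G(D)]_{h}$ has rank $\kappa$: the $j$th row degree is $\nu_j=\max\{i:H_i\text{ has a }j\text{th row}\}$, and the coefficient of $D^{\nu_j}$ in the $j$th row of $G(D)$ is exactly the $j$th row of $H_{\nu_j}$. These $\kappa$ vectors are pairwise distinct rows of $H$, hence linearly independent because $H$ has full row rank; so $[G(D)]_{h}$ has rank $\kappa$ and $G(D)$ is reduced.

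For part (b) I would establish three inequalities. First, $d_f\ge d^{\perp}$: each coefficient $v_t$ of a codeword $v(D)=u(D)G(D)$ equals $\sum_i u_{t-i}\tilde{H}_i$, a vector in the row space of $H$, i.e.\ in $C^{\perp}$; a nonzero $v(D)$ has some $v_{t_0}\neq 0$, so $wt(v(D))\ge wt(v_{t_0})\ge d^{\perp}$. Second, $d_f^{\perp}\le d$: if $c_0\in C$ has weight $d$ then $H_0c_0^{t}=0$, so for every $u(D)$ we get $\langle c_0,\,u(D)G(D)\rangle=c_0\cdot(u_0H_0)=0$; thus the constant vector $c_0$ is a codeword of $V^{\perp}$ of weight $d$. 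Third, $d_f^{\perp}\ge\min\{d_0+d_\mu,d\}$: a nonzero $v(D)\in V^{\perp}$ satisfies the sliding parity checks $\sum_{i=0}^{\mu}\tilde{H}_i\,v_{s+i}^{t}=0$; let $t_1$ and $t_2$ be the least and greatest indices with $v_{t_1},v_{t_2}\neq 0$. The check at $s=t_2$ isolates $\tilde{H}_0v_{t_2}^{t}=0$, so $v_{t_2}\in C_0$ and $wt(v_{t_2})\ge d_0$; the check at $s=t_1-\mu$ isolates $\tilde{H}_\mu v_{t_1}^{t}=0$, so $v_{t_1}\in C_\mu$ and $wt(v_{t_1})\ge d_\mu$. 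If $t_1\neq t_2$ these are distinct coefficients, so $wt(v(D))\ge d_0+d_\mu$; if $t_1=t_2$ the whole family of checks forces $Hv_{t_1}^{t}=0$, so $v_{t_1}\in C$ and $wt(v(D))\ge d$. Combining with the upper bound gives $\min\{d_0+d_\mu,d\}\le d_f^{\perp}\le d$.

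The steps I expect to be the main obstacle are the reduced-ness claim in (a) and the lower bound $d_f^{\perp}\ge\min\{d_0+d_\mu,d\}$ in (b): both demand careful bookkeeping of which rows of $H$ survive into, respectively, the leading coefficient matrix and the boundary syndrome equations, and the free-distance bound additionally requires being precise about the convolutional-code (module) structure of $V^{\perp}$ so that the sliding parity check is legitimately available at the shift $s=t_1-\mu$ (equivalently, one works with the standard convolutional dual rather than a single truncated inner product). Once this bookkeeping is in place, every remaining step is a short linear-algebra argument resting on the full row rank of $H$.
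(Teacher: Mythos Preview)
The paper does not prove this theorem at all: it is quoted as \cite[Theorem~3]{Aly:2007} and used purely as a black box in the constructions of Section~\ref{sec5}. There is therefore no in-paper argument to compare your proposal against.

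For what it is worth, your outline is a faithful reconstruction of the standard proof one finds in the Aly--Grassl--Klappenecker--R\"otteler--Sarvepalli line of work. The argument for (a) is correct as written: full row rank of $H$ forces $G(D_0)$ to have rank $\kappa$ at every $D_0$, and the leading-coefficient rows you extract are pairwise distinct rows of $H$, hence independent. For (b), the inequalities $d_f\ge d^{\perp}$ and $d_f^{\perp}\le d$ are straightforward. The point you single out as the obstacle is exactly the delicate one: with the naive polynomial inner product $\sum_i u_i\cdot v_i$ recalled in the paper, the orthogonal complement $V^{\perp}$ is \emph{not} an $\mathbb{F}_q[D]$-submodule, and the sliding check at the negative shift $s=t_1-\mu$ is not automatically available (indeed, a constant vector $c_0\in C_0\setminus C$ already lies in that naive orthogonal complement). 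The original reference works with the sequence-space/Laurent dual, for which the parity constraints hold at every integer shift; once one adopts that convention your boundary argument at $s=t_2$ and $s=t_1-\mu$ is correct and yields $d_f^{\perp}\ge\min\{d_0+d_\mu,d\}$.
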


In Theorem~\ref{mainconv}, the first result of this section, we
construct new convolutional codes:

\begin{theorem}\label{mainconv}
Assume that $q\geq 4$ is a prime power and $n= q^{2} -1$. Then there
exists a convolutional code with parameters $(n, n-2q+1, 2q - 3; 1,
d_{free} \geq 2q +1)_{q}$.
\end{theorem}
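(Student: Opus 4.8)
The plan is to build the convolutional code from a classical cyclic (BCH) code over $\mathbb{F}_q$ of length $n=q^2-1$ by the splitting procedure described before Theorem~\ref{A}, and then to read off its parameters via Theorem~\ref{A}. First I would take $C$ to be the cyclic code whose defining set is the union of the $q$-cosets $\mathbb{C}_{q+1},\mathbb{C}_{q+2},\ldots,\mathbb{C}_{2q-1}$, exactly the code called $C$ in the proof of Theorem~\ref{good1}. By Lemma~\ref{good} these cosets are disjoint, $\mathbb{C}_{q+1}=\{q+1\}$ has one element and the remaining $q-2$ cosets have two elements each, so the dimension count gives $n-k=1+2(q-2)=2q-3$, i.e. $k=n-2q+1$. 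Its defining set contains the $q-1$ consecutive integers $q+1,q+2,\ldots,2q-1$, so by the BCH bound (Theorem~\ref{BCH}) the minimum distance of $C$ is at least $q$; what actually matters for the free distance estimate, via part (b) of Theorem~\ref{A}, is $d^{\perp}$, the minimum distance of $C^{\perp}$. As noted in the proof of Theorem~\ref{good1}, $C^{\perp}$ is equivalent to the cyclic code $\langle (x^n-1)/g(x)\rangle$ whose defining set contains a long run of consecutive integers; one checks that this run has length at least $2q$, so $d^{\perp}\geq 2q+1$ by the BCH bound.

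Next I would specify the partition of the parity check matrix $H$ of $C$, which has $2q-3$ rows. To obtain memory $\mu=1$ and degree $\gamma=2q-3$ as claimed, I would split $H$ into two blocks $H_0,H_1$ with $H_0$ consisting of a single row (so $\kappa=\operatorname{rk}H_0=1$) and $H_1$ consisting of the remaining $2q-4$ rows — wait, that gives $\operatorname{rk}H_1>\kappa$, which violates the hypothesis of Theorem~\ref{A}. So instead I would arrange $H$ so that $\kappa=\operatorname{rk}H_0$ is maximal; concretely, put roughly half the rows in $H_0$ and half in $H_1$. The degree of the resulting code equals $\kappa$ (the number of rows of the $\kappa\times n$ polynomial matrix $G(D)$ after zero-padding), and the memory is $\mu=1$. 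To hit $\gamma=2q-3$ one needs $\kappa=2q-3$, which forces $H_0$ itself to have all $2q-3$ rows and $H_1$ to be empty beyond zero rows; but then $\mu=0$. The reconciliation must be that the rows are split $H_0$ with all $2q-3$ rows used in a way that after the reduction $G(D)=\tilde H_0+\tilde H_1 D$ still has $k$ inputs — I would re-examine the exact row bookkeeping so that $\operatorname{rk}G(D)=n-k=2q-3$ (the number of parity checks) becomes the number of encoder rows, giving degree $2q-3$, while a single nonzero row survives in $H_1$ to give memory $1$. So the clean choice is: $H_1$ has exactly one (linearly independent) row, $H_0$ has the other $2q-4$ rows, and $\kappa=\max\{2q-4,1\}=2q-4$; but the claimed degree is $2q-3$, not $2q-4$. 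Thus the correct split is $H_0$ with $2q-3$ rows of full rank $\kappa=2q-3$ and $H_1$ with one further row which must already be a combination of the others — impossible unless $n-k>2q-3$.

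Given this tension, the honest plan is: apply Theorem~\ref{A} with the partition $H=[H_0^{t}\,|\,H_1^{t}]^{t}$ where $H_1$ is a single row not in the row space of $H_0$, so that $n-k=\kappa+1$ with $\kappa=\operatorname{rk}H_0$. For this to yield $\kappa=2q-3$ and hence degree $\gamma=2q-3$, the parity check matrix must have $2q-2$ rows, i.e. $n-k=2q-2$ and $k=n-2q+2$. I would therefore use for $C$ the slightly larger defining set $\mathbb{C}_{q+1}\cup\cdots\cup\mathbb{C}_{2q-1}$ together with $\mathbb{C}_0=\{0\}$, which adds one more parity check, giving $n-k=2q-2$; but then $k=n-2q+2\neq n-2q+1$. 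The resolution I will adopt in the write-up is to take $H$ with $2q-3$ rows, put $\kappa=2q-4$ rows in $H_0$ and one row in $H_1$, obtaining a code of degree $\max\{2q-4,1\}=2q-4$ and memory $1$ — matching the \emph{second} theorem in the introduction's list, $(n,n-2q,2q-4;1,d_{free}\geq 2q+1)_q$ — and then remark that moving one additional row up into a third block, or equivalently taking a one-dimension-smaller code, adjusts the degree to $2q-3$ and the dimension to $n-2q+1$. The key steps, in order, are: (1) identify the cyclic code $C$ and compute $\dim C=n-2q+1$ from Lemma~\ref{good}; (2) bound $d^{\perp}\geq 2q+1$ using the consecutive run $q+1,\ldots,2q-1$ of length $q-1$ in the defining set of $C$, together with the fact that $C^{\perp}$ (being equivalent to $\langle(x^n-1)/g(x)\rangle$) has a run of $2q$ consecutive powers of $\alpha$ as zeros, and apply the BCH bound; (3) choose the partition of $H$ so that $\kappa=\operatorname{rk}H_0=2q-3$ and exactly one nonzero row sits in $H_1$, so that Theorem~\ref{A}(a) gives a reduced basic $G(D)$ of memory $\mu=1$ and degree $\gamma=\kappa=2q-3$; (4) invoke Theorem~\ref{A}(b) to conclude $d_{free}\geq d^{\perp}\geq 2q+1$. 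The main obstacle, and the part I expect to require the most care, is step (3): getting the row-counts in the partition to simultaneously produce memory $1$, degree $2q-3$, and the stated dimension $n-2q+1$ is delicate, because the degree of the convolutional code equals the number $\kappa$ of rows of $G(D)$, while the dimension of the \emph{block} code fixes $n-k$ to be the total number of independent parity checks; making these compatible is exactly the bookkeeping that the rest of the proof must pin down.
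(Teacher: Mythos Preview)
Your proposal has a genuine gap rooted in two misunderstandings of the construction in Theorem~\ref{A}.

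First, you repeatedly identify the degree of the convolutional code with $\kappa$, the number of rows of $G(D)$. That is not correct. For a unit-memory code $G(D)=\tilde H_{0}+\tilde H_{1}D$ with $\kappa$ rows, the row degrees are $\gamma_i\in\{0,1\}$, and $\gamma_i=1$ exactly for the rows in which $\tilde H_{1}$ is nonzero; hence the degree equals the number of nonzero rows of $\tilde H_{1}$, i.e.\ $\operatorname{rk}H_1$, not $\kappa$. This is why your bookkeeping never closes: you are trying to force $\kappa=2q-3$ when in fact $\kappa$ should equal the \emph{dimension} of the generated code $V$, while the degree is controlled separately by $H_1$.

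Second, and more seriously, you pick the wrong block code and the wrong free-distance bound. The paper does not take the small code with defining set $\mathbb C_{q+1}\cup\cdots\cup\mathbb C_{2q-1}$ (whose codimension is $2q-3$, so $k=n-2q+3$, not $n-2q+1$ as you write). Instead it takes the large BCH code $C$ whose defining set is $\mathbb C_{0}\cup\mathbb C_{1}\cup\cdots\cup\mathbb C_{q-1}\cup\mathbb C_{q+1}\cup\cdots\cup\mathbb C_{2q-1}$; this set contains the $2q$ consecutive integers $0,1,\ldots,2q-1$ (note $q\in\mathbb C_1$), so $d(C)\geq 2q+1$. The parity check matrix of $C$ is split as $H_{C_0}$ (rows for $\mathbb C_{0},\ldots,\mathbb C_{q-1}$, rank $2q-1$) and $H_{C_1}$ (rows for $\mathbb C_{q+1},\ldots,\mathbb C_{2q-1}$, rank $2q-3$). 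Then $G(D)=\tilde H_{C_0}+\tilde H_{C_1}D$ generates $V$ with $\kappa=\dim V=2q-1$, memory $1$, and degree $\operatorname{rk}H_{C_1}=2q-3$. The code with the stated parameters is the \emph{dual} $V^{\perp}$, which has dimension $n-(2q-1)=n-2q+1$ and the same degree $2q-3$; its free distance is bounded below via the inequality $\min\{d_0+d_1,d\}\leq d_f^{\perp}$ from Theorem~\ref{A}(b), using $d_0\geq q+2$, $d_1\geq q$, and $d\geq 2q+1$. Your plan uses the other inequality $d_f\geq d^{\perp}$ and never passes to the dual, which is why the dimension and degree cannot be made to match.
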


\begin{proof}
The $q$-coset ${\mathbb{C}}_{q-1}$ has two elements and it is
disjoint from all $q$-cosets given in Lemma~\ref{good}. Let $C$ be
the BCH code generated by $g(x)$, that is the product of the minimal
polynomials $ {M}^{(0)}(x) {M}^{(1)}(x) \cdots
{M}^{(q-2)}(x){M}^{(q-1)}(x) {M}^{(q+1)}(x)\cdots$
${M}^{(2q-1)}(x)$. A parity check matrix of $C$ is obtained from the
matrix
\begin{eqnarray*}
H = \left[
\begin{array}{ccccc}
1 & {{\alpha}^{(0)}} & {{\alpha}^{(0)}} & \cdots & {{\alpha}^{(0)}} \\
1 & {{\alpha}^{(1)}} & {{\alpha}^{(2)}} & \cdots & {{\alpha}^{(n-1)}}\\
\vdots & \vdots & \vdots & \vdots & \vdots\\
1 & {{\alpha}^{(q-1)}} & \cdots & \cdots & {{\alpha}^{(n-1)(q-1)}}\\
1 & {{\alpha}^{q+1}} & \cdots & \cdots & {{\alpha}^{(n-1)(q+1)}}\\
\vdots & \vdots & \vdots & \vdots & \vdots\\
1 & {{\alpha}^{(2q-1)}} & \cdots & \cdots & {{\alpha}^{(n-1)(2q-1)}}\\
\end{array}
\right],
\end{eqnarray*}
by expanding each entry as a column vector with respect to some
${\mathbb F}_{q}-$basis ${\mathcal B}$ of ${\mathbb F}_{q^{2}}$.
Note that since ${{\operatorname{ord}}_{n}}(q)=2$, each entry
contains $2$ rows. This new matrix $H_{C}$ is a parity check matrix
of $C$ and it has $4q-2$ rows. Since $C$ has dimension $k=n- \deg
g(x)$, i.e., $k = n - 4q + 4$, it follows that $H_{C}$ has rank
$4q-4$; $C$ has parameters $[n, n - 4q + 4, d\geq 2q+1]_{q}$.

We next assume that $C_0$ is the BCH code generated by ${M}^{(0)}(x)
{M}^{(1)}(x)\cdots$ ${M}^{(q-2)}(x){M}^{(q-1)}(x)$. $C_0$ has a
parity check matrix derived from the matrix
\begin{eqnarray*}
H_0 = \left[
\begin{array}{ccccc}
1 & {{\alpha}^{(0)}} & {{\alpha}^{(0)}} & \cdots & {{\alpha}^{(0)}}\\
1 & {{\alpha}^{(1)}} & {{\alpha}^{(2)}} & \cdots & {{\alpha}^{(n-1)}}\\
\vdots & \vdots & \vdots & \vdots & \vdots\\
1 & {{\alpha}^{(q-1)}} & \cdots & \cdots & {{\alpha}^{(n-1)(q-1)}}\\
\end{array}
\right],
\end{eqnarray*}
by expanding each entry as a $2$-column vector with respect to
${\mathcal B}$. This new matrix is denoted by $H_{C_0}$ (note that
$H_{C_0}$ is also a submatrix of $H_{C}$). The matrix $H_{C_0}$ has
rank $2q-1$ and the code $C_0$ has parameters $[n, n-2q +1, d_0 \geq
q+2]_{q}$.

Finally, let $C_1$ be the BCH code generated by ${M}^{(q+1)}(x)
{M}^{(q+2)}(x)\cdots$ \\ ${M}^{(2q-1)}(x)$. $C_1$ has parameters
${[n, n-2q + 3, d_1\geq q]}_{q}$. A parity check matrix $H_{C_1}$ of
$C_1$ is given by expanding each entry of the matrix
\begin{eqnarray*}
H_1 = \left[
\begin{array}{ccccc}
1 & {{\alpha}^{(q+1)}} & \cdots & \cdots & {{\alpha}^{(n-1)(q+1)}}\\
1 & {{\alpha}^{(q+2)}} & \cdots & \cdots & {{\alpha}^{(n-1)(q+2)}}\\
\vdots & \vdots & \vdots & \vdots & \vdots\\
1 & {{\alpha}^{(2q-1)}} & \cdots & \cdots & {{\alpha}^{(n-1)(2q-1)}}\\
\end{array}
\right],
\end{eqnarray*}
with respect to ${\mathcal B}$. Since $C_1$ has dimension $n-2q +
3$, $H_{C_{1}}$ has rank $2q - 3$ ($H_{C_{1}}$ is also a submatrix
of $H_{C}$).

We next construct a convolutional code $V$ generated by the matrix
$G(D)=\tilde H_{C_{0}}+ \tilde H_{C_{1}} D$, where $\tilde H_{C_{0}}
= H_{C_{0}}$ and $\tilde H_{C_{1}}$ is obtained from $H_{C_{1}}$ by
adding zero-rows at the bottom such that $\tilde H_{C_{1}}$ has the
number of rows of $H_{C_{0}}$ in total. According to Theorem~\ref{A}
Item (a), $G(D)$ is reduced and basic. We know that
$\operatorname{rk}H_{C_{0}}\geq\operatorname{rk}H_{C_{1}}$. By
construction, $V$ is a unit-memory convolutional code of dimension
$2q-1$ and degree ${\delta}_{V} = 2q-3$. The Euclidean dual
$V^{\perp}$ of the convolutional code $V$ has dimension $n-2q+1$ and
degree $2q-3$. From Theorem~\ref{A} Item (b), the free distance
$d_{f}^{\perp}$ of $V^{\perp}$ is bounded by $\min \{ d_0 + d_1 , d
\} \leq d_{f}^{\perp} \leq  d$, so $d_{f}^{\perp} \geq 2q +1$.
Hence, the convolutional code $V^{\perp}$ has parameters $(n,
n-2q+1, 2q - 3; 1, d_{f}^{\perp} \geq 2q +1)_{q}$. Now the result
follows.
\end{proof}

\begin{theorem}\label{mainconvB}
Let $q\geq 4$ be a prime power and $n= q^{2} -1$. Then there exists
an $(n, n-2q, 2q - 4; 1, d_{free} \geq 2q +1)_{q}$ convolutional
code.
\end{theorem}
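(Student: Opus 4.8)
The plan is to mimic closely the proof of Theorem~\ref{mainconv}, adjusting the constituent cyclic codes so that the resulting convolutional code has the slightly smaller degree $2q-4$ and dimension $n-2q$. First I would choose the ``big'' code $C$ to be the BCH code whose defining set is $\{0,1,\ldots,q-2\}\cup\{q+1,q+2,\ldots,2q-1\}$, exactly as before; by Lemma~\ref{good} all these $q$-cosets are disjoint, the two exceptional ones ($\mathbb{C}_0$ and $\mathbb{C}_{q+1}$) being singletons and the remaining $2q-4$ cosets having two elements each, so $\deg g(x)=2(2q-4)+2=4q-6$ and $C$ has parameters $[n,n-4q+6,d\geq 2q+1]_q$ (the lower bound on $d$ coming from the BCH bound applied to the run $0,1,\ldots,q-2$ together with the run $q+1,\ldots,2q-1$, each of length $q-1$, and the cyclic shift argument used in Theorem~\ref{mainconv}). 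Here I am dropping the coset $\mathbb{C}_{q-1}$ that appeared in Theorem~\ref{mainconv}, which is what lowers every parameter by the appropriate amount.

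Next I would split the parity check matrix $H_C$ of $C$ (obtained from the $\alpha^i$ with $i\in\{0,\ldots,q-2\}\cup\{q+1,\ldots,2q-1\}$ by expanding each entry into a $2$-row block over an $\mathbb{F}_q$-basis $\mathcal{B}$ of $\mathbb{F}_{q^2}$, since $\operatorname{ord}_n(q)=2$) into $H_0$ and $H_1$. I would take $C_0$ to be the BCH code with defining set $\{0,1,\ldots,q-2\}$, whose parity check matrix $H_{C_0}$ has rank equal to the number of elements in $\bigcup_{i=0}^{q-2}\mathbb{C}_i$, namely $1+2(q-2)=2q-3$, so $C_0$ has parameters $[n,n-2q+3,d_0\geq q]_q$; and $C_1$ the BCH code with defining set $\{q+1,q+2,\ldots,2q-1\}$, whose matrix $H_{C_1}$ has rank $1+2(q-2)=2q-3$ as well, with $C_1$ having parameters $[n,n-2q+3,d_1\geq q]_q$. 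Since these two defining sets are disjoint and together give the defining set of $C$, the matrices $H_{C_0}$ and $H_{C_1}$ are both submatrices of $H_C$, and $\operatorname{rk}H_{C_0}=\operatorname{rk}H_{C_1}=2q-3=:\kappa$.

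Then, following the construction preceding Theorem~\ref{A}, I would form $G(D)=\tilde H_{C_0}+\tilde H_{C_1}D$ with $\tilde H_{C_0}=H_{C_0}$ and $\tilde H_{C_1}$ obtained from $H_{C_1}$ by padding with zero rows so that both blocks have $\kappa=2q-3$ rows; since $\operatorname{rk}H_{C_0}=\kappa$ and $\operatorname{rk}H_{C_1}\leq\kappa$, Theorem~\ref{A}(a) gives that $G(D)$ is reduced and basic. The convolutional code $V$ generated by $G(D)$ has length $n$, dimension $2q-3$, memory $\mu=1$ and degree $\gamma=\operatorname{rk}H_{C_1}=2q-3$; hence $V^\perp$ has dimension $n-2q+3$... wait — that is not the target dimension. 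The discrepancy signals that I must instead take $C_1$ with a defining set giving rank one larger, i.e. keep $\{q+1,\ldots,2q-1\}$ but choose the splitting so that $H_0$ has rank $2q-2$ and $H_1$ has rank $2q-4$: concretely let $C_0$ have defining set $\{0,1,\ldots,q-2\}\cup\{q+1\}$ (rank $1+2(q-2)+1=2q-2$, $d_0\geq q$ still, by the run $0,\ldots,q-2$) and $C_1$ have defining set $\{q+2,q+3,\ldots,2q-1\}$ (this is $q-2$ consecutive integers spread over $q-3$ two-element cosets, rank $2(q-3)=2q-6$... ). The bookkeeping here is the real obstacle: I must pick the partition of the defining set of $C$ into $H_0,H_1$ so that $\kappa=\operatorname{rk}H_0=2q-2$, $\operatorname{rk}H_1=2q-4$, both $C_0$ and $C_1$ retain minimum distance $\geq q$ (each containing a run of $q-1$ consecutive integers, using the last-element shift trick of Theorem~\ref{mainconv} if the run wraps), and $d_0+d_1\geq 2q+1$ so that Theorem~\ref{A}(b) yields $\min\{d_0+d_1,d\}\leq d_f^\perp\leq d$ with $d_f^\perp\geq 2q+1$. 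Once that partition is exhibited, $V$ has degree $2q-4$ and dimension $2q-2$, so $V^\perp$ has parameters $(n,n-2q,2q-4;1,d_{free}\geq 2q+1)_q$, as required, and the result follows.
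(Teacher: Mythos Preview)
Your proposal has a genuine gap: dropping $\mathbb{C}_{q-1}$ from the ``big'' code $C$ is fatal, not merely a bookkeeping nuisance. The integer $q-1$ lies in none of the cosets $\mathbb{C}_{0},\ldots,\mathbb{C}_{q-2},\mathbb{C}_{q+1},\ldots,\mathbb{C}_{2q-1}$, so the defining set of your $C$ has a hole at $q-1$; the longest consecutive run it contains is $q,q+1,\ldots,2q+1$ (length $q+2$), and the BCH bound only gives $d\geq q+3$, not $d\geq 2q+1$. There is no ``cyclic shift argument'' in Theorem~\ref{mainconv} to appeal to---the inclusion of $\mathbb{C}_{q-1}$ there is exactly what completes the run $0,1,\ldots,2q-1$. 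Consequently, no partition of your $C$ can produce $d_{f}^{\perp}\geq 2q+1$ via Theorem~\ref{A}(b), since the upper bound $d_{f}^{\perp}\leq d$ already fails.

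The paper keeps $C$ \emph{identical} to that of Theorem~\ref{mainconv} (so $d\geq 2q+1$ and $\operatorname{rk}H_{C}=4q-4$) and only repartitions: $C_{0}$ is generated by $M^{(0)}(x)\cdots M^{(q-1)}(x)M^{(q+1)}(x)$ and $C_{1}$ by $M^{(q+2)}(x)\cdots M^{(2q-1)}(x)$. Then $\operatorname{rk}H_{C_{0}}=1+2(q-1)+1=2q$ (giving $V^{\perp}$ dimension $n-2q$) and $\operatorname{rk}H_{C_{1}}=2(q-2)=2q-4$ (the degree). The defining set of $C_{0}$ contains the run $0,1,\ldots,q+1$, so $d_{0}\geq q+3$; that of $C_{1}$ contains $q+2,\ldots,2q-1$, so $d_{1}\geq q-1$; hence $d_{0}+d_{1}\geq 2q+2\geq d$, and Theorem~\ref{A}(b) yields $d_{f}^{\perp}\geq 2q+1$. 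Your second attempted split (moving $\mathbb{C}_{q+1}$ into $C_{0}$) was heading in the right direction, but you must also retain $\mathbb{C}_{q-1}$ in both $C$ and $C_{0}$, and your count of cosets in $\{q+2,\ldots,2q-1\}$ should be $q-2$, not $q-3$.
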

\begin{proof}
Let $C$ be the BCH code generated by ${M}^{(0)}(x){M}^{(1)}(x)\cdots
{M}^{(q-2)}(x)$ ${M}^{(q-1)}(x) {M}^{(q+1)}(x)\cdots
{M}^{(2q-1)}(x)$, given in the proof of Theorem~\ref{mainconv}.
Suppose that $C_0$ is the BCH code generated by ${M}^{(0)}(x)
{M}^{(1)}(x)\cdots {M}^{(q-2)}(x)$ ${M}^{(q-1)}(x)$ ${M}^{(q+1)}(x)$
and assume that $C_1$ is the BCH code generated by
${M}^{(q+2)}(x)\cdots$ ${M}^{(2q-1)}(x)$. Proceeding similarly as in
the proof of Theorem~\ref{mainconv}, the result follows.
\end{proof}


\begin{theorem}\label{mainconvC}
Let $q\geq 4$ be a prime power and $n= q^{2} -1$. Then there exists
a convolutional code with parameters $(n, n-2[q+i], 2[q-2-i]; 1,
d_{free} \geq 2q+1)_{q}$, where $1\leq i\leq q - 3$.
\end{theorem}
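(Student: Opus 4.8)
The plan is to mimic the proof of Theorem~\ref{mainconv} almost verbatim, adjusting only the partition of the parity-check matrix so as to obtain the stated degree $2[q-2-i]$ and the stated free-distance bound $2q+1$. First I would let $C$ be the same BCH code generated by ${M}^{(0)}(x){M}^{(1)}(x)\cdots {M}^{(q-2)}(x){M}^{(q-1)}(x){M}^{(q+1)}(x)\cdots {M}^{(2q-1)}(x)$ used in Theorems~\ref{mainconv} and \ref{mainconvB}; by Lemma~\ref{good} together with the fact (established there) that ${\mathbb{C}}_{q-1}$ has two elements and is disjoint from all the listed cosets, $C$ has parameters $[n,n-4q+4,d\geq 2q+1]_q$ since its defining set contains the $2q-1$ consecutive integers $0,1,\ldots,q-2,q-1$ and also the $q-1$ consecutive integers $q+1,\ldots,2q-1$, so the BCH bound gives $d\geq 2q+1$ via the full run $0,\dots,2q-1$ (using that $q$ alone is the only missing integer between $0$ and $2q-1$, and the BCH bound is applied to the run of length $2q$ ending at $q-1$ has only $q$ elements — more carefully, the defining set contains $\{q+1,\dots,2q-1\}\cup\{0,\dots,q-1\}$; translating by $-(q+1)$ and using that $n\equiv -1$, the cyclic run $q+1,\dots,2q-1,\ (2q),\dots$ — I would simply reuse the distance claim $d\ge 2q+1$ exactly as asserted in Theorem~\ref{mainconv}).

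Next I would split the parity-check matrix $H_C$ of $C$ (obtained by $\mathbb{F}_q$-expansion of the $\alpha$-powers, each entry giving $2$ rows since $\operatorname{ord}_n(q)=2$) into $H_{C_0}$ and $H_{C_1}$ as follows. Let $C_0$ be the BCH code whose zeros correspond to the minimal polynomials ${M}^{(0)}(x){M}^{(1)}(x)\cdots {M}^{(q-2)}(x){M}^{(q-1)}(x){M}^{(q+1)}(x)\cdots{M}^{(q+i)}(x)$, i.e.\ we place the first $q$ rows (for exponents $0,\ldots,q-1$) plus the $i$ rows for exponents $q+1,\ldots,q+i$ into $H_0$; expanding gives $H_{C_0}$ with $2(q+i)$ rows. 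Since $C_0$'s defining set contains the consecutive run $0,1,\ldots,q-1$ (length $q$) and also the run $q+1,\ldots,q+i$, and the two runs together with the cyclic wrap give the consecutive block $q+1,q+2,\ldots,q+i$, $\ldots$, actually the longest consecutive run in the defining set read cyclically is $\{q+1,\dots,q+i\}\cup\{n+0,\dots\}=\{q+1,\dots,q+i\}$ prolonged by $0,\dots,q-1$ shifted by $n$, giving $q+i$ consecutive integers, hence $d_0\geq q+i+1$ by the BCH bound; and $\operatorname{rk}H_{C_0}=2(q+i)-1$ because $\deg g_{C_0}= 2(q+i)-1$ (the coset ${\mathbb{C}}_0$ contributes only one). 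Let $C_1$ be generated by ${M}^{(q+i+1)}(x){M}^{(q+i+2)}(x)\cdots {M}^{(2q-1)}(x)$, so $H_{C_1}$ (after expansion) has $2(q-2-i)$ rows, its defining set has the consecutive run $q+i+1,\ldots,2q-1$ of length $q-1-i$, hence $d_1\geq q-i$, and $\operatorname{rk}H_{C_1}=2(q-2-i)$ since all these cosets have two elements (none is ${\mathbb{C}}_0$ or ${\mathbb{C}}_{q+1}$).

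Then I would form $G(D)=\tilde H_{C_0}+\tilde H_{C_1}D$ with $\tilde H_{C_0}=H_{C_0}$ and $\tilde H_{C_1}$ obtained from $H_{C_1}$ by appending zero rows to reach $\kappa$ rows, where $\kappa=\operatorname{rk}H_{C_0}=2(q+i)-1$; note $\operatorname{rk}H_{C_1}=2(q-2-i)\leq\kappa$ for $1\leq i\leq q-3$, so the hypothesis of Theorem~\ref{A} holds and $G(D)$ is reduced and basic. The resulting unit-memory code $V$ has dimension $2(q+i)-1$ and degree equal to $\operatorname{rk}H_{C_1}=2(q-2-i)$ — here I would invoke exactly the degree bookkeeping used in Theorem~\ref{mainconv}, where the degree of $V$ came out to be the rank of the second block. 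By Theorem~\ref{A}(b), the dual $V^{\perp}$ has parameters $(n,\,n-2[q+i],\,2[q-2-i];\,1,\,d_f^{\perp})_q$ with $\min\{d_0+d_1,d\}\leq d_f^{\perp}\leq d$; since $d_0+d_1\geq (q+i+1)+(q-i)=2q+1$ and $d=d(C)\geq 2q+1$, we get $d_f^{\perp}\geq 2q+1$, which is the claimed free distance. The main obstacle I anticipate is getting the rank/degree accounting and the partition sizes exactly right so that the degree is precisely $2[q-2-i]$ and the dual dimension is precisely $n-2[q+i]$ — in particular verifying that $\operatorname{rk}H_{C_0}=2(q+i)-1$ (one row drops because of ${\mathbb{C}}_0$) and that this equals $\kappa$ and dominates $\operatorname{rk}H_{C_1}$; once the two consecutive runs $0,\ldots,q-1$ and $q+i+1,\ldots,2q-1$ are placed in the two blocks, the distance estimates $d_0+d_1\geq 2q+1$ and $d\geq 2q+1$ are immediate from the BCH bound, and everything else is the same routine as in Theorem~\ref{mainconv}.
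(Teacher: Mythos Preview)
Your overall strategy is exactly the one the paper uses, but your partition of the parity-check matrix is shifted by one and your rank counts are off, so the parameters you end up with do not match the statement.

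Concretely, with your choice $C_0=\langle M^{(0)}\cdots M^{(q-1)}M^{(q+1)}\cdots M^{(q+i)}\rangle$ you have forgotten that ${\mathbb C}_{q+1}$ is \emph{also} a singleton (Lemma~\ref{good}); hence $\deg g_{C_0}=1+2(q-1)+1+2(i-1)=2(q+i)-2$, not $2(q+i)-1$. Likewise, $C_1=\langle M^{(q+i+1)}\cdots M^{(2q-1)}\rangle$ involves $(2q-1)-(q+i+1)+1=q-1-i$ cosets, each of size $2$, so $\operatorname{rk}H_{C_1}=2(q-1-i)$, not $2(q-2-i)$. With these correct ranks your $V^{\perp}$ would have dimension $n-2(q+i)+2$ and degree $2(q-1-i)$, which is not what the theorem asserts (and even with your own claimed $\kappa=2(q+i)-1$, the dual dimension would be $n-2(q+i)+1$, still not $n-2(q+i)$).

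The fix is simply to move the split one step to the right: take $C_0=\langle M^{(0)}\cdots M^{(q-1)}M^{(q+1)}\cdots M^{(q+1+i)}\rangle$ and $C_1=\langle M^{(q+2+i)}\cdots M^{(2q-1)}\rangle$. Then the two singleton cosets ${\mathbb C}_0$ and ${\mathbb C}_{q+1}$ both sit inside $C_0$, giving $\operatorname{rk}H_{C_0}=1+2(q-1)+1+2i=2(q+i)$ and $\operatorname{rk}H_{C_1}=2(q-2-i)$, exactly as required. The distance bookkeeping then reads $d_0\geq q+i+3$ (the defining set of $C_0$ contains the full run $0,1,\ldots,q+1+i$ since $q\in{\mathbb C}_1$) and $d_1\geq q-i-1$, so $d_0+d_1\geq 2q+2\geq 2q+1$ and the rest of your argument goes through verbatim.
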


\begin{proof}
Let $C$ be the $[n, n - 4q + 4, d\geq 2q+1]_{q}$ BCH code generated
by $ {M}^{(0)}(x) {M}^{(1)}(x) \cdot \ldots \cdot
{M}^{(q-2)}(x){M}^{(q-1)}(x) {M}^{(q+1)}(x)\cdot \ldots \cdot
{M}^{(2q-1)}(x)$, with parity check matrix $H_{C}$ of rank $4q-4$.
Suppose that $C_0$ is the BCH code generated by ${M}^{(0)}(x)
{M}^{(1)}(x)\cdot \ldots \cdot
{M}^{(q-2)}(x){M}^{(q-1)}(x){M}^{(q+1)}(x)\cdot \ldots \cdot
{M}^{(q+1+i)}(x)$, where $1\leq i\leq q - 3$, with parity check
matrix $H_{C_0}$ as per Theorem~\ref{mainconv}. $C_0$ has parameters
$[n, n-2q-2i, d_0 \geq q+i+3]_{q}$ and $H_{C_0}$ has rank $2(q+i)$.
Let $C_1$ be the BCH code generated by ${M}^{(q+2+i)}(x)\cdot \ldots
\cdot {M}^{(2q-1)}(x)$, with parity check matrix $H_{C_1}$ as per
Theorem~\ref{mainconv}. The code $C_1$ has parameters ${[n,
n-2(q-2-i), d_1\geq q-i-1]}_{q}$ and $H_{C_1}$ has rank $2(q-2-i)$.
The convolutional code $V$ generated by $G(D)=\tilde H_{C_{0}}+
\tilde H_{C_{1}} D$ is a unit-memory code of dimension $2(q+i)$, and
degree ${\delta}_{V} = 2(q-2-i)$. The dual $V^{\perp}$ of $V$ has
dimension $n-2(q+i)$, degree $2(q-2-i)$ and free distance
$d_{f}^{\perp} \geq 2q +1$. Therefore there exists a convolutional
code with parameters $(n, n-2[q+i], 2[q-2-i]; 1, d_{f}^{\perp} \geq
2q+1)_{q}$.
\end{proof}

\begin{theorem}\label{mainconvD}
Assume that $q\geq 4$ is a prime power and $n= q^{2} -1$. Then there
exists an $(n, n-2q+1, 2i+1; 1, d_{free} \geq q+i+3)_{q}$
convolutional code, where $1\leq i \leq q-3$.
\end{theorem}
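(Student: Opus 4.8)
The plan is to mimic closely the construction used in Theorem~\ref{mainconv} and Theorem~\ref{mainconvC}, but now splitting the parity check matrix of the ambient BCH code so that the memory-zero block $C_0$ captures the full designed distance while the memory-one block $C_1$ is made as small as possible, thereby keeping the code dimension fixed at $n-2q+1$ while letting the degree grow linearly with $i$. Concretely, I would let $C$ be the same ambient BCH code $\langle M^{(0)}(x)M^{(1)}(x)\cdots M^{(q-2)}(x)M^{(q-1)}(x)M^{(q+1)}(x)\cdots M^{(2q-1)}(x)\rangle$ with parity check matrix $H_C$ of rank $4q-4$, as established in the proof of Theorem~\ref{mainconv}. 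By Lemma~\ref{good} all the relevant $q$-cosets ${\mathbb{C}}_1,\ldots,{\mathbb{C}}_{q-2},{\mathbb{C}}_{q+1},\ldots,{\mathbb{C}}_{2q-1}$ are disjoint and, together with ${\mathbb{C}}_{q-1}$, have the cardinalities needed to control the ranks of all submatrices below.

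The key step is the choice of the partition $H_C=\left[\begin{smallmatrix}H_0\\ H_1\end{smallmatrix}\right]$. I would take $C_0$ to be the BCH code whose defining set is the union of the cosets ${\mathbb{C}}_0,{\mathbb{C}}_1,\ldots,{\mathbb{C}}_{q-2},{\mathbb{C}}_{q-1},{\mathbb{C}}_{q+1},\ldots,{\mathbb{C}}_{q+1+i}$; its defining set contains the $q-1$ consecutive integers $0,1,\ldots,q-2$ as well as the consecutive run $q+1,q+2,\ldots,q+1+i$, and moreover — using Theorem~\ref{JJJ}-style reasoning or a direct check — the last elements of the two-element cosets ${\mathbb{C}}_{q+1},\ldots,{\mathbb{C}}_{q+1+i}$ are consecutive integers that, strung together appropriately, yield a designed distance of $q+i+3$ (this is exactly the $d_0\ge q+i+3$ claim appearing in Theorem~\ref{mainconvC}). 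Counting cosets: ${\mathbb{C}}_0$ contributes $1$ coordinate, ${\mathbb{C}}_1,\ldots,{\mathbb{C}}_{q-2}$ contribute $2(q-2)$, ${\mathbb{C}}_{q-1}$ contributes $2$, and ${\mathbb{C}}_{q+1}$ (cardinality $1$) plus ${\mathbb{C}}_{q+2},\ldots,{\mathbb{C}}_{q+1+i}$ (each cardinality $2$) contribute $1+2i$; so $H_{C_0}$ has rank $\kappa = 2q-1+2i$ — wait, I must instead arrange the split so that $\operatorname{rk}H_{C_0}=2q-1$ stays fixed. The correct allocation is therefore to keep $C_0$ with defining set exactly ${\mathbb{C}}_0,{\mathbb{C}}_1,\ldots,{\mathbb{C}}_{q-2},{\mathbb{C}}_{q-1}$ (rank $2q-1$, $d_0\ge q+2$), and let $C_1$ be the BCH code with defining set ${\mathbb{C}}_{q+1},{\mathbb{C}}_{q+2},\ldots,{\mathbb{C}}_{q+1+i}$; then $H_{C_1}$ has rank $1+2i$, so after padding $\tilde H_{C_1}$ to $2q-1$ rows we get a unit-memory code of dimension $2q-1$ and degree $\gamma=2i+1$, whose Euclidean dual has dimension $n-2q+1$ and degree $2i+1$.

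For the free-distance bound I would invoke Theorem~\ref{A}(b): $d_f^{\perp}\ge \min\{d_0+d_1,\,d\}$. Here $d_0\ge q+2$ (the consecutive run $0,\ldots,q-2$), and $d_1$ is the designed distance of the cyclic code $\{{\bf v}: {\bf v}\tilde H_{C_1}^t=0\}$, whose defining set is ${\mathbb{C}}_{q+1}\cup\cdots\cup{\mathbb{C}}_{q+1+i}$. By the same last-element-consecutiveness argument used in Theorem~\ref{JJJ} (or directly from Lemma~\ref{good}, where ${\mathbb{C}}_{q+2}=\{q+2,1+2q\}$, ${\mathbb{C}}_{q+3}=\{q+3,1+3q\}$, and so on, so the "high" elements $1+2q,1+3q,\ldots$ are spaced by $q$ while the "low" elements $q+2,q+3,\ldots,q+1+i$ are consecutive), the defining set of $C_1$ contains the consecutive integers $q+1,q+2,\ldots,q+1+i$, giving $d_1\ge i+2$. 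Hence $d_0+d_1\ge (q+2)+(i+2)=q+i+4>q+i+3$, and since the ambient $d\ge 2q+1\ge q+i+3$ whenever $i\le q-2$ (which holds for $i\le q-3$), we conclude $d_f^{\perp}\ge q+i+3$, as claimed. The main obstacle is the bookkeeping: one must verify that the BCH designed distances $d_0$ and $d_1$ are indeed as stated — in particular that the chosen defining sets contain no longer accidental consecutive runs that would force a larger ambient $d$ to dominate the $\min$, and that the rank counts of $H_{C_0}$ and $H_{C_1}$ are exactly $2q-1$ and $2i+1$ — but all of this follows mechanically from Lemma~\ref{good} together with $\operatorname{ord}_n(q)=2$, exactly as in the proofs of Theorems~\ref{mainconv}--\ref{mainconvC}.
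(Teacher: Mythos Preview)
Your final construction --- taking $C_0$ with defining set ${\mathbb{C}}_0\cup\cdots\cup{\mathbb{C}}_{q-1}$ (rank $2q-1$, $d_0\ge q+2$) and $C_1$ with defining set ${\mathbb{C}}_{q+1}\cup\cdots\cup{\mathbb{C}}_{q+1+i}$ (rank $2i+1$, $d_1\ge i+2$) --- is exactly the paper's construction, and the resulting convolutional parameters are correct.

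There is, however, a bookkeeping slip you should repair. After your self-correction, the matrices $H_{C_0}$ and $H_{C_1}$ together have rank $(2q-1)+(2i+1)=2q+2i$, not $4q-4$, so they do \emph{not} partition the parity-check matrix of the large BCH code (defining set reaching ${\mathbb{C}}_{2q-1}$) that you declared as the ambient $C$ at the outset. Theorem~\ref{A} requires $H_0,\ldots,H_\mu$ to partition $H$, so the ambient code must be the smaller one generated by $M^{(0)}(x)\cdots M^{(q-1)}(x)M^{(q+1)}(x)\cdots M^{(q+1+i)}(x)$; this is precisely the $C$ the paper uses, with parameters $[n,\,n-2q-2i,\,d\ge q+i+3]_q$ (its defining set contains the consecutive run $0,1,\ldots,q+1+i$, since $q\in{\mathbb{C}}_1$ and $q-1\in{\mathbb{C}}_{q-1}$). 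Your assertion ``the ambient $d\ge 2q+1$'' is therefore wrong --- but harmlessly so: with the correct $d\ge q+i+3$ one still obtains $d_f^{\perp}\ge\min\{d_0+d_1,\,d\}\ge\min\{q+i+4,\,q+i+3\}=q+i+3$, which is what is claimed. (Note also that your stated justification ``the consecutive run $0,\ldots,q-2$'' for $d_0\ge q+2$ undercounts; the actual run is $0,\ldots,q$.)
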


\begin{proof}
Let $C$ be the BCH code generated by ${M}^{(0)}(x) {M}^{(1)}(x)
\cdot \ldots \cdot {M}^{(q-2)}(x)$ ${M}^{(q-1)}(x)
{M}^{(q+1)}(x){M}^{(q+2)}(x)\cdot \ldots \cdot {M}^{(q+1+i)}(x)$,
where $1\leq i \leq q-3$ with parity check matrix $H_{C}$ of rank
$2q+2i$; $C$ has parameters $[n, n - 2q-2i, d\geq q+3+i]_{q}$. Let
$C_0$ be the BCH code generated by ${M}^{(0)}(x) {M}^{(1)}(x)\cdot
\ldots \cdot {M}^{(q-2)}(x){M}^{(q-1)}(x)$. $C_0$ has parameters
$[n, n-2q +1, d_0 \geq q+2]_{q}$; $H_{C_0}$ has rank $2q-1$.
Finally, let $C_1$ be the BCH code generated by
${M}^{(q+1)}(x){M}^{(q+2)}(x)\cdot \ldots \cdot {M}^{(q+1+i)}(x)$.
$C_1$ has parameters ${[n, n-2i-1, d_1 \geq i+2]}_{q}$; $H_{C_1}$
has rank $2i+1$. Proceedings similarly as in the proof of
Theorem~\ref{mainconv}, the result follows.
\end{proof}


\begin{theorem}\label{mainconvE}
Assume that $q\geq 4$ is a prime power and $n= q^{2} -1$. Then there
exists a convolutional codes with parameters $(n, n-2q+1, 1; 1,
d_{free} \geq q+2)_{q}$.
\end{theorem}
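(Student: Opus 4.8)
The plan is to follow the blueprint of the proof of Theorem~\ref{mainconv}, but to choose the auxiliary BCH codes so that the resulting unit-memory code has degree exactly $1$. First I would let $C_{0}$ be the BCH code generated by ${M}^{(0)}(x){M}^{(1)}(x)\cdots {M}^{(q-1)}(x)$. Its defining set is $\mathbb{C}_{0}\cup\mathbb{C}_{1}\cup\cdots\cup\mathbb{C}_{q-1}$; by Lemma~\ref{good}, together with the fact (used already in the proof of Theorem~\ref{mainconv}) that $\mathbb{C}_{q-1}=\{q-1,(q-1)q\}$ has two elements and is disjoint from $\mathbb{C}_{0},\ldots,\mathbb{C}_{q-2}$, this defining set is a disjoint union of $2q-1$ integers, so $\dim C_{0}=n-2q+1$. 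Since $q\in\mathbb{C}_{1}$, the defining set of $C_{0}$ contains the $q+1$ consecutive integers $0,1,\ldots,q$, and hence $d_{0}\geq q+2$ by the BCH bound; expanding the corresponding parity-check matrix over an $\mathbb{F}_{q}$-basis of $\mathbb{F}_{q^{2}}$ (recall ${\operatorname{ord}}_{n}(q)=2$) gives a matrix $H_{C_{0}}$ of rank $2q-1$.

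Next I would take $C_{1}$ to be the cyclic code generated by ${M}^{(q+1)}(x)$. Because $\mathbb{C}_{q+1}=\{q+1\}$ is a singleton, $\alpha^{q+1}\in\mathbb{F}_{q}$ and ${M}^{(q+1)}(x)=x-\alpha^{q+1}$ has degree $1$; after expansion the parity-check matrix $H_{C_{1}}$ has rank $1$, and $C_{1}$ has parameters $[n,n-1,d_{1}\geq 2]_{q}$. I would also set $C$ to be the BCH code whose parity-check matrix is obtained by stacking $H_{C_{0}}$ on top of $H_{C_{1}}$, i.e.\ the code generated by ${M}^{(0)}(x)\cdots {M}^{(q-1)}(x){M}^{(q+1)}(x)$; its defining set contains the $q+2$ consecutive integers $0,1,\ldots,q+1$, so $d\geq q+3$ by the BCH bound, and since the defining sets of $C_{0}$ and $C_{1}$ are disjoint, $\operatorname{rk}H=2q$.

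Finally I would apply Theorem~\ref{A} with $\mu=1$, $H_{0}=H_{C_{0}}$ and $H_{1}=H_{C_{1}}$. Here $\kappa=\operatorname{rk}H_{C_{0}}=2q-1\geq 1=\operatorname{rk}H_{C_{1}}$, so part~(a) gives that $G(D)=\tilde H_{C_{0}}+\tilde H_{C_{1}}D$ is a reduced basic generator matrix; exactly as in the proof of Theorem~\ref{mainconv}, the convolutional code $V$ generated by $G(D)$ is a unit-memory code of dimension $\kappa=2q-1$ and degree $\operatorname{rk}H_{C_{1}}=1$. Consequently $V^{\perp}$ has dimension $n-2q+1$ and degree $1$, and by part~(b),
\[
d_{f}^{\perp}\ \geq\ \min\{\,d_{0}+d_{1},\,d\,\}\ \geq\ \min\{\,(q+2)+2,\ q+3\,\}\ =\ q+3\ \geq\ q+2 ,
\]
so $V^{\perp}$ has parameters $(n,n-2q+1,1;1,d_{free}\geq q+2)_{q}$, as required. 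All of these steps are routine once the coset structure is in place; the only point that needs care is checking that the two defining sets are unions of pairwise disjoint $q$-cosets of the stated cardinalities, so that $\operatorname{rk}H_{C_{0}}=2q-1$ and $\operatorname{rk}H_{C_{1}}=1$ come out exactly right and $V^{\perp}$ lands on dimension $n-2q+1$ with degree $1$; this rests only on Lemma~\ref{good} and on the elementary fact that $\alpha^{q+1}\in\mathbb{F}_{q}$.
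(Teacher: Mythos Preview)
Your proof is correct and follows exactly the blueprint the paper intends; the paper's own proof is just the one line ``Similar to that of Theorem~\ref{mainconv},'' and your choice of $C_{0}=\langle M^{(0)}\cdots M^{(q-1)}\rangle$, $C_{1}=\langle M^{(q+1)}\rangle$, $C=\langle M^{(0)}\cdots M^{(q-1)}M^{(q+1)}\rangle$ is precisely the natural instantiation (indeed it is the $i=0$ extension of the construction in Theorem~\ref{mainconvD}). Note that your argument actually yields the slightly stronger bound $d_{free}\geq q+3$, which of course implies the stated $d_{free}\geq q+2$.
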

\begin{proof}
Similar to that of Theorem~\ref{mainconv}.
\end{proof}

\begin{remark}
It is interesting to note that by applying the same construction
method shown in this section, several new families of convolutional
codes with other parameters can be constructed. Moreover, the
classical codes constructed in the proofs of Theorems~\ref{good3}
and \ref{es}, can be utilized similarly to derive novel families of
convolutional codes as well. These ideas can be explored in future
works.
\end{remark}

\section{Code Comparisons}\label{sec6}

In this section we compare the parameters of the new CSS codes with
the parameters of the best CSS codes shown in
\cite{Salah:2006,Salah:2007} and we also exhibit some new
(classical) convolutional codes constructed here.

The parameters ${[[n^{'}, k^{'}, d^{'}]]}_{q}={[[n,
n-2m(\lceil(\delta-1)(1-1/q)\rceil), d\geq\delta]]}_{q}$ displayed
in Tables~\ref{table1} and \ref{table2} are the parameters of the
codes shown in \cite{Salah:2006,Salah:2007}. In Table~\ref{table1},
the parameters ${[[n, k, d \geq c]]}_{q}$ assume the values
${[[q^{2}-1, q^{2} - 4c + 5, d \geq c ]]}_{q}$, where $2 \leq c \leq
q$ and $q\geq 3$ is a prime power.

As can be seen in Tables~\ref{table1} and \ref{table2}, the new CSS
codes have parameters better than the ones available in
\cite{Salah:2006,Salah:2007}. More precisely, fixing $n$ and $d$,
the new codes achieve greater values of the number of qudits than
the codes shown in \cite{Salah:2006,Salah:2007}.

\begin{table}[!pt]
\begin{center}
\caption{Code Comparisons \label{table1}}
\begin{tabular}{|c |c |}
\hline New CSS codes & Best CSS Codes in \cite{Salah:2006,Salah:2007}\\
\hline ${[[n, k, d \geq c]]}_{q}$ & ${[[n^{'}, k^{'}, d^{'}]]}_{q}$\\
\hline
\hline ${[[24, 18, d \geq 3]]}_{5}$ & ${[[24, 16, d^{'}\geq 3]]}_{4}$\\
\hline ${[[24, 10, d \geq 5]]}_{5}$ & ---\\
\hline ${[[48, 42, d \geq 3]]}_{7}$ & ${[[48, 40, d^{'}\geq 3]]}_{7}$\\
\hline ${[[48, 38, d \geq 4]]}_{7}$ & ${[[48, 36, d^{'}\geq 4]]}_{7}$\\
\hline ${[[48, 34, d \geq 5]]}_{7}$ & ${[[48, 32, d^{'}\geq 5]]}_{7}$\\
\hline ${[[48, 30, d \geq 6]]}_{7}$ & ${[[48, 28, d^{'}\geq 6]]}_{7}$\\
\hline ${[[48, 26, d \geq 7]]}_{7}$ & ---\\
\hline ${[[63, 57, d \geq 3]]}_{8}$ & ${[[63, 55, d^{'}\geq 3]]}_{8}$\\
\hline ${[[63, 53, d \geq 4]]}_{8}$ & ${[[63, 51, d^{'}\geq 4]]}_{8}$\\
\hline ${[[63, 49, d \geq 5]]}_{8}$ & ${[[63, 47, d^{'}\geq 5]]}_{8}$\\
\hline ${[[63, 45, d \geq 6]]}_{8}$ & ${[[63, 43, d^{'}\geq 6]]}_{8}$\\
\hline ${[[63, 41, d \geq 7]]}_{8}$ & ${[[63, 39, d^{'}\geq 7]]}_{8}$\\
\hline ${[[80, 54, d \geq 8]]}_{9}$ & ${[[80, 52, d^{'}\geq 8]]}_{9}$\\
\hline ${[[80, 50, d \geq 9]]}_{9}$ & ---\\
\hline ${[[120, 114, d \geq 3]]}_{11}$ & ${[[120, 112, d^{'}\geq 3]]}_{11}$\\
\hline ${[[120, 106, d \geq 5]]}_{11}$ & ${[[120, 104, d^{'}\geq 5]]}_{11}$\\
\hline ${[[120, 98, d \geq 7]]}_{11}$ & ${[[120, 96, d^{'}\geq 7]]}_{11}$\\
\hline ${[[120, 90, d \geq 9]]}_{11}$ & ${[[120, 88, d^{'}\geq 9]]}_{11}$\\
\hline ${[[120, 82, d \geq 11]]}_{11}$ & ---\\
\hline ${[[168, 162, d \geq 3]]}_{13}$ & ${[[168, 160, d^{'}\geq 3]]}_{13}$\\
\hline ${[[168, 154, d \geq 5]]}_{13}$ & ${[[168, 152, d^{'}\geq 5]]}_{13}$\\
\hline ${[[168, 146, d \geq 7]]}_{13}$ & ${[[168, 144, d^{'}\geq 7]]}_{13}$\\
\hline ${[[168, 138, d \geq 9]]}_{13}$ & ${[[168, 136, d^{'}\geq 9]]}_{13}$\\
\hline ${[[168, 130, d \geq 11]]}_{13}$ & ${[[168, 128, d^{'}\geq 11]]}_{13}$\\
\hline ${[[168, 122, d \geq 13]]}_{13}$ & ---\\
\hline
\end{tabular}
\end{center}
\end{table}

\begin{table}[!hpt]
\begin{center}
\caption{Code Comparisons
\label{table2}}
\begin{tabular}{|c | c|}
\hline New CSS codes & CSS Codes in \cite{Salah:2006,Salah:2007}\\
\hline ${[[n, n - 2m(c-2) -m/2 -1, d \geq c]]}_{q}$ & ${[[n^{'}, k^{'}, d^{'}]]}_{q}$\\
\hline ${[[15, 9, d \geq 3]]}_{4}$ & ${[[15, 7, d^{'} \geq 3]]}_{4}$\\
\hline ${[[15, 5, d \geq 4]]}_{4}$ & ${[[15, 3, d^{'}\geq 4]]}_{4}$\\
\hline ${[[24, 18, d \geq 3]]}_{5}$ & ${[[24, 16, d^{'} \geq 3]]}_{7}$\\
\hline ${[[24, 14, d \geq 4]]}_{5}$ & ${[[24, 12, d^{'} \geq 4]]}_{7}$\\
\hline ${[[24, 10, d \geq 5]]}_{5}$ & ${[[24, 8, d^{'} \geq 5]]}_{7}$\\
\hline ${[[63, 57, d \geq 3]]}_{8}$ & ${[[63, 55, d^{'} \geq 3]]}_{8}$\\
\hline ${[[63, 53, d \geq 4]]}_{8}$ & ${[[63, 51, d^{'} \geq 4]]}_{8}$\\
\hline ${[[63, 49, d \geq 5]]}_{8}$ & ${[[63, 47, d^{'} \geq 5]]}_{8}$\\
\hline ${[[63, 45, d \geq 6]]}_{8}$ & ${[[63, 43, d^{'} \geq 6]]}_{8}$\\
\hline ${[[63, 41, d \geq 7]]}_{8}$ & ${[[63, 39, d^{'} \geq 7]]}_{8}$\\
\hline ${[[63, 37, d \geq 8]]}_{8}$ & ${[[63, 35, d^{'} \geq 8]]}_{8}$\\
\hline ${[[255, 244, d \geq 3]]}_{4}$ & ${[[255, 239, d^{'} \geq 3]]}_{4}$\\
\hline ${[[255, 236, d \geq 4]]}_{4}$ & ${[[255, 231, d^{'} \geq 4]]}_{4}$\\
\hline ${[[624, 613, d \geq 3]]}_{5}$ & ${[[624, 608, d^{'} \geq 3]]}_{5}$\\
\hline ${[[624, 605, d \geq 4]]}_{5}$ & ${[[624, 600, d^{'} \geq 4]]}_{5}$\\
\hline ${[[624, 597, d \geq 5]]}_{5}$ & ${[[624, 592, d^{'} \geq 5]]}_{5}$\\
\hline
\hline ${[[n, n - m(2c - 3)-1, d \geq c]]}_{q}$, $2\leq c\leq q$ & ${[[n^{'}, k^{'}, d^{'}]]}_{q}$\\
\hline
\hline ${[[124, 102, d \geq 5]]}_{5}$ & ${[[124, 100, d^{'} \geq 5]]}_{5}$\\
\hline ${[[342, 320, d \geq 5]]}_{7}$ & ${[[342, 318, d^{'} \geq 5]]}_{7}$\\
\hline ${[[342, 314, d \geq 6]]}_{7}$ & ${[[342, 312, d^{'} \geq 6]]}_{7}$\\
\hline ${[[342, 308, d \geq 7]]}_{7}$ & ${[[342, 306, d^{'} \geq 7]]}_{7}$\\
\hline ${[[255, 242, d \geq 3]]}_{4}$ & ${[[255, 239, d^{'} \geq 3]]}_{4}$\\
\hline ${[[255, 234, d \geq 4]]}_{4}$ & ${[[255, 231, d^{'} \geq 4]]}_{4}$\\
\hline ${[[624, 611, d \geq 3]]}_{5}$ & ${[[624, 608, d^{'} \geq 3]]}_{5}$\\
\hline ${[[624, 603, d \geq 4]]}_{5}$ & ${[[624, 600, d^{'} \geq 4]]}_{5}$\\
\hline ${[[624, 595, d \geq 5]]}_{5}$ & ${[[624, 592, d^{'} \geq 5]]}_{5}$\\
\hline
\end{tabular}
\end{center}
\end{table}

Now, we address the comparison of the new convolutional codes with
the ones available in literature. The new convolutional codes
constructed here have great free distance. Note that the (classical)
convolutional codes constructed in
\cite{Aly:2007,LaGuardia:2012,LaGuardia:2013I} do not attain the
free distance of the codes constructed in the present paper.
Additionally, we did not have seen in literature convolutional codes
(having corresponding $n$ and $k$) with minimum distances as great
as the ones presented here. Because of this fact, it is difficult to
compare the new code parameters with the ones available in
literature. Therefore, we only exhibit, in Table~\ref{table3}, the
parameters of some convolutional codes constructed here.
\begin{table}[!hpt]
\begin{center}
\caption{New codes\label{table3}}
\begin{tabular}{|c |}
\hline  New convolutional codes\\
\hline $(n, n-2q+1, 2q-3; 1, d_{free}\geq 2q+1{)}_{q}$,\\ $n=q^{2}-1$, $q \geq 4$\\
\hline ${(15, 8, 5; 1, d_{free}\geq 9)}_{4}$\\
\hline ${(24, 15, 7; 1, d_{free}\geq 11)}_{5}$\\
\hline ${(48, 35, 11; 1, d_{free}\geq 15)}_{7}$\\
\hline ${(63, 48, 13; 1, d_{free}\geq 17)}_{8}$\\
\hline ${(80, 63, 15; 1, d_{free}\geq 19)}_{9}$\\
\hline ${(120, 99, 19; 1, d_{free}\geq 23)}_{11}$\\
\hline ${(168, 143, 23; 1, d_{free}\geq 27)}_{13}$\\
\hline ${(255, 224, 29; 1, d_{free}\geq 33)}_{16}$\\
\hline \hline $(n, n-2q, 2q - 4; 1, d_{free} \geq 2q
+1)_{q}$,\\ $n=q^{2}-1$, $q \geq 4$\\
\hline
\hline ${(15, 7, 4; 1, d_{free}\geq 9)}_{4}$\\
\hline ${(24, 14, 6; 1, d_{free}\geq 11)}_{5}$\\
\hline ${(120, 98, 18; 1, d_{free}\geq 23)}_{11}$\\
\hline ${(168, 142, 22; 1, d_{free}\geq 27)}_{13}$\\
\hline ${(255, 223, 28; 1, d_{free}\geq 33)}_{16}$\\
\hline \hline $(n, n-2(q+i), 2(q-2-i); 1, d_{free} \geq
2q+1)_{q}$,\\
$1\leq i\leq q - 3$,$n=q^{2}-1$, $q \geq 4$\\
\hline
\hline ${(15, 5, 2; 1, d_{free}\geq 9)}_{4}$\\
\hline ${(24, 12, 4; 1, d_{free}\geq 11)}_{5}$\\
\hline ${(24, 10, 2; 1, d_{free}\geq 11)}_{5}$\\
\hline ${(48, 32, 8; 1, d_{free}\geq 15)}_{7}$\\
\hline ${(48, 30, 6; 1, d_{free}\geq 15)}_{7}$\\
\hline ${(48, 28, 4; 1, d_{free}\geq 15)}_{7}$\\
\hline ${(48, 26, 2; 1, d_{free}\geq 15)}_{7}$\\
\hline ${(255, 221, 26; 1, d_{free}\geq 33)}_{16}$\\
\hline ${(255, 219, 24; 1, d_{free}\geq 33)}_{16}$\\
\hline ${(255, 213, 18; 1, d_{free}\geq 33)}_{16}$\\
\hline ${(255, 209, 14; 1, d_{free}\geq 33)}_{16}$\\
\hline ${(255, 203, 8; 1, d_{free}\geq 33)}_{16}$\\
\hline ${(255, 197, 2; 1, d_{free}\geq 33)}_{16}$\\
\hline
\hline $(n, n-2q+1, 2i+1; 1, d_{free} \geq q+i+3)_{q}$,\\
$1\leq i\leq q - 3$, $q\geq 4$ and $n= q^{2} -1$\\
\hline
\hline ${(15, 8, 3; 1, d_{free}\geq 8)}_{4}$\\
\hline ${(24, 15, 3; 1, d_{free}\geq 9)}_{5}$\\
\hline ${(24, 15, 5; 1, d_{free}\geq 10)}_{5}$\\
\hline ${(48, 35, 3; 1, d_{free}\geq 11)}_{7}$\\
\hline ${(48, 35, 5; 1, d_{free}\geq 12)}_{7}$\\
\hline ${(48, 35, 7; 1, d_{free}\geq 13)}_{7}$\\
\hline ${(48, 35, 9; 1, d_{free}\geq 14)}_{7}$\\
\hline
\end{tabular}
\end{center}
\end{table}

\section{Summary}\label{sec7}
In this paper we have shown new properties on $q$-cosets modulo $n =
q^{m} - 1$, where $q\geq 3$ is a prime power. Since the dimension
and minimum distance of BCH codes are not known, these properties
are important because they can be utilized to compute the dimension
and bounds for the designed distance of some families of cyclic
codes. Applying some of these properties, we have constructed
classical cyclic codes which were utilized in the algebraic
construction of new families of quantum codes by means of the CSS
construction. Additionally, new families of convolutional codes have
also been presented in this paper. These new quantum CSS codes have
parameters better than the ones available in the literature. The new
convolutional codes have free distance greater than the ones
available in the literature.

\section*{Acknowledgment}
I would like to thank the anonymous referee for his/her valuable
comments and suggestions that improve significantly the quality and
the presentation of this paper. This research has been partially
supported by the Brazilian Agencies CAPES and CNPq.

\small


\begin{thebibliography}{10}

\bibitem{Aly:2007}
S. A. Aly, M. Grassl, A. Klappenecker, M. R\"otteler, P. K.
Sarvepalli.
\newblock Quantum convolutional BCH codes.
\newblock In {\em Proc. Canadian Workshop on Information Theory
(CWIT)}, pp.180--183, 2007.

\bibitem{Salah:2006}
S. A. Aly, A. Klappenecker, and P. K. Sarvepalli.
\newblock Primitive quantum BCH codes over finite fields.
\newblock In {\em Proc. Int. Symp. Inform. Theory (ISIT)}, pp.1114--1118, 2006.

\bibitem{Salah:2007}
S. A. Aly, A. Klappenecker, and P. K. Sarvepalli.
\newblock On quantum and classical BCH codes.
\newblock {\em IEEE Trans. Inform. Theory}, 53(3):1183--1188, 2007.

\bibitem{Klapp:2007}
S. A. Aly, A. Klappenecker, P. K. Sarvepalli.
\newblock Quantum convolutional codes derived from Reed-Solomon and Reed-Muller codes.
\newblock e-print arXiv:quant-ph/0701037, 2007.

\bibitem{Ashikmin:2001}
A. Ashikhmin and E. Knill.
\newblock Non-binary quantum stabilizer codes.
\newblock {\em IEEE Trans. Inform. Theory}, 47(7):3065--3072, 2001.

\bibitem{Calderbank:1998}
A. R. Calderbank, E.~M. Rains, P. W. Shor, and N. J. A. Sloane.
\newblock Quantum error correction via codes over $GF(4)$.
\newblock {\em IEEE Trans. Inform. Theory}, 44(4):1369--1387, 1998.

\bibitem{Forney:1970}
G. D. Forney Jr.
\newblock Convolutional codes I: algebraic structure.
\newblock {\em IEEE Trans. Inform. Theory}, 16(6):720--738, November 1970.

\bibitem{Gluesing:2006}
H. Gluesing-Luerssen, J. Rosenthal and R. Smarandache.
\newblock Strongly MDS convolutional codes.
\newblock {\em IEEE Trans. Inform. Theory}, 52:584--598, 2006.


\bibitem{Schmale:2006}
H. Gluesing-Luerssen, W. Schmale.
\newblock Distance bounds for convolutional codes and some optimal codes.
\newblock e-print arXiv:math/0305135, 2003.


\bibitem{Luerssen:2008}
H. Gluesing-Luerssen and F-L Tsang.
\newblock A matrix ring description for cyclic convolutional codes.
\newblock {\em Advances in Math. Communications}, 2(1):55--81, 2008.

\bibitem{Johannesson:1999}
R. Johannesson and K. S. Zigangirov.
\newblock {\em Fundamentals of Convolutional Coding}.
\newblock Digital and Mobile Communication, Wiley-IEEE Press, 1999.


\bibitem{Ketkar:2006}
A. Ketkar, A. Klappenecker, S. Kumar, and P. K. Sarvepalli.
\newblock Nonbinary stabilizer codes over finite fields.
\newblock {\em IEEE Trans. Inform. Theory}, 52(11):4892--4914, 2006.

\bibitem{LaGuardia:2009}
G. G. La Guardia.
\newblock Constructions of new families of nonbinary quantum codes.
\newblock {\em Phys. Rev. A}, 80(4):042331(1--11), 2009.


\bibitem{LaGuardia:2010}
G. G. La Guardia and Reginaldo Palazzo Jr..
\newblock Constructions of new families of nonbinary CSS codes.
\newblock {\em Discrete Math.}, 310:2935--2945, 2010.

\bibitem{LaGuardia:2011}
G. G. La Guardia.
\newblock New quantum MDS codes.
\newblock {\em IEEE Trans. Inform. Theory}, 57(8):5551--5554, 2011.

\bibitem{LaGuardia:2012}
G. G. La Guardia.
\newblock On nonbinary quantum convolutional BCH codes.
\newblock {\em Quantum Inform. Computation}, 12(9-10):0820--0842, 2012.

\bibitem{LaGuardia:2013I}
G. G. La Guardia.
\newblock Nonbinary convolutional codes derived from group character codes.
\newblock {\em Discrete Math.}, 313:2730--2736, 2013.

\bibitem{LaGuardia:2013II}
G. G. La Guardia.
\newblock On classical and quantum MDS-convolutional BCH codes.
\newblock {\em IEEE Trans. Inform. Theory}, 60(1):304--312, 2014.


\bibitem{LaGuardia:2014A}
G. G. La Guardia.
\newblock On negacycli MDS-convolutional codes.
\newblock {\em Linear Alg. Applications}, 448:85--96, 2014.


\bibitem{LaGuardia:2014}
G. G. La Guardia.
\newblock On the construction of nonbinary quantum BCH codes.
\newblock {\em IEEE Trans. Inform. Theory}, 60(3):1528--1535, 2014.

\bibitem{Lee:1976}
L. N. Lee.
\newblock Short unit-memory byte-oriented binary convolutional codes having maximum free distance.
\newblock {\em IEEE Trans. Inform. Theory}, 22:349--352, 1976.

\bibitem{Lidl:1997}
R. Lidl and H. Niederreiter.
\newblock {\em Finite Fields}.
\newblock Cambridge Univ.Press, 1997.

\bibitem{Macwilliams:1977}
F. J. MacWilliams and N. J. A. Sloane.
\newblock {\em The Theory of Error-Correcting Codes}.
\newblock North-Holland, 1977.

\bibitem{Mandelbaum:1980}
D. M. Mandelbaum.
\newblock Two applications of cyclotomic cosets to certain BCH codes.
\newblock {\em IEEE Trans. Inform. Theory}, 26(6):737--738, 1980.

\bibitem{Nielsen:2000}
M. A. Nielsen and I. L. Chuang.
\newblock {\em Quantum Computation and Quantum Information}.
\newblock Cambridge University Press, 2000.

\bibitem{Piret:1988}
Ph. Piret.
\newblock {\em Convolutional Codes: An Algebraic Approach.}
\newblock Cambridge, Massachusetts: The MIT Press, 1988.


\bibitem{Rosenthal:1999}
J. Rosenthal and R. Smarandache.
\newblock Maximum distance separable convolutional codes.
\newblock {\em Applicable Algebra in Eng. Comm. Comput.}, 10:15--32, 1998.

\bibitem{York:1999}
J. Rosenthal and E. V. York.
\newblock BCH convolutional codes.
\newblock {\em IEEE Trans. Inform. Theory}, 45(6):1833--1844, 1999.

\bibitem{Rosenthal:2001}
R. Smarandache, H. G.-Luerssen, J. Rosenthal.
\newblock Constructions of MDS-convolutional codes.
\newblock {\em IEEE Trans. Inform. Theory}, 47(5):2045--2049, 2001.


\bibitem{Sharma:2004}
A. Sharma, G. K. Bakshi, V. C. Dumir, and M. Raka.
\newblock Cyclotomic number and primitive idempotents in the ring $GF(q)/(x^{p^{n}} - 1)$.
\newblock {\em Finite Fields and Their Applications}, 10:653--673, 2004.


\bibitem{Steane:1999}
A. M. Steane.
\newblock Enlargement of Calderbank-Shor-Steane quantum codes.
\newblock {\em IEEE Trans. Inform. Theory}, 45(7):2492--2495, 1999.


\bibitem{Xi:2004}
L. Xiaoyan.
\newblock Quantum cyclic and constacyclic codes.
\newblock {\em IEEE Trans. Inform. Theory}, 50(3):547--549, 2004.

\bibitem{Yue:2000}
D.-W. Yue and G.-Z Feng.
\newblock Minimum cyclotomic coset representatives and their applications to BCH codes and Goppa codes.
\newblock {\em IEEE Trans. Inform. Theory}, 46(7):2625--2628, 2000.

\bibitem{YueHu:1996}
D.-W. Yue and Z.-M. Hu.
\newblock On the dimension and minimum distance of BCH codes over $GF(q)$.
\newblock {\em Chin. J. Electron.}, 18:263--269, 1996.

\end{thebibliography}
\end{document}